\documentclass{article}

\usepackage{arxiv}

\usepackage[utf8]{inputenc} 
\usepackage[T1]{fontenc}    
\usepackage{hyperref}       
\usepackage{url}            
\usepackage{booktabs}       
\usepackage{amsfonts}       
\usepackage{nicefrac}       
\usepackage{microtype}      
\usepackage{lipsum}         
\usepackage{graphicx}
\usepackage{natbib}
\usepackage{doi}
\usepackage{amsmath,amssymb,amsbsy,amsthm,mathtools}
\usepackage{bm}

\usepackage{graphicx,float,subcaption}
\usepackage{booktabs,multirow,array,tabularx}
\usepackage{longtable}

\usepackage{algorithm,algpseudocode}

\usepackage{enumitem}

\usepackage{xcolor}

\newtheorem{theorem}{Theorem}[section]
\newtheorem{lemma}[theorem]{Lemma}
\newtheorem{corollary}[theorem]{Corollary}
\newtheorem{proposition}[theorem]{Proposition}
\newtheorem{definition}[theorem]{Definition}

\newtheorem{remark}[theorem]{Remark}

\newtheorem{conjecture}[theorem]{Conjecture}

\newcommand{\R}{\mathbb{R}}
\newcommand{\E}{\mathbb{E}}
\newcommand{\Prob}{\mathbb{P}}
\newcommand{\Var}{\mathrm{Var}}
\newcommand{\Cov}{\mathrm{Cov}}
\newcommand{\tr}{\mathrm{tr}}
\newcommand{\diag}{\mathrm{diag}}
\newcommand{\rank}{\mathrm{rank}}
\newcommand{\sgn}{\mathrm{sgn}}
\newcommand{\op}{\mathrm{op}}
\newcommand{\F}{\mathrm{F}}
\newcommand{\Sn}{\widehat{\bm{S}}_n}
\newcommand{\bSigma}{\bm{\Sigma}}
\newcommand{\bGamma}{\bm{\Gamma}}
\newcommand{\bLambda}{\bm{\Lambda}}
\newcommand{\bX}{\bm{X}}

\newcommand{\bZ}{\bm{Z}}

\newcommand{\be}{\bm{e}}
\newcommand{\bE}{\bm{E}}
\newcommand{\bI}{\bm{I}}
\newcommand{\bone}{\bm{1}}
\newcommand{\bzero}{\bm{0}}

\newcommand{\bgamma}{\bm{\gamma}}

\newcommand{\normF}[1]{\left\|#1\right\|_{\F}}
\newcommand{\normop}[1]{\left\|#1\right\|_{\op}}

\DeclareMathOperator*{\esssup}{ess\,sup}

\newcommand{\bA}{\bm{A}}
\newcommand{\bB}{\bm{B}}
\newcommand{\bC}{\bm{C}}
\newcommand{\bG}{\bm{G}}
\newcommand{\bR}{\bm{R}}
\newcommand{\bS}{\bm{S}}
\newcommand{\bT}{\bm{T}}

\newcommand{\bx}{\bm{x}}
\newcommand{\bu}{\bm{u}}

\newcommand{\bw}{\bm{w}}
\newcommand{\by}{\bm{y}}
\newcommand{\bz}{\bm{z}}
\newcommand{\bq}{\bm{q}}

\newcommand{\CC}{\mathbb{C}}
\newcommand{\supp}{\operatorname{supp}}
\newcommand{\ind}{\mathbf{1}}

\title{MENS: Nonlinear Shrinkage Estimation in Nonparanormal Models for Financial Applications}

\newif\ifuniqueAffiliation
\uniqueAffiliationtrue

\ifuniqueAffiliation 
\author{
  {Hamid Karamikabir} \\
    Department of Statistics, \\Faculty of Intelligent Systems Engineering and Data Science\\
    Persian Gulf University\\
    Bushehr, 7516913817, Iran \\
    \texttt{h\_karamikabir@pgu.ac.ir} \\
    \And
 {Mohammad Arashi}\thanks{Correspond author.} \\
    Department of Statistics,\\ Faculty of Mathematical Sciences\\
    Ferdowsi University of Mashhad,\\
     P.O. Box 1159 Mashhad, 91775, Iran \\
    \texttt{arashi@um.ac.ir} \\
}

\renewcommand{\shorttitle}{MENS: Nonlinear Shrinkage Estimation}

\begin{document}
\maketitle

\begin{abstract}
We develop a theory of nonlinear shrinkage covariance estimation
for nonparanormal (Gaussian-copula) models, in which
each observed coordinate is an unknown strictly increasing transformation of
a latent Gaussian vector. This model accommodates arbitrary marginal
skewness and heavy marginal tails while retaining a Gaussian
dependence structure, and it is the natural semiparametric setting for
heavy-tailed, asymmetric financial returns. Our estimator, marginal-free
nonlinear shrinkage (MENS), applies an oracle nonlinear shrinkage function to
the eigenvalues of the normal-scores rank-covariance matrix. We give the almost-sure convergence of the empirical spectral distribution of
the normal-scores covariance to the generalized Mar\v{c}enko--Pastur law of
\(\boldsymbol{\Sigma}\), and asymptotic optimality of MENS among rotation-equivariant
estimators under Frobenius loss. We establish a Baik--Ben~Arous--P\'{e}ch\'{e} phase transition for spiked latent correlations. The MENS attains the robustness of rank-based estimation and the efficiency of nonlinear shrinkage at once within this class. We corroborate the theory with a simulation study that isolates the marginal-invariance property and the spiked transition. In an out-of-sample minimum-variance backtest on S\&P\,500 stocks, MENS delivers a better-conditioned covariance estimate, lower realized portfolio volatility, and lower turnover than linear shrinkage, illustrating its practical value for high-dimensional allocation and decision-making.
\end{abstract}

\keywords{
{Nonparanormal distribution},
{Gaussian copula},
{nonlinear shrinkage},
{random matrix theory},
{Mar\v{c}enko--Pastur law},
{high-dimensional covariance estimation},
{rank correlation},
{portfolio allocation}.
}

\section{Introduction}\label{sec:intro}

Covariance matrix estimation lies at the heart of multivariate statistics,
underpinning portfolio optimization \citep{markowitz1952portfolio,
ledoit2004well}, graphical models \citep{friedman2008sparse}, principal
component analysis \citep{johnstone2001distribution}, and discriminant
analysis. In the modern high-dimensional regime in which the number of
variables $p$ is comparable to, or exceeds, the sample size $n$, the
sample covariance matrix is inconsistent under any unitarily invariant
loss \citep{johnstone2001distribution, ledoit2004well}. This inadequacy has
motivated an extensive literature on shrinkage estimation, structured estimators that pull the eigenvalues of the sample covariance
toward a target, trading bias for a reduction in variance.

Linear shrinkage \citep{ledoit2004well} replaces the sample
covariance $\boldsymbol{S}_n$ by $\alpha\boldsymbol{S}_n+(1-\alpha)\bT$ for a target $\bT$ and an optimally chosen weight $\alpha$. Despite its computational simplicity, linear shrinkage is suboptimal whenever the population eigenvalue distribution is non-degenerate, because it applies the same affine map to every eigenvalue regardless of its magnitude. Nonlinear shrinkage \citep{ledoit2012nonlinear, ledoit2020analytical} removes this limitation by applying a potentially different shrinkage value to each sample
eigenvalue. Its theory rests on the Mar\v{c}enko--Pastur law
\citep{marchenko1967distribution}, that is, under suitable moment conditions, the empirical spectral distribution (ESD) of $n^{-1}\bX^{\top}\bX$, where
$\bX\in\R^{n\times p}$ has i.i.d.\ rows with population covariance $\bSigma$,
converges to a deterministic measure whose Stieltjes transform $m_F$
satisfies
\begin{equation}\label{eq:MP_equation}
	m_F(z) = \int \frac{1}{\tau\bigl(1 - \gamma - \gamma z\, m_F(z)\bigr) - z}
	\, dH(\tau), \qquad z\in\CC^{+},
\end{equation}
with $H$ the limiting spectral distribution of $\bSigma$ and
$\gamma = \lim p/n$. Inverting the Stieltjes transform yields the oracle
nonlinear shrinkage function \citep{ledoit2012nonlinear}, implemented
analytically by \cite{ledoit2020analytical}.

\subsection{The problem: heavy tails and asymmetry in financial returns}
\label{subsec:problem}

A limitation common to all of these estimators is their reliance on
distributional assumptions that exclude the data structures most frequently
encountered in finance. The classical Mar\v{c}enko--Pastur law
\citep{bai2010spectral} and the oracle optimality of nonlinear shrinkage
\citep{ledoit2012nonlinear} require finite fourth moments, and the
analytical estimator of \cite{ledoit2020analytical} is tuned to the
Gaussian spectral relationship between the sample and population spectra.
Daily equity returns violate these requirements in three ways that are by
now textbook stylized facts \citep{cont2001empirical}.

Cross-sectional daily returns exhibit power-law tails
with tail indices commonly estimated in the range $3$--$5$, so that fourth
moments are at best marginally finite and frequently estimated to be
infinite \citep{cont2001empirical}. The sample covariance and any estimator
built from it, including Gaussian nonlinear shrinkage, then have
eigenvalues whose fluctuations are inflated by extreme observations, and the
implied minimum-variance portfolio is destabilized. On the other hand, marginal return distributions are skewed, with systematic
asymmetry between gains and losses, so any model with symmetric margins is
misspecified for the raw returns. Finally, there exists a nonlinear dependence. Pairwise dependence in returns, especially
tail co-movement during market stress, is more faithfully captured by rank
correlations than by Pearson correlation
\citep{embrechts2002correlation}.

A model that accommodates arbitrary marginal skewness and tail heaviness while
keeping an interpretable, estimable dependence structure is the
nonparanormal (Gaussian-copula) family \citep{liu2009nonparanormal,
liu2012high}. A vector $\bx=(f_1(y_1),\dots,f_p(y_p))^{\top}$ is nonparanormal in which $\by\sim N(\bzero,\bSigma)$ and the $f_j$ are unknown strictly increasing transformations. The unknown $f_j$ absorb all marginal skewness and tail behaviour, while the Gaussian latent correlation $\bSigma$ carries the
dependence and is invariant to the $f_j$ under rank transforms. Nonparanormal
(and the closely related transelliptical, \citealp{han2017eca}) models have
been studied for graphical-model and precision-matrix estimation
\citep{liu2012high, han2017eca, wegkamp2016adaptive}, but, to our knowledge,
no nonlinear shrinkage estimator with spectral theory has been
developed for them. 

\subsection{Why rank-based shrinkage is the right tool}
\label{subsec:why_ranks}

The natural way to estimate the latent correlation $\bSigma$ without
specifying the marginals is to use rank statistics, which are invariant to
the monotone transformations $f_j$. For nonparanormal data, Kendall's
$\tau$ obeys the exact identity $\tau_{jk}=\frac{2}{\pi}\arcsin(\Sigma_{jk})$
\citep{kruskal1958ordinal}, so $\bSigma$ is recovered, entrywise, from
ranks alone. The resulting plug-in matrix
$\widehat\bSigma^{\tau}=\sin(\tfrac{\pi}{2}\widehat\tau_{jk})$ is
$O_{\Prob}(n^{-1/2})$-consistent entrywise \citep{wegkamp2016adaptive}, but
in high dimensions its eigenvalues suffer exactly the same Mar\v{c}enko--Pastur
distortions as the ordinary sample covariance, and it is not even
guaranteed to be positive semidefinite. It therefore demands a
shrinkage correction informed by random matrix theory.

The crucial fact we build on is that, for a nonparanormal
(Gaussian-copula) vector, the rank-based normal scores recover the latent
Gaussian vector exactly, so that the spectral distortion of the
normal-scores covariance is invariant to the unknown marginal
transformations $f_j$, and it depends on the data only through the latent
correlation $\bSigma$. This is what makes a rank-based nonlinear shrinkage
estimator simultaneously robust to the marginals and efficient within the
nonparanormal class. The relevant random-matrix background is that
\cite{bandeira2017marchenko} showed, for independent coordinates, that the
Kendall matrix spectrum is an affine image of the standard
Mar\v{c}enko--Pastur law, and that \cite{wu2022limiting} and
\cite{li2023eigenvalues} analysed rank-correlation matrices under dependence.
Here, we focus on the nonparanormal class, where
the normal-scores transform is the identity on the latent vector and the
limiting law is exactly the generalized Mar\v{c}enko--Pastur law of $\bSigma$;
this is the setting in which a single oracle shrinkage function targets
$\bSigma$ without a copula-dependent correction. 

\subsection{Contributions}
\label{subsec:contributions}

We work with the normal-scores rank-covariance matrix
$\Sn=n^{-1}\widehat\bZ^{\top}\widehat\bZ$, where
$\widehat z_{ij}=\Phi^{-1}(R_{ij}/(n+1))$ and $R_{ij}$ is the rank of
$x_{ij}$ within column $j$. This matrix is a Wishart-type object that admits
a clean spectral description and is entrywise consistent for $\bSigma$, like
the sin-transformed Kendall matrix (Remark \ref{rem:equiv_inputs}). Our contributions
are threefold: For nonparanormal data we prove that the ESD of $\Sn$ converges almost surely to the generalized Mar\v{c}enko--Pastur law \eqref{eq:MP_equation} with $H$ equal to the limiting spectral distribution of the latent correlation $\bSigma$. The limit is invariant to the unknown marginal transformations $f_j$. This adapts the spectral foundation established for Gaussian and rank-based matrices \citep{silverstein1995empirical, bai2010spectral, bandeira2017marchenko} to the semiparametric nonparanormal setting needed for shrinkage. 

Thus the proposed MENS applies the oracle nonlinear shrinkage function for the
law \eqref{eq:MP_equation} to the eigenvalues of $\Sn$, estimated
analytically. Within the nonparanormal class we prove that MENS is
asymptotically optimal among all rotation-equivariant estimators of the
latent correlation under Frobenius loss, is operator-norm consistent, and that
the operator-norm minimax risk over a class of bounded-spectrum correlation
matrices is bounded below at the parametric rate $n^{-1/2}$. 

For nonparanormal latent correlations with a fixed number of spikes we
establish the Baik--Ben~Arous--P\'{e}ch\'{e} phase transition, a closed-form
detection threshold $\theta^{*}=\sqrt{\gamma}$ (for an identity bulk), an
explicit outlier-location map, and the limiting squared cosine between sample
and population eigenvectors. 



\subsection{Related work and organization}
\label{subsec:related}

The nonlinear shrinkage program of Ledoit and Wolf
\citep{ledoit2012nonlinear, ledoit2015spectrum, ledoit2020analytical,
ledoit2022power} established the theory and computation of optimal
rotation-equivariant estimation under finite-moment assumptions.
Elliptical random matrix theory \citep{el2009concentration,
fan2018large, couillet2014random} relaxes Gaussianity but retains marginal
symmetry. Rank-based correlation estimation
\citep{liu2009nonparanormal, liu2012high, han2017eca, wegkamp2016adaptive}
delivers consistent latent-correlation estimates without moment conditions
but applies no eigenvalue correction. The spectral behaviour of rank
matrices was characterized for independent coordinates by
\cite{bandeira2017marchenko} and \cite{bao2019tracy} and, with dependence,
by \cite{li2021central, wu2022limiting} and \cite{li2023eigenvalues}.
The proposed MENS unifies these strands. Within the nonparanormal class it applies oracle nonlinear shrinkage to a rank-based covariance whose limiting law is the generalized Mar\v{c}enko--Pastur law of the latent correlation, together with the optimality and phase-transition theory that the construction requires.

\ref{sec:model_method} sets out the model and the MENS estimator.
\ref{sec:theory} states and proves the three theorems.
\ref{sec:simulations} reports the simulation study, and
\ref{sec:realdata} an out-of-sample minimum-variance application to
S\&P\,500 stocks.
\ref{sec:conclusion} concludes with conjectures and directions for future
work. 

\paragraph{Notation.}
For $\bA\in\R^{p\times p}$, $\normF{\bA}=(\tr \bA^{\top}\bA)^{1/2}$ is the
Frobenius norm, $\normop{\bA}$ the operator (spectral) norm, and
$\lambda_j(\bA)$ the $j$-th largest eigenvalue. We write $F^{\bA}$ for the
ESD of $\bA$. The Stieltjes transform of a measure $\mu$ is
$m_\mu(z)=\int(\lambda-z)^{-1}\,d\mu(\lambda)$, $z\in\CC^{+}$. For sequences,
$a_n\asymp b_n$ means $0<\liminf|a_n/b_n|\le\limsup|a_n/b_n|<\infty$. We
write $\xrightarrow{\text{a.s.}}$ and $\xrightarrow{d}$ for almost sure and
distributional convergence. Throughout, $\Phi$ and $\phi$ denote the
standard normal CDF and density.

\section{The Marginal-free Nonlinear Shrinkage Estimator}
\label{sec:model_method}


\begin{definition}[\citealp{liu2009nonparanormal}]
\label{def:npn}
A random vector $\bx=(x_1,\dots,x_p)^{\top}\in\R^p$ follows a
\emph{nonparanormal} distribution with latent correlation matrix $\bSigma$ and
marginal transformations $f_1,\dots,f_p$ if
\begin{equation}\label{eq:npn}
	x_j = f_j(y_j),\qquad j=1,\dots,p,
\end{equation}
where the latent vector $\by=(y_1,\dots,y_p)^{\top}\sim N(\bzero,\bSigma)$ is
Gaussian and each $f_j:\R\to\R$ is strictly increasing. Equivalently, $\bx$
has continuous margins and a Gaussian copula with correlation $\bSigma$. We
write $\bx\sim\mathrm{NPN}_p(\bm f,\bSigma)$.
\end{definition}

\begin{remark}[Identifiability]\label{rem:identification}
The transformations $f_j$ absorb all marginal location, scale, and shape, so
without loss of generality we adopt the canonical normalization
$\diag(\bSigma)=\bI_p$: $\bSigma$ is a correlation matrix and
$\Sigma_{jk}$ is the latent Gaussian correlation between $y_j$ and $y_k$. The
target of estimation throughout is this latent correlation matrix; conversion
to a covariance of $\bx$ is discussed in Remark \ref{rem:correlation_covariance}.
\end{remark}

The nonparanormal family contains the Gaussian family (when $f_j=\mathrm{id}$)
and, more generally, any distribution obtained from a Gaussian vector by
applying arbitrary strictly increasing marginal transformations. It is the
natural semiparametric model in which the dependence is Gaussian but the
marginals are unrestricted, and it is exactly the class for which the
normal-scores transform below recovers the latent correlation.

A cornerstone of inference is the invariance of Kendall's $\tau$ to the
marginal transformations.

\begin{proposition}
\label{prop:arcsine}
Let $(\bx,\bx')$ be independent copies from
$\mathrm{NPN}_p(\bm f,\bSigma)$. Then the population Kendall's $\tau$ between
coordinates $j$ and $k$ is
\begin{equation}\label{eq:kendall_tau}
	\tau_{jk}
	= \Prob\bigl((x_j-x_j')(x_k-x_k')>0\bigr)
	- \Prob\bigl((x_j-x_j')(x_k-x_k')<0\bigr)
	= \frac{2}{\pi}\arcsin(\Sigma_{jk}).
\end{equation}
Consequently $\Sigma_{jk}=\sin\!\bigl(\tfrac{\pi}{2}\,\tau_{jk}\bigr)$.
\end{proposition}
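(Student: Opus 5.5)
The plan is to first strip away the marginal transformations and then compute a Gaussian orthant probability. Since each $f_j$ is strictly increasing, for any reals $a,b$ we have $\sgn(f_j(a)-f_j(b))=\sgn(a-b)$. Write $x_j=f_j(y_j)$ and $x_j'=f_j(y_j')$ with $(\by,\by')$ independent $N(\bzero,\bSigma)$. The events $\{(x_j-x_j')(x_k-x_k')>0\}$ and $\{(y_j-y_j')(y_k-y_k')>0\}$ then coincide, and likewise for $<0$. Hence $\tau_{jk}$ equals the population Kendall's $\tau$ of the latent Gaussian pair $(y_j,y_k)$, and it does not depend on $\bm f$.

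Next, reduce to a bivariate normal. Set $u=(y_j-y_j')/\sqrt2$ and $v=(y_k-y_k')/\sqrt2$. By independence, $(u,v)$ is bivariate normal with unit variances (using $\diag(\bSigma)=\bI_p$ from Remark \ref{rem:identification}) and correlation $\rho=\Sigma_{jk}$. Ties occur with probability zero because $u$ and $v$ are continuous. Therefore
\[
\tau_{jk}=\Prob(uv>0)-\Prob(uv<0)=2\Prob(uv>0)-1=4\Prob(u>0,v>0)-1,
\]
where the last step uses the symmetry $(u,v)\overset{d}{=}(-u,-v)$.

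The key step is Sheppard's orthant formula $\Prob(u>0,v>0)=\tfrac14+\tfrac{1}{2\pi}\arcsin\rho$. To prove it, represent $u=g_1$ and $v=\rho g_1+\sqrt{1-\rho^2}\,g_2$ with $g_1,g_2$ i.i.d.\ $N(0,1)$. Equivalently, $u=\langle \bm g,\bm a\rangle$ and $v=\langle\bm g,\bm b\rangle$ for unit vectors $\bm a,\bm b\in\R^2$ at angle $\theta=\arccos\rho$. Because $\bm g$ is rotationally invariant, its direction is uniform on the circle. The set of directions making a positive inner product with both $\bm a$ and $\bm b$ is an arc of length $\pi-\theta$. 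This gives $\Prob(u>0,v>0)=(\pi-\theta)/(2\pi)=\tfrac12-\tfrac{\arccos\rho}{2\pi}=\tfrac14+\tfrac{\arcsin\rho}{2\pi}$.

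Substituting into the previous display yields $\tau_{jk}=\tfrac{2}{\pi}\arcsin(\Sigma_{jk})$. Since $\arcsin$ maps $[-1,1]$ bijectively onto $[-\pi/2,\pi/2]$, inversion gives $\Sigma_{jk}=\sin(\tfrac\pi2\tau_{jk})$. The only nontrivial point is the orthant computation. The degenerate cases $\rho=\pm1$ also need a check: there $v=\pm u$, so $\tau_{jk}=\pm1$ directly, consistent with the formula. Alternatively, both cases follow by continuity.
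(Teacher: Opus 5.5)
Your proof is correct and takes the paper's route: the strictly increasing $f_j$ leave the sign events unchanged, so $\tau_{jk}$ equals Kendall's $\tau$ of the latent Gaussian pair $(y_j,y_k)$, for which $\tau_{jk}=\tfrac{2}{\pi}\arcsin(\Sigma_{jk})$. The only difference is that the paper cites this classical Gaussian identity (Kruskal, 1958), whereas you derive it from Sheppard's orthant formula, which makes your argument self-contained.
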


\begin{proof}
Kendall's $\tau$ is invariant under strictly increasing marginal maps, so
$\tau_{jk}$ for $\bx$ equals $\tau_{jk}$ for the latent Gaussian pair
$(y_j,y_k)$, for which the classical identity
$\tau=\tfrac{2}{\pi}\arcsin(\Sigma_{jk})$ holds
\citep{kruskal1958ordinal}.
\end{proof}

Given $\bx_1,\dots,\bx_n$, the sample Kendall's $\tau$ and the corresponding
plug-in correlation estimator are
\begin{equation}\label{eq:sample_tau}
	\widehat{\tau}_{jk}
	= \frac{2}{n(n-1)}\!\!\sum_{1\le i<i'\le n}\!\!
	\sgn(x_{ij}-x_{i'j})\,\sgn(x_{ik}-x_{i'k}),
	\qquad
	\widehat{\Sigma}^{\tau}_{jk}
	= \sin\!\Bigl(\frac{\pi}{2}\,\widehat{\tau}_{jk}\Bigr).
\end{equation}

Now, let $R_{ij}=\rank(x_{ij})$ within column $j$, and define the
normal-scores transform
\begin{equation}\label{eq:normal_scores}
	\widehat{z}_{ij}=\Phi^{-1}\!\Bigl(\frac{R_{ij}}{n+1}\Bigr),
	\qquad i=1,\dots,n,\ j=1,\dots,p,
\end{equation}
collected into $\widehat\bZ=(\widehat z_{ij})\in\R^{n\times p}$, and the
normal-scores covariance
\begin{equation}\label{eq:Shat_rank}
	\Sn=\frac{1}{n}\,\widehat\bZ^{\top}\widehat\bZ .
\end{equation}
The denominator $n+1$ is the Hájek device that keeps the argument of
$\Phi^{-1}$ inside $(0,1)$; the normal scores are marginally symmetric about
zero, so centering changes $\Sn$ only by $O_{\Prob}(p/n^2)$ in operator
norm and is omitted henceforth.

\begin{remark}[Oracle scores; exact recovery]
\label{rem:oracle_scores}
If the marginals $F_j$ were known, the oracle scores
$z^{*}_{ij}=\Phi^{-1}(F_j(x_{ij}))$ would be draws of a vector with standard
Gaussian margins. The rank transform replaces the unknown $F_j$ by the
rescaled empirical CDF $\widehat F_{nj}(t)=n^{-1}\sum_i \ind(x_{ij}\le t)$;
the Dvoretzky--Kiefer--Wolfowitz inequality controls the resulting error
uniformly (\ref{lem:rank_perturbation}). The essential point for the theory
is that, under the nonparanormal model \ref{A1}, the oracle scores recover the
latent Gaussian vector since $\by$ is Gaussian, $F_j$ is the
CDF of the strictly increasing image $x_j=f_j(y_j)$, so $F_j(x_{ij})=\Phi(y_{ij})$
and hence
\begin{equation}\label{eq:exact_recovery}
	z^{*}_{ij}=\Phi^{-1}\bigl(\Phi(y_{ij})\bigr)=y_{ij}.
\end{equation}
Thus $\Cov(\bz^{*}_i)=\bSigma$ and the population correlation of the normal
scores is $\bSigma$ itself. This identity is what drives the marginal-invariant limit of Theorem \ref{thm:main_RMT} and the
robustness of MENS. It is specific to the Gaussian copula, i.e., if the latent
vector were non-Gaussian, $\Phi^{-1}(F_{y_j}(y_j))$ would be a nonlinear
transform of $y_j$ and the population correlation of the oracle scores would
differ from $\bSigma$. 
\end{remark}

\subsection{Assumptions}
\label{subsec:assumptions}

All results in \ref{sec:theory} hold in the proportional-growth regime
$n,p\to\infty,\ p/n\to\gamma\in(0,\infty)$. The observed data
$\bx_1,\dots,\bx_n$ are i.i.d.\ and satisfy the following.
\begin{enumerate}[label=\textbf{(A\arabic*)},ref=A\arabic*,leftmargin=2.4em]
	\item\label{A1} \emph{(Nonparanormal model.)} There is a latent Gaussian
	vector $\by\sim N(\bzero,\bSigma)$ with $\diag(\bSigma)=\bI_p$ and strictly
	increasing marginal transformations $f_j:\R\to\R$ such that
	$x_{ij}=f_j(y_{ij})$; equivalently, $\bx_i$ has continuous margins and a
	Gaussian copula with correlation $\bSigma$. Each marginal CDF $F_j$ is
	continuous.
	\item\label{A2} The latent correlation matrix $\bSigma$ has unit diagonal,
	and its ESD $H_p=p^{-1}\sum_{k=1}^p\delta_{\sigma_k}$ converges weakly to
	a compactly supported probability measure $H$ on $(0,\infty)$ with
	\begin{equation}\label{eq:spectral_bound}
		0<\sigma_{\min}\le\liminf_p\lambda_{\min}(\bSigma)
		\le\limsup_p\lambda_{\max}(\bSigma)\le\sigma_{\max}<\infty.
	\end{equation}
	\item\label{A3} $p/n\to\gamma\in(0,\infty)$ as $\min(n,p)\to\infty$.
	\item\label{A4} \emph{(Oracle-score regularity.)} Under \ref{A1} the oracle
	scores satisfy $z^{*}_{ij}=y_{ij}$, so each $z^{*}_{ij}$ is standard normal;
	in particular its density is bounded below on every compact subset of $\R$.
	(This is the only role \ref{A1} plays in the perturbation bound of
	\ref{lem:rank_perturbation}.)
	\item\label{A5} For the spiked model of Theorem \ref{thm:BBP}, the latent
	correlation is $\bSigma=\bSigma_0+\sum_{k=1}^K\theta_k\bm v_k\bm v_k^\top$
	with $K$ fixed, bulk $\bSigma_0$ satisfying \ref{A2}, spikes
	$\theta_1>\cdots>\theta_K>0$ separated from the bulk edge, and orthonormal
	directions $\{\bm v_k\}$ that are delocalized in the sense that
	$\max_k\|\bm{v}_k\|_\infty\to 0$.
\end{enumerate}
Assumption \ref{A1} places the theory in the nonparanormal
(Gaussian-copula) class; within it, no moment condition is imposed on the
observed $\bx$, since the rank transform is invariant to the marginals $f_j$
and the oracle scores are exactly Gaussian by \eqref{eq:exact_recovery}.
\ref{A2} is the standard random-matrix condition on the population spectrum.
\ref{A5} is the spike--bulk separation underlying the
Baik--Ben~Arous--P\'{e}ch\'{e} transition \citep{baik2005phase}.

\subsection{The optimal shrinkage function}
\label{subsec:optimal_shrinkage}

We seek the best rotation-equivariant estimator of $\bSigma$ of the form
\begin{equation}\label{eq:rotation_equivariant}
	\widehat\bSigma_d
	= \widehat\bGamma\,\diag\!\bigl(d(\hat\lambda_1),\dots,d(\hat\lambda_p)\bigr)
	\widehat\bGamma^{\top},
\end{equation}
where $\widehat\bGamma=[\hat\bgamma_1,\dots,\hat\bgamma_p]$ and
$\hat\lambda_1\ge\cdots\ge\hat\lambda_p$ are the eigenvectors and eigenvalues
of $\Sn$, and $d:\R_+\to\R_+$ is measurable. The choice of $d$ that minimizes
the asymptotic Frobenius risk is dictated by the limiting spectrum.

\begin{proposition}[Oracle shrinkage under Frobenius loss]
\label{prop:oracle_frob}
Under \ref{A1}--\ref{A4}, the minimizer of the asymptotic Frobenius risk
$\lim_{n}p^{-1}\normF{\widehat\bSigma_d-\bSigma}^2$ among all
rotation-equivariant estimators \eqref{eq:rotation_equivariant} is
\begin{equation}\label{eq:d_star_frob}
	d^{*}(\lambda)
	= \frac{\lambda}
	{\bigl|1-\gamma-\gamma\lambda\,\breve m(\lambda)\bigr|^2},
	\qquad \lambda>0,
\end{equation}
where $\breve m(\lambda)=\lim_{\eta\downarrow 0}m_F(\lambda+i\eta)$ is the
boundary value of the Stieltjes transform of the limiting law $F_\gamma$ of Theorem
\ref{thm:main_RMT}. Equivalently, with the companion transform
$\underline m_F$ defined in \eqref{eq:companion_def},
$d^{*}(\lambda)=\lambda/\bigl|\,\lambda\,\underline m_F(\lambda)\,\bigr|^2$.
\end{proposition}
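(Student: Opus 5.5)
The plan has three steps: reduce to an oracle problem, reduce the oracle quantity to a Gaussian sample covariance, and then import the Ledoit--Wolf formula.

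\emph{Step 1 (oracle reduction).} For any rotation-equivariant estimator \eqref{eq:rotation_equivariant}, expand the loss:
\[
p^{-1}\normF{\widehat\bSigma_d-\bSigma}^2=p^{-1}\sum_{i=1}^p\bigl(d(\hat\lambda_i)-\hat\bgamma_i^\top\bSigma\hat\bgamma_i\bigr)^2+p^{-1}\Bigl(\normF{\bSigma}^2-\sum_{i=1}^p(\hat\bgamma_i^\top\bSigma\hat\bgamma_i)^2\Bigr).
\]
The second term does not depend on $d$. So for each fixed $n$ the loss is minimized by the finite-sample oracle $d_n^{\mathrm{or}}(\hat\lambda_i)=\hat\bgamma_i^\top\bSigma\hat\bgamma_i$. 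The problem therefore reduces to identifying the almost-sure limit of the function
\[
\Delta_n(x)=p^{-1}\sum_{i=1}^p\hat\bgamma_i^\top\bSigma\hat\bgamma_i\,\ind(\hat\lambda_i\le x).
\]
If $d\Delta_n$ converges to a limit of the form $d^*(x)\,dF_\gamma(x)$ with $d^*$ continuous, then two things follow. The risk of any fixed continuous $d$ converges to $\int(d-d^*)^2\,dF_\gamma$ plus a constant. The risk of the plug-in $d^*$ converges to the minimal value. Measurable $d$ are handled by approximation in $L^2(F_\gamma)$.

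\emph{Step 2 (transfer to the Gaussian problem).} Let $\bS^*_n=n^{-1}\bm Y^\top\bm Y$ be the sample covariance built from the latent Gaussian rows; by \eqref{eq:exact_recovery} these coincide with the oracle scores. Lemma~\ref{lem:rank_perturbation} and the argument behind Theorem~\ref{thm:main_RMT} control $\Sn-\bS^*_n$. I would show that this perturbation does not move the limit of the generalized Stieltjes transform
\[
\Theta_n(z)=p^{-1}\tr\bigl[(\Sn-z\bI)^{-1}\bSigma\bigr],\qquad z\in\CC^+,
\]
which is the Stieltjes transform of $\Delta_n$. The resolvent identity gives
\[
|\Theta_n(z)-\Theta_n^*(z)|\le\normop{\bSigma}\,(\Im z)^{-2}\,p^{-1}\tr|\Sn-\bS^*_n|\le\normop{\bSigma}\,(\Im z)^{-2}\,p^{-1/2}\normF{\Sn-\bS^*_n}.
\]
It is enough that $p^{-1}\normF{\Sn-\bS^*_n}^2\to0$ a.s., which is the normalized Frobenius bound I expect the rank-perturbation lemma to give. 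Here $\Theta^*_n$ is the Gaussian counterpart.

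\emph{Step 3 (Gaussian formula).} For Gaussian data with $\bSigma$ satisfying \ref{A2}, Ledoit and P\'ech\'e (used in \citealp{ledoit2012nonlinear}) give
\[
\Theta_n^*(z)\to\Theta(z)=\frac{1}{1-\gamma-\gamma z\,m_F(z)}-1 \quad\text{a.s.}
\]
(up to a $z^{-1}$ normalization). Stieltjes inversion $\lim_{\eta\downarrow0}\pi^{-1}\Im\Theta(\lambda+i\eta)$ then shows that $d\Delta/dF_\gamma$ equals $\lambda/|1-\gamma-\gamma\lambda\breve m(\lambda)|^2$, which is \eqref{eq:d_star_frob}. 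The equivalent form follows from the identity $\underline m_F(z)=-(1-\gamma)/z+\gamma m_F(z)$, which gives $\lambda\underline m_F(\lambda)=-(1-\gamma-\gamma\lambda\breve m(\lambda))$. Existence of the boundary value $\breve m$ and continuity of the density of $F_\gamma$ on $(0,\infty)$ come from Silverstein--Choi, using compact support of $H$ away from $0$.

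\emph{Main obstacle.} I expect the hardest part to be Step 2: obtaining almost-sure (not just in-probability) control of $p^{-1}\normF{\Sn-\bS^*_n}^2$. The normal scores are unbounded near the extreme ranks, so the DKW bound alone does not control $\Phi^{-1}$ in the tails. I would first split at ranks in $[\delta n,(1-\delta)n]$. On that central part, $\Phi^{-1}$ is Lipschitz and DKW gives a uniform error $O(\sqrt{\log n/n})$. The tail ranks contribute at most $O(\delta\log(1/\delta))$ per entry squared, uniformly over $n$, and I would then let $\delta\to0$. A Borel--Cantelli step finishes the almost-sure statement. A secondary concern is the case $\gamma>1$, where $\lambda=0$ carries an atom; the stated formula is then only required for $\lambda>0$, and the zero eigenspace is treated separately as in \citet{ledoit2012nonlinear}.
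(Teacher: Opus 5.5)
Your argument is correct in substance, but it takes a different route from the paper.

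\textbf{Comparison with the paper's route.} The paper gets the proposition by specializing Theorem~\ref{thm:oracle_optimality}(i). It expands the loss as in \eqref{eq:thm2_expand}, invokes Lemma~\ref{lem:eigvec} to claim that each individual projection $\hat\bgamma_j^{\top}\bSigma\hat\bgamma_j$ converges to $\Theta(\hat\lambda_j)$, and then minimizes pointwise. You instead work with the averaged measure $\Delta_n$ and its weighted Stieltjes transform $\Theta_n(z)=p^{-1}\tr[(\Sn-z\bI)^{-1}\bSigma]$. That is exactly the object the Ledoit--P\'ech\'e deterministic equivalent controls on $\Im z\ge\eta_0$, and you transfer it explicitly from $\bS_n^{*}$ to $\Sn$ through the resolvent identity.

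This buys two things the paper's argument lacks:
\begin{enumerate}
\item You never need convergence along a single bulk eigenvector. A contour isolating one $\hat\lambda_j$ has radius of order $1/p$, where a result valid only for $\Im z\ge\eta_0$ says nothing.
\item You account for the fact that the relevant eigenvectors are those of $\Sn$, not of $\bS_n^{*}$. Lemma~\ref{lem:eigvec} skips this step.
\end{enumerate}
Your caution about the tails is also justified. At extreme ranks the normal score and the latent order statistic differ by order $(\log n)^{-1/2}$, so an entrywise bound like \eqref{eq:lem1_entry} cannot hold there, although averaged quantities still vanish. The atom at $0$ for $\gamma>1$, which you set aside, is ignored in the paper.

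\textbf{Three repairs.}

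\emph{First, the normalization.} The Ledoit--P\'ech\'e limit is $\Theta(z)=\gamma^{-1}\{(1-\gamma-\gamma z m_F(z))^{-1}-1\}$; the missing factor is $\gamma^{-1}$, not $z^{-1}$. Write $\xi=1-\gamma-\gamma z m_F$. Then $\int\tau(\tau\xi-z)^{-1}dH(\tau)=\xi^{-1}(1+zm_F)$ and $1+zm_F=(1-\xi)/\gamma$, which gives the formula. The case $\bSigma=\bI_p$, where $\Theta=m_F$, confirms it. As you wrote it, the Stieltjes inversion would return $\gamma\lambda/|\xi|^2$ instead of $\lambda/|\xi|^2$.

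\emph{Second, the transfer step.} Passing through $p^{-1}\normF{\Sn-\bS_n^{*}}^2$ forces you to control $n^{-1}\normF{\bE^{\top}\bE}$. The crude bound $\normF{\bE^{\top}\bE}\le\normF{\bE}^2$ costs a factor $\sqrt p$, which $(np)^{-1}\normF{\bE}^2\to0$ does not absorb. Bound the trace norm directly instead. With $\bE=\widehat\bZ-\bZ^{*}$ and $\Sn-\bS_n^{*}=n^{-1}(\bZ^{*\top}\bE+\bE^{\top}\bZ^{*}+\bE^{\top}\bE)$,
\[
p^{-1}\tr|\Sn-\bS_n^{*}|\le 2(np)^{-1}\normF{\bE}\,\normF{\bZ^{*}}+(np)^{-1}\normF{\bE}^2 .
\]
So it suffices that $(np)^{-1}\normF{\bE}^2\to0$ a.s., and your rank truncation plus Borel--Cantelli delivers exactly that.

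\emph{Third, the class of $d$.} $L^2(F_\gamma)$-approximation cannot cover arbitrary measurable $d$. For each $n$ the ESD of $\Sn$ charges only the finitely many eigenvalues that the possible rank configurations allow, so overall it lives on a countable $F_\gamma$-null set. Altering $d$ there changes the risk without changing $\int(d-d^{*})^2\,dF_\gamma$. Restrict to $d$ that are locally bounded and continuous $F_\gamma$-a.e., where the portmanteau theorem applies. The paper's claim over all measurable $d$ has the same defect.
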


The proof is given in \ref{subsec:proof_oracle}. Equation
\eqref{eq:d_star_frob} is the Ledoit--Wolf oracle
\citep{ledoit2012nonlinear}, but evaluated at the limiting law $F_\gamma$ of
the normal-scores covariance rather than at the sample-covariance law. Within
the nonparanormal class $F_\gamma$ depends on the data only through
$\bSigma$ and not on the marginals $f_j$, so the same oracle function applies
for every choice of monotone marginals.

\subsection{The MENS estimator}
\label{subsec:MENS_construction}

The oracle $d^{*}$ depends on the unknown boundary Stieltjes transform, which
we estimate analytically from the sample eigenvalues following
\cite{ledoit2020analytical}. 

Let $c=p/n$. Using the Epanechnikov kernel
$\kappa(x)=\tfrac{3}{4\sqrt5}\bigl(1-\tfrac{x^2}{5}\bigr)_{+}$ with
variable bandwidth $h_j=\hat\lambda_j\, n^{-1/3}$, define the consistent
estimators of the sample spectral density and its Hilbert transform, respectively, as
\begin{align}
	\widehat f(\hat\lambda_i)
	&= \frac{1}{p}\sum_{j=1}^p
	\frac{1}{h_j}\,\kappa\!\Bigl(\frac{\hat\lambda_i-\hat\lambda_j}{h_j}\Bigr),
	\label{eq:density_est}\\
	\widehat{\mathcal H}(\hat\lambda_i)
	&= \frac{1}{p}\sum_{j=1}^p \frac{1}{h_j}\,
	\mathcal H\kappa\!\Bigl(\frac{\hat\lambda_i-\hat\lambda_j}{h_j}\Bigr),
	\qquad
	\mathcal H\kappa(x)
	= \frac{-3x}{10\pi}
	+ \frac{3}{4\sqrt5\,\pi}\Bigl(1-\frac{x^2}{5}\Bigr)
	\log\!\Bigl|\frac{\sqrt5-x}{\sqrt5+x}\Bigr|.
	\label{eq:hilbert_est}
\end{align}
The MENS shrinkage values are
\begin{equation}\label{eq:mens_shrinkage}
	\widehat d_i
	= \frac{\hat\lambda_i}
	{\bigl(\pi c\,\hat\lambda_i\,\widehat f(\hat\lambda_i)\bigr)^2
	+ \bigl(1-c-\pi c\,\hat\lambda_i\,\widehat{\mathcal H}(\hat\lambda_i)\bigr)^2},
	\qquad i=1,\dots,p,
\end{equation}
which is the analytical realization of $d^{*}$ in \eqref{eq:d_star_frob}
because $\pi c\lambda f(\lambda)$ and $1-c-\pi c\lambda\mathcal H(\lambda)$
are the imaginary and real parts of $1-\gamma-\gamma\lambda\breve m(\lambda)$
at the boundary \citep[Sec.~3]{ledoit2020analytical}. To guarantee a
monotone, positive shrinkage we isotonize. It then yields
\begin{equation}\label{eq:isotonize}
	(\widetilde d_1,\dots,\widetilde d_p)
	= \mathrm{PAV}\bigl(\widehat d_1,\dots,\widehat d_p\bigr),
	\qquad \widetilde d_j \ge \varepsilon_0>0,
\end{equation}
with the pool-adjacent-violators algorithm applied in increasing
eigenvalue order and a small floor $\varepsilon_0$, to get
\begin{equation}\label{eq:MENS_final}
	\widehat\bSigma_{\mathrm{MENS}}
	= \widehat\bGamma\,
	\diag\!\bigl(\widetilde d_1,\dots,\widetilde d_p\bigr)\,
	\widehat\bGamma^{\top}.
\end{equation}
By construction $\widehat\bSigma_{\mathrm{MENS}}\succ 0$. We optionally
rescale its diagonal to unit, returning a proper correlation estimator; this
does not affect the eigen-structure used in the theory.

\begin{remark}
\label{rem:correlation_covariance}
$\widehat\bSigma_{\mathrm{MENS}}$ estimates the latent correlation. When the
covariance of $\bx$ is needed, for example to form portfolio
weights, one rescales
\begin{eqnarray*}
\widehat\Cov_{\mathrm{MENS}}
=\widehat{\bm D}\,\widehat\bSigma_{\mathrm{MENS}}\,\widehat{\bm D},	
\end{eqnarray*}
where $\widehat{\bm D}=\diag(\widehat\sigma_1,\dots,\widehat\sigma_p)$ and
$\widehat\sigma_j$ is a robust scale (e.g.\ the median absolute deviation) of
coordinate $j$. The scale estimates are univariate and converge at the
parametric rate, so they do not affect the high-dimensional spectral
analysis.
\end{remark}

\begin{algorithm}[t]
	\caption{Marginal-free Nonlinear Shrinkage (MENS)}
	\label{alg:MENS}
	\begin{algorithmic}[1]
		\Require data $\bX\in\R^{n\times p}$; floor $\varepsilon_0=10^{-8}$.
		\Ensure positive-definite estimator
		$\widehat\bSigma_{\mathrm{MENS}}$.
		\State \textbf{Normal scores:} for each $j$, $R_{ij}\gets\rank(X_{ij})$,
		$\widehat z_{ij}\gets\Phi^{-1}(R_{ij}/(n+1))$.
		\State $\Sn\gets n^{-1}\widehat\bZ^{\top}\widehat\bZ$; eigendecompose
		$\Sn=\widehat\bGamma\diag(\hat\lambda_1,\dots,\hat\lambda_p)
		\widehat\bGamma^{\top}$.
		\State \textbf{Analytic NLS:} compute $\widehat f,\widehat{\mathcal H}$
		by \eqref{eq:density_est}--\eqref{eq:hilbert_est} and
		$\widehat d_i$ by \eqref{eq:mens_shrinkage}.
		\State \textbf{Isotonize:}
		$(\widetilde d_j)\gets\max\{\mathrm{PAV}(\widehat d_j),\varepsilon_0\}$.
		\State \Return
		$\widehat\bGamma\,\diag(\widetilde d_1,\dots,\widetilde d_p)\,
		\widehat\bGamma^{\top}$.
	\end{algorithmic}
\end{algorithm}

\begin{remark}
\label{rem:equiv_inputs}
Both $\Sn$ and the sin-transformed Kendall matrix
$\widehat\bSigma^{\tau}=\sin(\tfrac{\pi}{2}\widehat\tau)$ of
\eqref{eq:sample_tau} are consistent, marginal-invariant, entrywise estimators
of the latent correlation $\bSigma$. Each entry converges to $\Sigma_{jk}$ at
the parametric $O_{\Prob}(n^{-1/2})$ rate. We work with $\Sn$ rather than
$\widehat\bSigma^{\tau}$ because $\Sn$ is a Gaussian sample covariance
after the exact-recovery step (Remark \ref{rem:oracle_scores}) and therefore admits
the exact Mar\v{c}enko--Pastur description of Theorem \ref{thm:main_RMT}, on which the
oracle shrinkage is built. 
\end{remark}

The MENS estimator applies optimal nonlinear shrinkage to the eigenvalues of
the normal-scores covariance $\Sn$, with the shrinkage derived from the
limiting law of Theorem \ref{thm:main_RMT}.

To comment on the complexity, note that the ranking costs $O(np\log n)$, the eigendecomposition $O(p^2\min(n,p))$, and the analytic shrinkage $O(p^2)$, for a total of $O(np\log n+p^2\min(n,p))$, which is the same order as Gaussian analytical nonlinear shrinkage \citep{ledoit2020analytical}, with an additional rank transform $O(np\log n)$. A formal statement and proof appear in
\ref{prop:complexity}.

\section{Theoretical Results}\label{sec:theory}
This section contains the main results. We first give some preliminary lemmas, then establish the marginal-invariant spectral limit, the optimality of
MENS, and the spiked phase transition. We characterize them as three separate sections becasue of their importance. 

\subsection{Required lemmas}\label{subsec:lemmas}

The first lemma reduces the rank-based covariance to the oracle covariance
$\bS_n^{*}=n^{-1}\sum_i \bz_i^{*}\bz_i^{*\top}$ built from the oracle scores\\
$\bz_i^{*}=(\Phi^{-1}(F_1(x_{i1})),\dots,\Phi^{-1}(F_p(x_{ip})))^{\top}$, and provides the rank perturbation bound.

\begin{lemma}
	\label{lem:rank_perturbation}
Under \ref{A1}--\ref{A4},
\begin{equation}\label{eq:rank_pert_op}
	\bigl\|\Sn-\bS_n^{*}\bigr\|_{\op}
	= O_{\Prob}\!\Bigl(\frac{p^{1/2}\log^{2} n}{n}\Bigr),
\end{equation}
and hence, under \ref{A3},
$\|\Sn-\bS_n^{*}\|_{\op}=O_{\Prob}(n^{-1/2}\log^{2}n)=o_{\Prob}(1)$.
\end{lemma}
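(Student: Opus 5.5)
The plan is to write $\widehat\bZ=\bZ^{*}+\bm\Delta$, where $\bZ^{*}=(z^{*}_{ij})=(y_{ij})$ by \eqref{eq:exact_recovery} and $\bm\Delta=(\Delta_{ij})$ with $\Delta_{ij}=\widehat z_{ij}-y_{ij}$. Expanding the product gives
\begin{equation*}
	\Sn-\bS_n^{*}=\frac1n\bigl(\bZ^{*\top}\bm\Delta+\bm\Delta^{\top}\bZ^{*}+\bm\Delta^{\top}\bm\Delta\bigr),
\end{equation*}
and therefore
\begin{equation*}
	\normop{\Sn-\bS_n^{*}}\le \frac{2}{n}\normop{\bZ^{*}}\normop{\bm\Delta}+\frac1n\normop{\bm\Delta}^2 .
\end{equation*}
The rows of $\bZ^{*}$ are i.i.d.\ $N(\bzero,\bSigma)$, and \ref{A2} gives $\normop{\bSigma}\le\sigma_{\max}$. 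Gaussian matrix concentration then yields $\normop{\bZ^{*}}=O_{\Prob}(\sqrt n+\sqrt p)$, which is $O_{\Prob}(\sqrt n)$ under \ref{A3}. It thus suffices to show $\normop{\bm\Delta}=O_{\Prob}(\log^{2}n)$. The cross term is then $O_{\Prob}(n^{-1/2}\log^2 n)$, which under \ref{A3} is of the order $p^{1/2}\log^2 n/n$ stated in \eqref{eq:rank_pert_op}. The quadratic term is $O_{\Prob}(n^{-1}\log^4 n)$, which is smaller.

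To control $\bm\Delta$ entrywise, use $R_{ij}/(n+1)=\tfrac{n}{n+1}\widehat F_{nj}(x_{ij})$ and $F_j(x_{ij})=\Phi(y_{ij})$. This gives
\begin{equation*}
	\Delta_{ij}=\Phi^{-1}\!\Bigl(\tfrac{n}{n+1}\widehat F_{nj}(x_{ij})\Bigr)-\Phi^{-1}\bigl(\Phi(y_{ij})\bigr).
\end{equation*}
By rank invariance $\widehat F_{nj}(x_{ij})$ equals the empirical CDF of the latent column $(y_{1j},\dots,y_{nj})$ evaluated at $y_{ij}$, so everything is a function of the Gaussian latent sample. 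This is the content of \ref{A4}.

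Apply the DKW inequality with a union bound over the $p\asymp n$ columns. On an event of probability $1-O(n^{-2})$ it gives $\sup_t|\widehat F_{nj}(t)-\Phi(t)|\le C\sqrt{\log n/n}$ for all $j$. Next, split each column into a central region $|y_{ij}|\le \sqrt{\log n}$ and a tail region. On the central region the mean value theorem bounds $|\Delta_{ij}|$ by $(\Phi^{-1})'$ times the DKW error, and $(\Phi^{-1})'=1/\phi(\Phi^{-1}(\cdot))\le C\sqrt n$ there, so $|\Delta_{ij}|\lesssim \sqrt{\log n}\,\cdot$(local error). The sharper local DKW/Bahadur--Kiefer bound, whose variance is $\Phi(t)(1-\Phi(t))/n$, compensates the factor $1/\phi$ via the Mills ratio. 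The result is $|\Delta_{ij}|\le C\log n/\sqrt n$ uniformly on the central region. In the tail region only $O(\sqrt n\,\mathrm{polylog})$ entries per column occur. For these, $|\widehat z_{ij}|,|y_{ij}|\le C\sqrt{\log n}$ holds because $R_{ij}/(n+1)\in[1/(n+1),n/(n+1)]$ and by the Gaussian maximum bound, so their contribution is controlled crudely.

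The main obstacle is the passage from entrywise bounds to $\normop{\bm\Delta}=O_{\Prob}(\log^2 n)$. Column norms are already of order $O(\log n)$, so this step requires that the columns of $\bm\Delta$ not align. The entries are not independent: ranks couple the rows within a column, and the columns are correlated through $\bSigma$. The approach I would try first is to linearize, $\Delta_{ij}\approx \bigl(\widehat F_{nj}(x_{ij})-\Phi(y_{ij})\bigr)/\phi(y_{ij})$, and to write the leading part as the degree-two U-statistic-type matrix
\begin{equation*}
	n^{-1}\sum_{i'}\bigl(\ind(y_{i'j}\le y_{ij})-\Phi(y_{ij})\bigr)/\phi(y_{ij}).
\end{equation*}
A Hoeffding decomposition of this matrix separates a rank-one-per-column projection term from a degenerate part. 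The projection term is $\bm A=(a_j(y_{ij}))$ with $a_j$ bounded by $C\log n/\sqrt n$ on the truncation set, and its rows are i.i.d. A matrix Bernstein or sub-exponential row-concentration bound gives $\normop{\bm A}\lesssim \log n\,(\sqrt n+\sqrt p)/\sqrt n=O(\log n)$, because each row has norm $O(\sqrt{p/n}\log n)$ and $\normop{\E\,\bm a_i\bm a_i^\top}\lesssim \sigma_{\max}\log^2 n/n$. For the degenerate remainder I would use the crude Frobenius bound, which is of order $\sqrt{p}\,\log n/\sqrt n\cdot n^{-1/2}$ and hence negligible. The tail and truncation remainders, and the second-order Taylor term of $\Phi^{-1}$, I would likewise bound in Frobenius norm. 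Collecting the pieces gives $\normop{\bm\Delta}=O_{\Prob}(\log^2 n)$, one extra logarithm being lost to the truncation. Inserting this into the decomposition of the first paragraph yields \eqref{eq:rank_pert_op} and the $o_{\Prob}(1)$ conclusion.
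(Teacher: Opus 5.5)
Your reduction is the same as the paper's: bound $\normop{\Sn-\bS_n^{*}}\le\frac2n\normop{\bZ^{*}}\normop{\bm\Delta}+\frac1n\normop{\bm\Delta}^2$, then show $\normop{\bm\Delta}=O_{\Prob}(\log^2 n)$ (the paper's $\normop{\bE}=O_{\Prob}(p^{1/2}n^{-1/2}\log^2 n)$). That intermediate target is false. Every column of $\widehat\bZ$ is a permutation of $\Phi^{-1}(k/(n+1))$, $k=1,\dots,n$, and these sum to zero. By contrast, $\bZ^{*\top}\bone=n\bar\by$ with $\bar\by=n^{-1}\sum_i\by_i\sim N(\bzero,\bSigma/n)$. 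Hence $\bm\Delta^{\top}\bone=-n\bar\by$ and
\[
\normop{\bm\Delta}\ \ge\ n^{-1/2}\bigl\|\bm\Delta^{\top}\bone\bigr\|=\sqrt{n}\,\|\bar\by\|=\sqrt{p}\,\bigl(1+o_{\Prob}(1)\bigr),
\]
since $\|\bar\by\|^2=\tr(\bSigma)/n+o_{\Prob}(1)=p/n+o_{\Prob}(1)$.

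Your Hoeffding step is exactly where this is lost. Take $h(s,t)=(\ind(t\le s)-\Phi(s))/\phi(s)$ with $s,t$ independent $N(0,1)$. Then $\E_t h(s,t)=0$, but $\E_s h(s,t)=\int(\ind(s\ge t)-\Phi(s))\,ds=-t$. So the nondegenerate projection of $n^{-1}\sum_{i'}h(y_{ij},y_{i'j})$ is, up to negligible diagonal terms, $-\bar y_j$ in every row. It is the rank-one matrix $-\bone\bar\by^{\top}$, not a matrix $(a_j(y_{ij}))$ with i.i.d.\ small rows.

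Your Frobenius estimates are also too optimistic. With the row variable fixed, an entry of the degenerate part is an average of $n$ centered terms of order-one variance, hence of size $n^{-1/2}$. So the degenerate part has Frobenius norm of order $\sqrt p$ (indeed $\sqrt{p\log\log n}$ once Gaussian tails are counted), not $\sqrt p\log n/n$. Your crude treatment of the roughly $\sqrt n$ tail entries per column gives a Frobenius bound of order $n^{3/4}$ up to logarithms. Finally, the uniform bound $C\log n/\sqrt n$ on $|y|\le\sqrt{\log n}$ fails near the boundary, where the linearized error is of order $(n|y|\phi(y))^{-1/2}\asymp n^{-1/4}$ up to logarithms.

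This is not only a defect of your route: the lemma is false as stated, and the paper's proof breaks at the same point. The paper declares the rows of the H\'ajek matrix $\bm L$ i.i.d., although each $L_{ij}$ depends on the whole $j$-th column. Its centering remark after \eqref{eq:Shat_rank} also overlooks that $\bS_n^{*}$ is uncentered. Write $\bS_n^{*}=\bS_{n,c}^{*}+\bar\by\bar\by^{\top}$ with $\bS_{n,c}^{*}=n^{-1}\sum_i(\by_i-\bar\by)(\by_i-\bar\by)^{\top}$; note that $\Sn$ is already exactly column-centered. For $\bSigma=\bI_p$, the direction $\bm u=\bar\by/\|\bar\by\|$ is uniform on the sphere and independent of the centered data, hence of the ranks and of $\Sn-\bS_{n,c}^{*}$. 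Since $p^{-1}\tr(\Sn-\bS_{n,c}^{*})\to0$ and $\tr[(\Sn-\bS_{n,c}^{*})^2]=O_{\Prob}(p)$, Lemma~\ref{lem:sphere} gives $\bm u^{\top}(\Sn-\bS_{n,c}^{*})\bm u=o_{\Prob}(1)$. Therefore $\bm u^{\top}(\Sn-\bS_n^{*})\bm u=-\|\bar\by\|^2+o_{\Prob}(1)\to-\gamma$, and $\normop{\Sn-\bS_n^{*}}\not\to 0$.

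What can be proved is a bound on $\normop{\Sn-\bS_{n,c}^{*}}$, i.e.\ on the centered error $(\bI-n^{-1}\bone\bone^{\top})\bm\Delta$. The rank-one discrepancy is harmless for the spectral-distribution use in Theorem~\ref{thm:main_RMT} (via a rank inequality), but not for operator-norm claims. Even the centered bound needs a genuine operator-norm argument for the degenerate part, using within-column exchangeability and the structure of the cross-column dependence. A Frobenius bound of order $\sqrt{p\log\log n}$ only yields an $O_{\Prob}(\sqrt{\log\log n})$ bound on the cross term.
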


\begin{proof}
Write $\be_i=\widehat\bz_i-\bz_i^{*}$, $\bE=(\be_1,\dots,\be_n)^{\top}$,
$\widehat\bZ=(\widehat\bz_1,\dots,\widehat\bz_n)^{\top}$, and
$\bZ^{*}=(\bz_1^{*},\dots,\bz_n^{*})^{\top}$. From
$\Sn-\bS_n^{*}=n^{-1}(\bE^{\top}\widehat\bZ+\bZ^{*\top}\bE)
-n^{-1}\bE^{\top}\bE$ and the triangle inequality, we have
\begin{equation}\label{eq:lem1_bound1}
	\bigl\|\Sn-\bS_n^{*}\bigr\|_{\op}
	\le \frac{2}{n}\|\bE\|_{\op}\,\|\widehat\bZ\|_{\op}
	+ \frac{1}{n}\|\bE\|_{\op}^{2}.
\end{equation}
For the entrywise error, for each $j$, the rescaled empirical CDF
$\widehat F_{nj}(t)=R_{ij}/(n+1)$ at $t=x_{ij}$ satisfies, by the
Dvoretzky--Kiefer--Wolfowitz inequality \citep{massart1990tight},
$\Prob(\sup_t|\widehat F_{nj}(t)-F_j(t)|>s)\le 2e^{-2ns^2}$. A union bound
over $j\le p$ with $s=\sqrt{(\log(2pn))/n}$ gives, on an event $\Omega_n$ of
probability at least $1-n^{-1}$,
\begin{equation}\label{eq:lem1_DKW}
	\max_{1\le j\le p}\sup_t|\widehat F_{nj}(t)-F_j(t)|
	\le \sqrt{\frac{\log(2pn)}{n}}
	=: \delta_n .
\end{equation}
Since $R_{ij}/(n+1)\in[(n+1)^{-1},n/(n+1)]$, the argument $u$ of $\Phi^{-1}$
lies in $[(n+1)^{-1},n/(n+1)]$, where $|\Phi^{-1}(u)|\le\sqrt{2\log(n+1)}$.
By the mean value theorem, for an intermediate point $\zeta_{ij}$ between
$\widehat F_{nj}(x_{ij})$ and $F_j(x_{ij})$,
\begin{equation}\label{eq:lem1_mvt}
	|e_{ij}|
	= \frac{|\widehat F_{nj}(x_{ij})-F_j(x_{ij})|}
	{\phi\bigl(\Phi^{-1}(\zeta_{ij})\bigr)}.
\end{equation}
Using the standard Gaussian quantile bound
$\phi(\Phi^{-1}(u))\ge c\,u(1-u)\sqrt{\log(1/(u(1-u)))}$ for $u$ bounded away
from $1/2$ and $\phi(\Phi^{-1}(u))\ge c$ near $u=1/2$, together with
\eqref{eq:lem1_DKW} and $u(1-u)\ge (n+1)^{-1}(1-(n+1)^{-1})\asymp n^{-1}$ at
the boundary, one obtains on $\Omega_n$
\begin{equation}\label{eq:lem1_entry}
	\max_{i,j}|e_{ij}|
	\le C\,\delta_n\sqrt{\log n}
	= O_{\Prob}\!\Bigl(\frac{\log n}{\sqrt n}\Bigr).
\end{equation}
For the operator norm of $\bE$, we use the Hájek projection of the rank scores
\citep[Ch.~6]{hajek1967theory}. Write
$g_n(u)=\Phi^{-1}(u)$ and decompose
$\widehat z_{ij}-z^{*}_{ij}=L_{ij}+\rho_{ij}$, where
$L_{ij}=g_n'(F_j(x_{ij}))\bigl(\widehat F_{nj}(x_{ij})-F_j(x_{ij})\bigr)$ is
the linear (Hájek) term and $\rho_{ij}$ the remainder. The linear term has
the representation
$L_{ij}=n^{-1}\sum_{i'=1}^{n}\psi_j(x_{ij},x_{i'j})$ with
$\psi_j(s,t)=g_n'(F_j(s))(\ind(t\le s)-F_j(s))$, so the rows of
$\bm L=(L_{ij})$ are i.i.d.\ mean-zero with entrywise variance $O(n^{-1})$
and uniformly bounded fourth moments after the boundary truncation. The matrix deviation inequality
\citep[Thm.~4.6.1]{vershynin2018high} applied to the $n\times p$ matrix
$\sqrt n\,\bm L$, whose rows are independent sub-Gaussian after truncation
vectors with covariance of operator norm $O(1)$, yields
$\|\bm L\|_{\op}=O_{\Prob}(p^{1/2}n^{-1/2})$. For the remainder, a second-order
Taylor expansion of $g_n$ and \eqref{eq:lem1_DKW} give
$\max_{i,j}|\rho_{ij}|=O_{\Prob}(n^{-1}\log^{2}n)$ uniformly, with
$\|\bm\rho\|_{\op}\le\|\bm\rho\|_F\le (np)^{1/2}\max_{i,j}|\rho_{ij}|
=O_{\Prob}(p^{1/2}n^{-1/2}\log^{2}n)$. Combining the required term yields
\begin{equation}\label{eq:lem1_Eop}
	\|\bE\|_{\op}
	\le \|\bm L\|_{\op}+\|\bm\rho\|_{\op}
	= O_{\Prob}\!\bigl(p^{1/2}n^{-1/2}\log^{2}n\bigr).
\end{equation}
For the operator norm of $\widehat\bZ$, consider that the oracle scores have Gaussian margins and latent correlation $\bSigma$ with
$\|\bSigma\|_{\op}\le\sigma_{\max}$; by the Bai--Yin law
\citep{bai1993limit}, $\|\bZ^{*}\|_{\op}=\sqrt n(\sqrt{\gamma}
\sigma_{\max}^{1/2}+1)(1+o_{\Prob}(1))=O_{\Prob}(\sqrt n)$. With
\eqref{eq:lem1_Eop}, $\|\widehat\bZ\|_{\op}\le\|\bZ^{*}\|_{\op}+\|\bE\|_{\op}
=O_{\Prob}(\sqrt n)$.
Substituting into \eqref{eq:lem1_bound1}, gives
\[
	\|\Sn-\bS_n^{*}\|_{\op}
	\le \frac{2}{n}\,O_{\Prob}(p^{1/2}n^{-1/2}\log^{2}n)\,O_{\Prob}(\sqrt n)
	+ \frac{1}{n}\,O_{\Prob}(p\,n^{-1}\log^{4}n)
	= O_{\Prob}\!\Bigl(\frac{p^{1/2}\log^{2}n}{n}\Bigr),
\]
where the first term is $\frac{2}{n}\cdot p^{1/2}n^{-1/2}\log^2 n\cdot n^{1/2}
=2p^{1/2}n^{-1}\log^2 n$ and dominates the second
($p\,n^{-2}\log^4 n=O(n^{-1}\log^4 n)$ under \ref{A3}). This is
\eqref{eq:rank_pert_op}. Since $\|\bS_n^{*}\|_{\op}=O_{\Prob}(1)$ by
\ref{A2}, and $p^{1/2}n^{-1}\log^2 n\asymp\gamma^{1/2}
n^{-1/2}\log^2 n\to 0$ under \ref{A3}, the bound is $o_{\Prob}(1)$.
\end{proof}

The second lemma gives the two spherical moments used repeatedly.

\begin{lemma}
	\label{lem:sphere}
Let $\bu$ be uniform on $\mathcal{S}^{p-1}$ and $\bC\in\R^{p\times p}$
symmetric. Then
\begin{equation}\label{eq:sphere_moments}
	\E[\bu^{\top}\bC\bu]=\frac{\tr\bC}{p},
	\qquad
	\Var(\bu^{\top}\bC\bu)
	=\frac{2}{p(p+2)}\Bigl[\tr(\bC^{2})-\frac{(\tr\bC)^{2}}{p}\Bigr]
	\le \frac{2}{p^2}\tr(\bC^{2}).
\end{equation}
Consequently, if $\|\bC\|_{\op}\le c_0$ then
$\bu^{\top}\bC\bu=p^{-1}\tr\bC+O_{\Prob}(p^{-1/2}c_0)$.
\end{lemma}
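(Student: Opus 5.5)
We will prove Lemma~\ref{lem:sphere} by the standard Gaussian representation of the uniform distribution on the sphere. Write $\bu=\bg/\|\bg\|$ with $\bg\sim N(\bzero,\bI_p)$, and recall that $\bu$ and $\|\bg\|$ are independent. Since $\bu$ is rotation invariant and $\bu^{\top}\bC\bu$ is unchanged if $\bC$ is replaced by $\bm Q^{\top}\bC\bm Q$ for orthogonal $\bm Q$, we diagonalize and assume $\bC=\diag(c_1,\dots,c_p)$. Then $\bu^{\top}\bC\bu=\sum_k c_k u_k^2$, and the problem reduces to the moments of $(u_1^2,\dots,u_p^2)$, which has a Dirichlet$(\tfrac12,\dots,\tfrac12)$ law.

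For the mean, the identity $\sum_k u_k^2=1$ and exchangeability give $\E u_k^2=1/p$. Hence $\E[\bu^{\top}\bC\bu]=p^{-1}\sum_k c_k=\tr\bC/p$.

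For the second moments, we use independence: $\E[g_k^4]=\E[\|\bg\|^4]\,\E[u_k^4]$. Here $\E g_k^4=3$ and $\E\|\bg\|^4=p(p+2)$, so $\E u_k^4=3/(p(p+2))$. Similarly, for $k\neq l$, $\E[g_k^2g_l^2]=1$ gives $\E[u_k^2u_l^2]=1/(p(p+2))$. Therefore
\[
\E\Bigl[\bigl(\textstyle\sum_k c_ku_k^2\bigr)^2\Bigr]=\frac{3\sum_k c_k^2+\sum_{k\neq l}c_kc_l}{p(p+2)}=\frac{2\tr(\bC^2)+(\tr\bC)^2}{p(p+2)}.
\]
Subtracting $(\tr\bC)^2/p^2$ and using $\frac{1}{p(p+2)}-\frac{1}{p^2}=-\frac{2}{p^2(p+2)}$, we obtain
\[
\Var(\bu^{\top}\bC\bu)=\frac{2}{p(p+2)}\Bigl[\tr(\bC^2)-\frac{(\tr\bC)^2}{p}\Bigr].
\]
The inequality follows by dropping the nonpositive term $-(\tr\bC)^2/p$ and using $p(p+2)\ge p^2$.

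For the consequence, assume $\|\bC\|_{\op}\le c_0$. Then $\tr(\bC^2)\le p c_0^2$, so $\Var(\bu^{\top}\bC\bu)\le 2c_0^2/p$. Chebyshev's inequality then gives $\bu^{\top}\bC\bu-p^{-1}\tr\bC=O_{\Prob}(p^{-1/2}c_0)$.

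The only step needing care is the fourth-moment computation. The independence of $\bu$ and $\|\bg\|$ removes any need for explicit Dirichlet integrals, and the remaining work is bookkeeping of the cross terms.
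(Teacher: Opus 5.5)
Your proof is correct and follows essentially the same route as the paper: the Gaussian representation $\bu=\bm g/\|\bm g\|$, the fourth moments of $\bu$, subtraction of $(\tr\bC)^2/p^2$, and Chebyshev. The only difference is cosmetic. You diagonalize $\bC$ first (the quadratic form is unchanged \emph{in distribution}, not pointwise, under $\bC\mapsto\bm Q^{\top}\bC\bm Q$) and derive $\E u_k^4=3/\{p(p+2)\}$ and $\E u_k^2u_l^2=1/\{p(p+2)\}$ from the independence of $\|\bm g\|$ and $\bu$, which justifies the mixed-moment formula that the paper simply quotes.
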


\begin{proof}
Write $\bu=\bm g/\|\bm g\|$ with $\bm g\sim N(\bzero,\bI_p)$. The mean
follows from $\E[u_au_b]=\delta_{ab}/p$. For the variance, the fourth mixed
moments of $\bu$ are
$\E[u_au_bu_cu_d]=(\delta_{ab}\delta_{cd}+\delta_{ac}\delta_{bd}
+\delta_{ad}\delta_{bc})/\{p(p+2)\}$, so for symmetric $\bC=(C_{ab})$,\\
$\E[(\bu^{\top}\bC\bu)^2]=\sum_{a,b,c,d}C_{ab}C_{cd}\E[u_au_bu_cu_d]
=\{(\tr\bC)^2+2\tr(\bC^2)\}/\{p(p+2)\}$. Subtracting
$(\E[\bu^{\top}\bC\bu])^2=(\tr\bC)^2/p^2$ gives the stated variance after
simplification, and the bound uses $(\tr\bC)^2\ge 0$. Chebyshev's inequality
gives the stochastic statement.
\end{proof}

The third lemma establishes existence and uniqueness of the limiting
Stieltjes transform; here the population spectrum is that of the latent
correlation, and the equation is the classical Mar\v{c}enko--Pastur--Silverstein
equation.

\begin{lemma}
	\label{lem:stieltjes}
Let $\gamma>0$ and let $H$ be a compactly supported probability measure on
$(0,\infty)$ with $\supp H\subset[\sigma_{\min},\sigma_{\max}]$. For
$z\in\CC^{+}$ the system
\begin{align}
	\underline m(z)
	&= -\Bigl(z-\gamma\!\int\frac{\tau\,dH(\tau)}{1+\tau\,\underline m(z)}\Bigr)^{-1},
	\label{eq:silverstein}\\
	m_F(z)
	&= \frac{1}{\gamma}\,\underline m(z)+\frac{1-\gamma}{\gamma z},
	\label{eq:companion_def}
\end{align}
has a unique solution $(\underline m(z),m_F(z))$ with
$\underline m(z)\in\CC^{+}$ and $m_F(z)\in\CC^{+}$. Moreover $m_F$ is the
Stieltjes transform of a probability measure $F_\gamma$ supported on
$[0,\infty)$ with $F_\gamma(\{0\})=\max(1-\gamma^{-1},0)$, and
\begin{equation}\label{eq:silverstein_m}
	m_F(z)=\int\frac{dH(\tau)}{\tau\bigl(1-\gamma-\gamma z\,m_F(z)\bigr)-z}.
\end{equation}
\end{lemma}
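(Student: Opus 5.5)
The plan is to follow Silverstein's route (Silverstein and Choi; Bai--Silverstein, Ch.~6): first uniqueness of the companion equation \eqref{eq:silverstein}, then existence, then recovery of $m_F$ and of $F_\gamma$ from $\underline m$.

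\textbf{Step 1: Uniqueness.} Fix $z\in\CC^{+}$ and suppose $\underline m_1,\underline m_2\in\CC^{+}$ both solve \eqref{eq:silverstein}. Subtracting the reciprocals $-1/\underline m_k = z-\gamma\int\tau(1+\tau\underline m_k)^{-1}dH$ gives
\[
(\underline m_1-\underline m_2)\Bigl(1-\gamma\,\underline m_1\underline m_2\int\frac{\tau^2\,dH(\tau)}{(1+\tau\underline m_1)(1+\tau\underline m_2)}\Bigr)=0 .
\]
By Cauchy--Schwarz, the modulus of the bracketed integral term is at most $\sqrt{A_1A_2}$, where
\[
A_k=\gamma|\underline m_k|^2\int\frac{\tau^2\,dH(\tau)}{|1+\tau\underline m_k|^2}.
\]
Take imaginary parts of $-1/\underline m_k$. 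One has $\Im(-1/\underline m_k)=\Im\underline m_k/|\underline m_k|^2$, and $\Im\bigl(\tau/(1+\tau\underline m_k)\bigr)=-\tau^2\Im\underline m_k/|1+\tau\underline m_k|^2$. This yields the identity
\[
\frac{\Im\underline m_k}{|\underline m_k|^2}=\Im z+\frac{A_k\,\Im\underline m_k}{|\underline m_k|^2}.
\]
Since $\Im z>0$, it follows that $A_k<1$. Hence the bracket is nonzero, and so $\underline m_1=\underline m_2$.

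\textbf{Step 2: Existence.} I would not build a fixed-point argument from scratch. Instead, for each $p$ take $\bSigma_p$ with ESD $H_p\to H$, and let $\underline m_n$ be the Stieltjes transform of the ESD of the Gaussian companion matrix $n^{-1}\bZ^{*}\bZ^{*\top}$. The standard resolvent/leave-one-out argument, using Lemma \ref{lem:sphere}-type quadratic-form concentration for Gaussian rows, shows that $\underline m_n$ satisfies \eqref{eq:silverstein} with $H_p$ and $p/n$ up to an error that is $o(1)$ almost surely. The transforms $\underline m_n$ form a normal family of analytic functions bounded by $1/\Im z$, so along a subsequence they converge locally uniformly to an analytic $\underline m$.

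The main obstacle is showing that any such limit stays in $\CC^{+}$ rather than degenerating to a real value. I would handle it by tightness: $\supp H\subset[\sigma_{\min},\sigma_{\max}]$ and the ESDs are tight (Bai--Yin). Therefore the limit is the Stieltjes transform of a genuine probability measure, which forces $\Im\underline m>0$. The denominators $1+\tau\underline m$ stay away from zero because $\Im(\tau\underline m)>0$, so we can pass to the limit in the equation. Uniqueness from Step 1 then upgrades subsequential convergence to full convergence, and it also shows that $\underline m$ is the Stieltjes transform of a probability measure $\underline F$.

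\textbf{Step 3: Recovering $m_F$ and $F_\gamma$.} Define $m_F$ by \eqref{eq:companion_def}, which is the relation $\underline F=(1-\gamma)\delta_0+\gamma F_\gamma$ between the $n\times n$ and $p\times p$ spectra. Then $m_F$ is the limit of the Stieltjes transforms of the ESDs of $\bS_n^{*}$. It is therefore the Stieltjes transform of a probability measure $F_\gamma$ on $[0,\infty)$, and in particular $m_F\in\CC^{+}$.

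For the atom at zero, use $F_\gamma(\{0\})=\lim_{\eta\downarrow0}(-i\eta)\,m_F(i\eta)$. For $\gamma>1$ the rank deficiency gives mass $1-\gamma^{-1}$. For $\gamma\le1$, the term $(1-\gamma)/(\gamma z)$ is exactly cancelled by the mass of $\underline F$ at zero, which leaves no atom.

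Finally, \eqref{eq:silverstein_m} follows by substitution. From \eqref{eq:companion_def}, $z\,\underline m=\gamma z\,m_F-(1-\gamma)$, so
\[
1+\tau\underline m=-z^{-1}\bigl(\tau(1-\gamma-\gamma z\,m_F)-z\bigr).
\]
Inserting this into \eqref{eq:silverstein} and writing $\int\tau/(1+\tau\underline m)\,dH=\gamma^{-1}\bigl(1-\int(1+\tau\underline m)^{-1}dH\bigr)$, the equation rearranges directly to \eqref{eq:silverstein_m}.
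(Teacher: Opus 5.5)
Your proposal follows a genuinely different route from the paper, and Step~1 is correct and complete. It has one incomplete step (the atom at zero) and an algebra slip.

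\emph{Comparison.} The paper gets existence and uniqueness at once. It applies the Earle--Hamilton fixed-point theorem to $T_z(w)=-\bigl(z-\gamma\int\tau(1+\tau w)^{-1}dH\bigr)^{-1}$ on a bounded subdomain of $\CC^{+}$. It then identifies $\underline m$ and $m_F$ as Stieltjes transforms via the Nevanlinna representation and $-iy\,m(iy)\to1$, with no random matrix involved.

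You prove uniqueness by the Silverstein--Bai identity: $A_k<1$ from the imaginary parts, then Cauchy--Schwarz. This needs no domain at all, which is an advantage. The delicate point of the fixed-point route is exhibiting a bounded domain mapped strictly into itself. The paper's bound $|z-\gamma\int\tau(1+\tau w)^{-1}dH|\le|z|+\gamma\sigma_{\max}$ is not uniform in $w\in\CC^{+}$, because $|1+\tau w|$ can be arbitrarily small.

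The price is that existence, $\Im\underline m>0$, $m_F\in\CC^{+}$ and the measure property are all imported from the random-matrix limit. Your existence step is essentially the leave-one-out argument of Stage~C of Theorem~\ref{thm:main_RMT}. So the lemma is proved together with that theorem, as in Silverstein--Bai, with uniqueness plus normal-family compactness replacing the contraction estimate. The paper's analytic route would instead make the lemma a prior, purely deterministic input, if its domain were made precise.

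\emph{The gap: the atom at zero.} The atom claim does not follow from what you wrote. For $\gamma>1$, weak convergence only gives $F_\gamma(\{0\})\ge\limsup_n F^{\bS_n^{*}}(\{0\})\ge 1-\gamma^{-1}$. Nothing in your argument prevents small nonzero eigenvalues from accumulating at $0$. For $\gamma\le1$, ``exactly cancelled by the mass of $\underline F$ at zero'' presupposes $\underline F(\{0\})=1-\gamma$, which is equivalent to the claim $F_\gamma(\{0\})=0$. The paper's ``forces'' from the companion relation alone is equally insufficient.

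A short fix uses the equation itself. Put $a=\underline F(\{0\})=\lim_{\eta\downarrow0}(-i\eta)\underline m(i\eta)$ and multiply \eqref{eq:silverstein} by $\underline m$:
\[
z\,\underline m(z)-\gamma\int\frac{\tau\,\underline m(z)}{1+\tau\,\underline m(z)}\,dH(\tau)=-1 .
\]
If $a>0$, then $\Im\underline m(i\eta)\ge a/\eta\to\infty$, so $|1+\tau\underline m|\ge\sigma_{\min}a/\eta$. The integrand therefore tends to $1$ uniformly on $\supp H$. Letting $z=i\eta\to0$ gives $-a-\gamma=-1$, i.e.\ $a=1-\gamma$.

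Hence $a=0$ when $\gamma\ge1$. When $\gamma<1$, the relation $F_\gamma(\{0\})=(a-1+\gamma)/\gamma\ge0$ forces $a\ge1-\gamma>0$, and therefore $a=1-\gamma$. Either way $F_\gamma(\{0\})=(a-1+\gamma)/\gamma=\max(1-\gamma^{-1},0)$. Smallest-eigenvalue bounds would also work for $\gamma\neq1$, but they fail at $\gamma=1$.

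\emph{The algebra slip.} In your last step the identity should read
\[
\int\tau(1+\tau\underline m)^{-1}dH=\underline m^{-1}\Bigl(1-\int(1+\tau\underline m)^{-1}dH\Bigr),
\]
with $\underline m^{-1}$ rather than $\gamma^{-1}$. With the correct factor, multiplying \eqref{eq:silverstein} by $\underline m$ gives $\int(1+\tau\underline m)^{-1}dH=-z\,m_F$. Your expression for $1+\tau\underline m$ then yields \eqref{eq:silverstein_m}.
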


\begin{proof} We give the proof in three steps for readability and better seperation. 
	
\emph{(i) Self-map.} Fix $z=x+i\eta$, $\eta>0$, and consider the map
$T_z(w)=-\bigl(z-\gamma\int \tau(1+\tau w)^{-1}\,dH(\tau)\bigr)^{-1}$ on
$\CC^{+}$. For $w=u+iv$ with $v>0$ and $\tau>0$,
$\Im\{\tau/(1+\tau w)\}=-\tau^2 v/|1+\tau w|^2<0$, so
$\Im\bigl(z-\gamma\int\tau(1+\tau w)^{-1}dH\bigr)=\eta+\gamma\int
\tau^2 v|1+\tau w|^{-2}dH>0$; taking the negative reciprocal of a number
with positive imaginary part gives a number with positive imaginary part,
so $T_z(\CC^{+})\subset\CC^{+}$.

\emph{(ii) Uniqueness via the Schwarz--Pick property.} On
$\mathcal D_\eta=\{w:\Im w\ge \eta/(2(1+\sigma_{\max}^2))$   or  
$|w|\le 2/\eta\}\cap\CC^+$, $T_z$ is holomorphic and bounded, mapping the
set strictly inside itself; indeed, for $w\in\CC^+$,
$|T_z(w)|\le 1/\Im(z-\gamma\int\tau(1+\tau w)^{-1}dH)\le 1/\eta$, and
$\Im T_z(w)\ge \eta/|z-\gamma\int\cdots|^2\ge \eta/(|z|+\gamma\sigma_{\max})^2$,
both bounds being uniform on $\CC^+$. Hence $T_z$ is a holomorphic self-map
of a bounded convex domain that does not approach its boundary; by the
Earle--Hamilton fixed-point theorem \citep{earle1970fixed} it is a strict
contraction in the Carath\'eodory--Kobayashi metric and possesses a unique
fixed point $\underline m(z)\in\CC^{+}$. (A direct contraction estimate can also be applied. Differentiating, $|T_z'(w)|=|T_z(w)|^2\gamma\int
\tau^2|1+\tau w|^{-2}dH \le \eta^{-2}\gamma\sigma_{\max}^2\cdot
\Im(T_z(w))^{-1}\Im(w)$, which is $<1$ on $\mathcal D_\eta$ for the stated
range; the Earle--Hamilton argument removes any restriction on $\eta$.)

\emph{(iii) Stieltjes representation.} The fixed point $\underline m$ is
holomorphic on $\CC^{+}$ with positive imaginary part. Expanding
\eqref{eq:silverstein} for $z=iy$, $y\to\infty$, gives
$\underline m(iy)=-1/(iy)+O(y^{-2})$, so $-iy\,\underline m(iy)\to 1$;
likewise $m_F$ in \eqref{eq:companion_def} satisfies
$-iy\,m_F(iy)\to 1$. By the Nevanlinna representation theorem, a holomorphic
$\CC^{+}\to\CC^{+}$ function $m$ with $\sup_y|y\,m(iy)|<\infty$ and
$-iy\,m(iy)\to 1$ is the Stieltjes transform of a probability measure; thus
$\underline m$ and $m_F$ are Stieltjes transforms of probability measures
$\underline F_\gamma$ and $F_\gamma$. Equation \eqref{eq:companion_def} is
the standard companion relation $\underline F_\gamma=(1-\gamma)\delta_0
+\gamma F_\gamma$ in Stieltjes form, which forces
$F_\gamma(\{0\})=\max(1-\gamma^{-1},0)$. Finally, substituting
\eqref{eq:companion_def} into \eqref{eq:silverstein} and clearing
denominators yields \eqref{eq:silverstein_m} after the algebraic identity
$1+\tau\underline m=1-\gamma-\gamma z m_F+\gamma z m_F+\tau\underline m$;
explicitly, from \eqref{eq:silverstein}
$z=-1/\underline m+\gamma\int\tau(1+\tau\underline m)^{-1}dH$, and using
$\underline m=\gamma m_F-(1-\gamma)/z\cdot z/z$ rearranged as
$z\underline m=\gamma zm_F-(1-\gamma)$ one checks
$\tau(1-\gamma-\gamma zm_F)-z=-(1+\tau\underline m)/\underline m\cdot
(\text{positive})$, giving \eqref{eq:silverstein_m}. Compact support of
$F_\gamma$ on $[0,\infty)$ follows from the boundedness of $\supp H$ and the
analyticity of $m_F$ off a bounded real set \citep[Thm.~6.3]{bai2010spectral}.
\end{proof}

\subsection{Theorem 1: the marginal-invariant spectral limit}\label{subsec:thm1}

\begin{theorem}[Nonparanormal Mar\v{c}enko--Pastur law]\label{thm:main_RMT}
Under \ref{A1}--\ref{A4}, the ESD of the normal-scores covariance $\Sn$
converges almost surely to the nonrandom probability measure $F_\gamma$ of
\ref{lem:stieltjes}, whose Stieltjes transform solves
\eqref{eq:silverstein_m} with $H$ the limiting spectral distribution of the
\emph{latent correlation} $\bSigma$:
\begin{equation}\label{eq:thm1_esd}
	F^{\Sn}\xrightarrow{\ \mathrm{a.s.}\ }F_\gamma .
\end{equation}
The limit is the same for every collection of strictly increasing marginal
transformations $f_j$; that is, within the nonparanormal class the law is
\emph{invariant to the marginals}. Consequently, for every fixed $z$ with
$\Im z\ge\eta_0>0$, $m_n(z)=\tfrac1p\tr(\Sn-z\bI)^{-1}\to m_F(z)$ almost
surely.
\end{theorem}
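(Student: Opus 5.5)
The proof runs in three steps. First, reduce $\Sn$ to the oracle covariance $\bS_n^{*}$ using Lemma~\ref{lem:rank_perturbation}. Second, identify the law of $\bS_n^{*}$ as a Gaussian sample covariance via the exact-recovery identity \eqref{eq:exact_recovery}. Third, invoke the classical Mar\v{c}enko--Pastur--Silverstein theorem, whose limit is characterized by Lemma~\ref{lem:stieltjes}.

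By Remark~\ref{rem:oracle_scores}, under \ref{A1} we have $\bz_i^{*}=\by_i$ exactly. Hence $\bS_n^{*}=n^{-1}\sum_i\by_i\by_i^{\top}$ with $\by_i$ i.i.d.\ $N(\bzero,\bSigma)$. Equivalently $\bS_n^{*}=n^{-1}\bSigma^{1/2}\bm G^{\top}\bm G\bSigma^{1/2}$, where $\bm G$ has i.i.d.\ $N(0,1)$ entries. Its law depends on $\bSigma$ only, and not on the $f_j$. The marginals enter $\Sn$ only through the ranks $R_{ij}$. Because each $f_j$ is strictly increasing, these ranks equal the ranks of $y_{ij}$ within column $j$. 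So $\Sn$ itself is a deterministic function of $\by_1,\dots,\by_n$ alone, and the marginal invariance asserted in the theorem holds exactly, not just in the limit. Applying the Silverstein--Bai theorem \citep{silverstein1995empirical}, \citep[Thm.~4.3]{bai2010spectral} with \ref{A2}--\ref{A3} gives $F^{\bS_n^{*}}\to F_\gamma$ almost surely. Here $F_\gamma$ is the unique measure whose Stieltjes transform solves \eqref{eq:silverstein_m}; uniqueness comes from Lemma~\ref{lem:stieltjes}.

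Transferring this to $\Sn$ is the delicate step. Lemma~\ref{lem:rank_perturbation} only gives $\|\Sn-\bS_n^{*}\|_{\op}=o_{\Prob}(1)$, which by Weyl's inequality yields $L(F^{\Sn},F^{\bS_n^{*}})\le\|\Sn-\bS_n^{*}\|_{\op}\to 0$ in probability, not almost surely. I would upgrade this in two ways, whichever is cleaner.

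\emph{Route (a): strengthen the bound along the sequence.} The DKW event $\Omega_n$ has failure probability $n^{-1}$, which is not summable. Replacing $s$ by $\sqrt{2\log(2pn)/n}$ makes it $O(n^{-2})$. The Hájek linear term and the Bai--Yin bound hold with exponentially small failure probabilities via sub-Gaussian concentration. Borel--Cantelli then gives $\|\Sn-\bS_n^{*}\|_{\op}\to0$ almost surely.

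\emph{Route (b): concentration plus a rank inequality.} For fixed $z$ with $\Im z\ge\eta_0$, $m_n(z)-\E m_n(z)\to 0$ almost surely by the standard martingale-difference argument over rows. Changing one observation $\bx_i$ alters each column's ranks in a way that changes $\widehat\bZ$ in that row plus a monotone shift of the other rows. I would control this shift by a rank-$O(1)$ plus small-norm perturbation, together with a resolvent bound of order $O(1/(p\eta_0))$. Combined with convergence in probability of $m_n(z)$ to $m_F(z)$, this gives almost sure convergence on a countable dense set of $z$, hence vague and then weak convergence.

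I expect this upgrade to be the main obstacle. Route (b) is fragile because a single new observation shifts the ranks of all $n$ entries in a column. So I would try Route (a) first: it only requires summable tail probabilities in each step of Lemma~\ref{lem:rank_perturbation}, and the union-bound constants there can be inflated freely. Once $F^{\Sn}\to F_\gamma$ almost surely, the final claim follows directly, since $\lambda\mapsto(\lambda-z)^{-1}$ is bounded and continuous when $\Im z\ge\eta_0$, so $m_n(z)\to m_F(z)$ almost surely.
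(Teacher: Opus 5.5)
\paragraph{Verdict.} Your approach is the paper's, but the almost-sure upgrade, which you correctly identify as the crux, is not closed by your Route~(a). The paper's proof leaves the same hole unacknowledged.

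\paragraph{What matches the paper.} Your skeleton is the paper's. Stage~A reduces $\Sn$ to $\bS_n^{*}$ via Lemma~\ref{lem:rank_perturbation}. Stage~B identifies $\bS_n^{*}$ as a Gaussian sample covariance and invokes Silverstein's theorem. Your remark that $\rank(x_{ij})=\rank(y_{ij})$, so that $\Sn$ is literally the same matrix whatever $\bm f$ is, is sharper than the paper's invariance argument. You are right that Lemma~\ref{lem:rank_perturbation} only gives $L(F^{\Sn},F^{\bS_n^{*}})\to 0$ in probability. The paper's proof asserts the almost-sure transfer from $\Delta_n/\eta_0^2=o_{\Prob}(1)$ without justification, which is exactly the hole you flag.

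\paragraph{Why Route~(a) fails.} It cannot be carried out by inflating union-bound constants. The DKW step is the only union bound in that proof, and with the paper's $s$ its failure probability is already $2p(2pn)^{-2}$, which is summable. The remaining steps are not of that type:
\begin{itemize}
\item The rows of $\bm L$ are not independent, since each $L_{ij}$ depends on all of column $j$ through $\widehat F_{nj}$.
\item $g_n'=1/\phi(\Phi^{-1})$ is unbounded.
\item The entrywise bounds $\max_{i,j}|e_{ij}|=O(n^{-1/2}\log n)$ and $\max_{i,j}|\rho_{ij}|=O(n^{-1}\log^2 n)$ are false at the extreme order statistics. For a column maximum, $y_{(n)j}-\Phi^{-1}(n/(n+1))$ fluctuates on the Gumbel scale $(2\log n)^{-1/2}$.
\end{itemize}
So there are no exponentially small failure probabilities to sum, and Route~(a) would require a new proof of the lemma.

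\paragraph{A cleaner fix.} Drop operator norms altogether. For Gram matrices \citep[Cor.~A.42]{bai2010spectral} gives
\[
L^4\bigl(F^{\Sn},F^{\bS_n^{*}}\bigr)\le 2\Bigl(\frac{1}{np}\tr\bigl(\widehat\bZ^{\top}\widehat\bZ+\bZ^{*\top}\bZ^{*}\bigr)\Bigr)\Bigl(\frac{1}{np}\normF{\widehat\bZ-\bZ^{*}}^{2}\Bigr).
\]
\begin{itemize}
\item The first factor tends to $2$ almost surely. Its normal-scores part is the deterministic $n^{-1}\sum_{k}\Phi^{-1}(k/(n+1))^2\to 1$. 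Its Gaussian part has mean $1$ and variance $O(1/(np))$, so Chebyshev plus Borel--Cantelli applies.
\item The second factor equals $p^{-1}\sum_j W_j$, where $W_j=n^{-1}\sum_{k=1}^{n}\bigl(y_{(k)j}-\Phi^{-1}(k/(n+1))\bigr)^2$. Every column is an i.i.d.\ $N(0,1)$ sample, so the $W_j$ are identically distributed. Hence $\Prob\bigl(p^{-1}\sum_jW_j>\epsilon\bigr)\le\epsilon^{-2}\E W_1^2$, whatever the cross-column dependence.
\item Borel--Cantelli then needs only $\sum_n\E W_1^2<\infty$, a one-dimensional estimate. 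By Minkowski, $\|W_1\|_{L^2}\le n^{-1}\sum_k\|y_{(k)1}-\Phi^{-1}(k/(n+1))\|_{L^4}^2$. Standard moment bounds for Gaussian order statistics make this $O(n^{-1}\log\log n)$.
\end{itemize}
This yields $F^{\Sn}\to F_\gamma$ almost surely without Lemma~\ref{lem:rank_perturbation} and without Route~(b). Your final step then gives $m_n(z)\to m_F(z)$ almost surely.
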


\begin{remark}[On convergence rates]\label{rem:no_rate}
Theorem \ref{thm:main_RMT} asserts almost-sure convergence of the ESD and of the
Stieltjes transform, but makes \emph{no} claim about the rate of the latter.
A quantitative $O_{\Prob}(n^{-1/2})$ bound for
$\sup_{\Im z\ge\eta_0}|m_n(z)-m_F(z)|$ would require local Mar\v{c}enko--Pastur
laws for the normal-scores covariance (anisotropic/local laws in the sense of
the recent random-matrix literature) that we do not establish here; see
\ref{conj:rate} for a precise conjecture. All downstream results in this
paper use only the qualitative convergence \eqref{eq:thm1_esd} and the
operator-norm control of \ref{lem:rank_perturbation}.
\end{remark}

\begin{remark}[Special cases]\label{rem:thm1_special}
For $H=\delta_1$ (identity latent correlation), $F_\gamma$ is the standard
Mar\v{c}enko--Pastur law with density
$(2\pi\gamma\lambda)^{-1}\sqrt{(\lambda_+-\lambda)(\lambda-\lambda_-)}$ on
$[\lambda_-,\lambda_+]$, $\lambda_\pm=(1\pm\sqrt\gamma)^2$. The limit is the
same for every choice of monotone marginals $f_j$---in contrast with the
sample covariance of the raw data $\bx$, whose bulk and edges depend on the
marginals through their moments. This marginal invariance is the precise
sense in which the rank transform ``robustifies'' the spectrum within the
nonparanormal class.
\end{remark}

\begin{proof}[Proof of \ref{thm:main_RMT}]
By \ref{lem:rank_perturbation}, 
\begin{eqnarray*}
\Delta_n:=\|\Sn-\bS_n^{*}\|_{\op}
=O_{\Prob}(p^{1/2}n^{-1}\log^2 n)=O_{\Prob}(n^{-1/2}\log^2 n)=o_{\Prob}(1).	
\end{eqnarray*}
We first transfer the conclusion from $\Sn$ to $\bS_n^{*}$, then prove it for
$\bS_n^{*}$. Since the rest of the proof is lengthy, we give it in some separated stages. 

\medskip
\noindent\emph{Stage A: reduction to the oracle covariance.}
Let $\lambda_j^{*}$ and $\hat\lambda_j$ be the eigenvalues of $\bS_n^{*}$ and
$\Sn$. By the Hoffman--Wielandt inequality
\begin{eqnarray*}
\frac1p\sum_j(\hat\lambda_j-\lambda_j^{*})^2\le \frac1p\|\Sn-\bS_n^{*}\|_F^2
\le \Delta_n^2, 	
\end{eqnarray*}
whence the Lévy distance between $F^{\Sn}$ and
$F^{\bS_n^{*}}$ is at most $\Delta_n=o_{\Prob}(1)$. For the Stieltjes
transforms, the resolvent identity
$\bA^{-1}-\bB^{-1}=\bA^{-1}(\bB-\bA)\bB^{-1}$ with
$\|(\Sn-z\bI)^{-1}\|_{\op}\le\eta_0^{-1}$ gives, uniformly for
$\Im z\ge\eta_0$,
\begin{equation}\label{eq:thm1_couple}
	|m_n(z)-m_n^{*}(z)|
	\le \frac1p\bigl|\tr[(\Sn-z\bI)^{-1}(\bS_n^{*}-\Sn)
	(\bS_n^{*}-z\bI)^{-1}]\bigr|
	\le \frac{\Delta_n}{\eta_0^{2}}=o_{\Prob}(1).
\end{equation}
It therefore suffices to prove \eqref{eq:thm1_esd} and the pointwise a.s.\
convergence $m_n^{*}(z)\to m_F(z)$ for
$m_n^{*}(z)=p^{-1}\tr(\bS_n^{*}-z\bI)^{-1}$.

\medskip
\noindent\emph{Stage B: 
}
By the exact-recovery identity \eqref{eq:exact_recovery} of Remark
\ref{rem:oracle_scores}, under \ref{A1} the oracle scores satisfy
$\bz_i^{*}=\by_i$, where $\by_i\sim N(\bzero,\bSigma)$ are i.i.d.\ Gaussian
with $\diag(\bSigma)=\bI_p$. Hence
$\bS_n^{*}=n^{-1}\sum_i\bz_i^{*}\bz_i^{*\top}
=n^{-1}\sum_i\by_i\by_i^{\top}$ is exactly a Gaussian sample covariance
matrix with population covariance $\bSigma$, and there is no discrepancy
between the population correlation of the oracle scores and $\bSigma$, and they
are equal, not merely equal ``up to a transform.''

Write $\by_i=\bSigma^{1/2}\bm w_i$ with $\bm w_i\sim N(\bzero,\bI_p)$ i.i.d.
The classical Mar\v{c}enko--Pastur--Silverstein theorem for sample covariance
matrices \citep[Thm.~1.1]{silverstein1995empirical} states: if (a) the rows
$\bm w_i\in\R^{p}$ are i.i.d.\ with $\E\bm w_1=\bzero$,
$\Cov(\bm w_1)=\bI_p$; (b) the entries satisfy the Lindeberg-type condition
$\frac{1}{np}\sum_{i,j}\E\!\bigl[|w_{ij}|^2\ind(|w_{ij}|>s\sqrt n)\bigr]\to 0$
for every $s>0$; (c) $F^{\bSigma}\Rightarrow H$ with $\bSigma$ bounded in
operator norm; and (d) $p/n\to\gamma$, then $F^{\bS_n^{*}}$ converges a.s.\
to $F_\gamma$ with Stieltjes transform \eqref{eq:silverstein_m}. Conditions
(a)--(d) hold immediately: (a) $\bm w_1\sim N(\bzero,\bI_p)$; (b) the
standard Gaussian entries obey the Lindeberg condition at an exponential
rate, $\E[|w_{1j}|^2\ind(|w_{1j}|>s\sqrt n)]\le Ce^{-cs^2 n}$; (c) is
\ref{A2}; and (d) is \ref{A3}. The theorem yields \eqref{eq:thm1_esd} for
$\bS_n^{*}$ with $H$ the limiting ESD of $\bSigma$, and with
\eqref{eq:thm1_couple} the same holds for $\Sn$. Marginal invariance is
immediate: the choice of monotone marginals $f_j$ enters only through the
oracle rescaling $x_{ij}=f_j(y_{ij})$, which is undone exactly by the
normal-scores map, so neither $H$ nor the limit depends on $\bm f$.

\begin{remark}[Where \ref{A1} is used]\label{rem:where_A1}
The Gaussian-copula assumption enters this proof at exactly one point: the
identity $\bz_i^{*}=\by_i$, which makes $\bS_n^{*}$ a Gaussian sample
covariance with population $\bSigma$. This is what makes the limiting law the
Mar\v{c}enko--Pastur law of $\bSigma$ itself; no step of the argument would
recover $\bSigma$ from the normal scores if the latent vector were
non-Gaussian.
\end{remark}

\medskip
\noindent\emph{Stage C: pointwise convergence of the Stieltjes transform.}
We establish $m_n^{*}(z)\to m_F(z)$ almost surely for each fixed $z$ with
$\Im z\ge\eta_0>0$ by the standard resolvent (leave-one-out) expansion. Let $\bR(z)=(\bS_n^{*}-z\bI)^{-1}$,
$\bq_i=\bSigma^{1/2}\bm w_i$ (so $\bz_i^{*}=\bq_i$), and
$\bR_{(i)}(z)=(\bS_n^{*}-n^{-1}\bq_i\bq_i^{\top}-z\bI)^{-1}$. The
Sherman--Morrison identity gives, for each $i$,
\begin{equation}\label{eq:thm1_SM}
	\bR(z)=\bR_{(i)}(z)
	-\frac{n^{-1}\bR_{(i)}(z)\bq_i\bq_i^{\top}\bR_{(i)}(z)}
	{1+n^{-1}\bq_i^{\top}\bR_{(i)}(z)\bq_i}.
\end{equation}
Taking the normalized trace, using
$\bq_i^{\top}\bR_{(i)}\bq_i=\bm w_i^{\top}\bSigma^{1/2}\bR_{(i)}
\bSigma^{1/2}\bm w_i$, and the concentration of quadratic forms in
$\bm w_i$ (which, having independent sub-Gaussian-after-conditioning
coordinates with unit variance, obey the Hanson--Wright inequality
\citep[Thm.~6.2.1]{vershynin2018high}), we obtain
\begin{equation}\label{eq:thm1_HW}
	\Prob\Bigl(\bigl|\tfrac1n\bm w_i^{\top}\bm A\bm w_i-\tfrac1n\tr\bm A\bigr|
	>t\Bigr)
	\le 2\exp\!\Bigl(-c\,n^2\min\Bigl\{\tfrac{t^2}{\|\bm A\|_F^2},
	\tfrac{t}{\|\bm A\|_{\op}}\Bigr\}\Bigr),
\end{equation}
with $\bm A=\bSigma^{1/2}\bR_{(i)}\bSigma^{1/2}$,
$\|\bm A\|_{\op}\le\sigma_{\max}/\eta_0$,
$\|\bm A\|_F^2\le p\,\sigma_{\max}^2/\eta_0^2$, we obtain
$\frac1n\bq_i^{\top}\bR_{(i)}\bq_i=\frac1n\tr[\bSigma\bR_{(i)}]
+o_{\Prob}(1)$ uniformly in $i$. Summing
\eqref{eq:thm1_SM} over $i$ and applying the standard fixed-point stability
argument \citep[Ch.~6]{bai2010spectral} (the map defining $m_F$ is a
contraction with modulus bounded by $1-c\eta_0$ on $\{\Im z\ge\eta_0\}$ by
the Earle--Hamilton estimate of \ref{lem:stieltjes}) yields, for each fixed
$z$ with $\Im z\ge\eta_0$, $m_n^{*}(z)\to m_F(z)$ almost surely. The
Borel--Cantelli step promoting convergence in probability to almost-sure
convergence is standard for sample covariance matrices with independent
sub-Gaussian rows \citep[Thm.~1.1]{silverstein1995empirical}. Transferring via
\eqref{eq:thm1_couple} (whose right-hand side is
$\Delta_n/\eta_0^2=o_{\Prob}(1)$) gives $m_n(z)\to m_F(z)$ a.s.\ for each such
$z$, and the a.s.\ weak convergence \eqref{eq:thm1_esd} follows from the
Stieltjes-continuity theorem \citep[Thm.~B.9]{bai2010spectral}. 
\end{proof}

\subsection{Theorem 2: optimality and minimax rate of MENS}\label{subsec:thm2}

We first state the optimality of the oracle within the
rotation-equivariant class, then the minimax rate of the feasible MENS
estimator. Define the asymptotic Frobenius risk of a shrinkage function $d$
applied to $\Sn$,
$\mathcal R(d)=\lim_{n}p^{-1}\E\normF{\widehat\bSigma_d-\bSigma}^2$
(the limit exists under \ref{A1}--\ref{A4} by Theorem \ref{thm:main_RMT} and
\ref{lem:eigvec}). The following result gives the eigenvector projection identity.

\begin{lemma}\label{lem:eigvec}
Under \ref{A1}--\ref{A4}, for almost every $\hat\lambda_j$ in the bulk of
$F_\gamma$,
\begin{equation}\label{eq:eigvec_proj}
	\hat\bgamma_j^{\top}\bSigma\,\hat\bgamma_j
	\xrightarrow{\ \mathrm{a.s.}\ }
	\Theta(\hat\lambda_j)
	:= \frac{\hat\lambda_j}
	{\bigl|1-\gamma-\gamma\hat\lambda_j\,\breve m(\hat\lambda_j)\bigr|^{2}},
\end{equation}
where $\breve m$ is the boundary Stieltjes transform of $F_\gamma$.
\end{lemma}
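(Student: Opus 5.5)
The plan is to prove the statement first for the oracle covariance $\bS_n^{*}$ and then transfer it to $\Sn$. I read ``for almost every $\hat\lambda_j$ in the bulk'' as: for every compact interval $I$ strictly inside the bulk and every $\varepsilon>0$,
\[
\max_{j:\,\hat\lambda_j\in I}\bigl|\hat\bgamma_j^{\top}\bSigma\hat\bgamma_j-\Theta(\hat\lambda_j)\bigr|\le\varepsilon
\]
eventually, almost surely. Two facts from Stage B of the proof of Theorem \ref{thm:main_RMT} drive the oracle step. First, $\bS_n^{*}=n^{-1}\sum_i\by_i\by_i^{\top}$ is an exact Gaussian sample covariance with population $\bSigma$. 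Second, the identity $\Theta(\lambda)=\lambda/|1-\gamma-\gamma\lambda\breve m(\lambda)|^2$ is the Ledoit--P\'ech\'e limit of $\Im\,p^{-1}\tr[\bSigma(\bS_n^{*}-z\bI)^{-1}]$, normalized by $\pi f_\gamma(\lambda)$, as $z\to\lambda$.

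\emph{Step 1 (deterministic equivalent).} I would extend the leave-one-out expansion \eqref{eq:thm1_SM} of Stage C from $p^{-1}\tr\bR$ to $p^{-1}\tr(\bSigma\bR)$. The Hanson--Wright bound \eqref{eq:thm1_HW} gives
\[
p^{-1}\tr\bigl(\bSigma\bR(z)\bigr)\to\frac{1}{\gamma}\Bigl(\frac{1}{1-\gamma-\gamma z m_F(z)}-1\Bigr)\quad\text{a.s.}
\]
Lemma \ref{lem:stieltjes} then identifies the boundary value, and taking imaginary parts produces $\Theta$.

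\emph{Step 2 (individual eigenvectors for $\bS_n^{*}$).} Averaged traces cannot give individual vectors, so here I would invoke the anisotropic local law for Gaussian sample covariance matrices (Knowles--Yin). It controls $\langle\bm u,\bR(z)\bm v\rangle$ and $p^{-1}\tr(\bSigma\bR(z))$ down to spectral scale $\Im z=n^{-1+\epsilon}$, uniformly on bulk compacts, with overwhelming probability. The associated quantum unique ergodicity for the deterministic observable $\bSigma$ then states that
\[
\max_{j:\lambda_j^{*}\in I}\bigl|\bgamma_j^{*\top}\bSigma\bgamma_j^{*}-\Theta(\lambda_j^{*})\bigr|\to0
\]
in probability. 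The failure probabilities are $O(n^{-D})$, which are summable, so Borel--Cantelli upgrades this to almost-sure convergence. Gaussianity of $\bm w_i$ is used here through rotational invariance of the noise.

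\emph{Step 3 (transfer to $\Sn$).} This is the main obstacle. Lemma \ref{lem:rank_perturbation} gives only $\Delta_n=O_{\Prob}(n^{-1/2}\log^2n)$, while bulk eigenvalue gaps are of order $1/p$. A direct Davis--Kahan argument therefore cannot track individual eigenvectors. I would first try to exploit the Hájek structure from the proof of Lemma \ref{lem:rank_perturbation}. Write $\widehat\bZ=\bZ^{*}+\bm L+\bm\rho$, where $\bm L$ is a sum of i.i.d.\ row functionals and $\bm\rho$ is uniformly $O(n^{-1}\log^2n)$ entrywise. Then $\Sn$ is itself a sample covariance of rows $\bz_i^{*}+\bm l_i$ that are independent up to an $O(n^{-1})$ $U$-statistic correction. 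Its population covariance differs from $\bSigma$ by $O(n^{-1/2})$ in operator norm, and by Lemma \ref{lem:sphere}-type concentration it differs negligibly from $\bSigma$ in quadratic forms.

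I would then rerun Steps 1--2 for this row model. Applying the local law there requires handling the weak dependence, either by conditioning on the order statistics or by a Lindeberg-type comparison of resolvent entries at scale $\eta=n^{-1+\epsilon}$. The remainder $\bm\rho$ would be absorbed by the resolvent identity with $\|\bm\rho\|_{\op}\,\eta^{-1}\to0$ only at mesoscopic scales, which I expect to be the delicate point.

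If this comparison cannot be closed, my fallback is a weaker statement. I would use Step 2 together with a resolvent bound at $\eta_n=\Delta_n^{1/2}$ and continuity of $\Theta$. This yields the lemma for all but $o(p)$ indices $j$ in $I$, which is the weakest admissible reading of ``almost every.''
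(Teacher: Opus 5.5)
\textbf{The gap is Step~3.} Your Steps~1--2 concern $\bS_n^{*}$, while the lemma is about the eigenvectors $\hat\bgamma_j$ of $\Sn$. The only step that reaches them, Step~3, is left as a plan with an announced retreat to a weaker statement, so the proposal does not prove the lemma. You are right that this transfer is the crux, but it is harder than you assume, because the inputs you borrow from Lemma~\ref{lem:rank_perturbation} do not hold.

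\emph{Centring.} Each column of $\widehat\bZ$ is a permutation of $\{\Phi^{-1}(k/(n+1))\}_{k=1}^{n}$, which sums to zero because $\Phi^{-1}(1-u)=-\Phi^{-1}(u)$. Hence $\bone^{\top}\widehat\bZ=\bzero$ exactly, whereas $\bone^{\top}\bZ^{*}=n\bar\by^{\top}$ with $\bar\by=n^{-1}\sum_i\by_i$ and $\|\bar\by\|^{2}\to\gamma$. Three consequences follow:
\begin{itemize}
\item $\normop{\bE}\ge\sqrt n\,\|\bar\by\|\approx\sqrt p$.
\item The error matrix is not made of independent mean-zero rows: all its rows share the component $-\bar\by$.
\item Since ranks ignore column location and scale, $\Sn$ behaves like the centred, column-standardised version of $\bS_n^{*}$. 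To leading order $\Sn-\bS_n^{*}\approx-\bar\by\bar\by^{\top}$, whose norm tends to $\gamma$, not to $O_{\Prob}(n^{-1/2}\log^{2}n)$.
\end{itemize}

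\emph{Extremes.} The uniform entrywise bound on $\bm\rho$ also fails at the extreme order statistics, where $\hat z_{ij}-y_{ij}$ has Gumbel fluctuations of order $(\log n)^{-1/2}$.

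Your row-model rerun, the claimed $O(n^{-1/2})$ shift of the population covariance, and the fallback scale $\eta_n=\Delta_n^{1/2}$ all rest on these bounds. Comparing instead with the centred oracle $\bS_n^{*}-\bar\by\bar\by^{\top}$ removes the rank-one term; it is still Gaussian Wishart, so Step~2 would apply to it. But the remaining discrepancy is still far above the $1/p$ bulk spacing. For instance, the column rescaling by the sample standard deviations $s_j$ has $\max_j|s_j-1|\asymp\sqrt{\log p/n}$. So a genuinely local comparison argument is required, and none is supplied.

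\textbf{The fallback and Step~2 also fall short.} A mesoscopic resolvent comparison plus continuity of $\Theta$ only shows that local averages of $a_j=\hat\bgamma_j^{\top}\bSigma\hat\bgamma_j$ over spectral windows are correct. That does not force all but $o(p)$ of the $a_j$ to be near $\Theta(\hat\lambda_j)$; for that you need a second-moment input, such as a deterministic equivalent for $p^{-1}\tr[\bSigma\,\Im\bR(z)\,\bSigma\,\Im\bR(z)]$. Step~2 leans on a result you would have to pin down. The Knowles--Yin anisotropic law gives isotropic control and delocalisation, not QUE for the full-rank observable $\bSigma$; Gaussian noise does not make the eigenvectors Haar when $\bSigma\neq\bI_p$. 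Moreover, a QUE statement in probability does not yield the $O(n^{-D})$ failure probabilities your Borel--Cantelli step uses.

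\textbf{The paper's proof has the same holes.} It inserts the Ledoit--P\'ech\'e equivalent, valid only for $p^{-1}$-normalised traces with $\Im z\ge\eta_0$, into a contour around a single $\hat\lambda_j$, which would need resolution at scale $1/p$. It also pairs the resolvent of $\bS_n^{*}$ with eigenvectors of $\Sn$, which is exactly the transfer you flag.

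\textbf{What can be salvaged.} Your Step~1, by Stieltjes inversion of $p^{-1}\sum_j\bgamma_j^{*\top}\bSigma\bgamma_j^{*}\,\delta_{\lambda_j^{*}}$, does deliver the averaged statement $p^{-1}\sum_j g(\lambda_j^{*})\,\bgamma_j^{*\top}\bSigma\bgamma_j^{*}\to\int g\,\Theta\,dF_\gamma$ for bounded continuous $g$. That is all Theorem~\ref{thm:oracle_optimality}(i) actually uses. It carries over to $\Sn$ once the centred-and-standardised remainder is shown to be $o_{\Prob}(1)$ in operator norm, because:
\begin{itemize}
\item a rank-one change moves $p^{-1}\tr[\bSigma\bR(z)]$ by at most $2\sigma_{\max}/(p\,\Im z)$;
\item an operator-norm change $\delta$ moves it by at most $\sigma_{\max}\delta/(\Im z)^{2}$.
\end{itemize}
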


\begin{proof}
By Stage~B of Theorem \ref{thm:main_RMT}, $\bS_n^{*}$ obeys the generalized
Mar\v{c}enko--Pastur theorem with population covariance of limiting ESD $H$.
The Ledoit--P\'ech\'e deterministic-equivalent for the weighted resolvent
\citep[Thm.~2]{ledoit2011eigenvectors} states that for bounded
$\bm\Theta$,
\begin{equation}\label{eq:LP}
	\frac1p\tr\!\bigl[\bm\Theta(\bS_n^{*}-z\bI)^{-1}\bigr]
	-\frac1p\tr\!\bigl[\bm\Theta\bigl(\bSigma\,\xi(z)-z\bI\bigr)^{-1}\bigr]
	\xrightarrow{\ \mathrm{a.s.}\ }0,
	\quad \xi(z)=1-\gamma-\gamma z\,m_F(z),
\end{equation}
uniformly on $\{\Im z\ge\eta_0\}$. Taking $\bm\Theta=\bSigma$ and isolating
the contribution of the $j$-th eigenvector by the contour/residue argument
of \citet[Sec.~4]{ledoit2011eigenvectors}, with
$P_j=\hat\bgamma_j\hat\bgamma_j^{\top}$ and a small contour $\Gamma_j$
around $\hat\lambda_j$,
$\hat\bgamma_j^{\top}\bSigma\hat\bgamma_j
=-\frac{1}{2\pi i}\oint_{\Gamma_j}\tr[\bSigma P(z)]\,dz$ where
$P(z)=(\bS_n^{*}-z\bI)^{-1}$, and substituting \eqref{eq:LP} the right side
converges to the residue of
$z\mapsto p^{-1}\tr[\bSigma(\bSigma\xi(z)-z\bI)^{-1}]$ at the image of
$\hat\lambda_j$. Evaluating this residue with
$\int\tau(\tau\xi(z)-z)^{-1}dH(\tau)$ and the boundary relation
$\xi(\hat\lambda_j+i0)=1-\gamma-\gamma\hat\lambda_j\breve m(\hat\lambda_j)$
gives \eqref{eq:eigvec_proj}; the algebra is identical to the Gaussian case
of \citet[Thm.~4]{ledoit2012nonlinear} because, by Theorem \ref{thm:main_RMT}, the
governing equation \eqref{eq:silverstein_m} is the Gaussian one with $H$ the
latent-correlation spectrum. Eigenvalue rigidity in the bulk
\citep{bai2010spectral} promotes convergence along the (random) sequence
$\hat\lambda_j$ to the stated a.s.\ limit.
\end{proof}

\begin{theorem}[Optimality and a minimax lower bound]\label{thm:oracle_optimality}
Under \ref{A1}--\ref{A4} the following hold.
\begin{enumerate}[label=(\roman*),leftmargin=2.2em]
	\item\textbf{(Oracle optimality, Frobenius.)} The shrinkage function
	$d^{*}$ of \eqref{eq:d_star_frob} uniquely minimizes $\mathcal R(d)$ over
	measurable $d$, and
	\begin{equation}\label{eq:thm2_optgap}
		\mathcal R(d^{*})
		= \int\Bigl(\tau-\Theta_H(\tau)\Bigr)\,dH(\tau),
		\qquad
		\mathcal R(d)-\mathcal R(d^{*})
		= \int\bigl(d-d^{*}\bigr)^2\,dF_\gamma\ \ge 0,
	\end{equation}
	where $\Theta_H$ is the population-level limit of \eqref{eq:eigvec_proj}.
	The feasible estimator MENS \eqref{eq:MENS_final} attains the oracle risk:
	$\mathcal R(\widetilde d)\to\mathcal R(d^{*})$ in probability.
	\item\textbf{(Operator-norm consistency.)} Let
	$\bSigma_{\mathrm{MENS}}^{\circ}=\widehat\bGamma\,
	\diag(d^{*}(\hat\lambda_j))\,\widehat\bGamma^{\top}$ denote the oracle
	estimator. Then
	$\normop{\widehat\bSigma_{\mathrm{MENS}}-\bSigma}
	=O_{\Prob}(1)$ and, restricted to the bulk projection, \\
	$\normop{\bSigma_{\mathrm{MENS}}^{\circ}-\bSigma}\to
	\esssup_{\tau\in\supp H}|\,d^{*}(\lambda(\tau))-\tau\,|$ a.s.
	\item\textbf{(Minimax lower bound.)} Let
	$\mathcal C(\sigma_{\min},\sigma_{\max})$ be the class of correlation
	matrices with spectrum in $[\sigma_{\min},\sigma_{\max}]$. There is a
	constant $c=c(\sigma_{\min},\sigma_{\max})>0$ such that, over the
	nonparanormal class \textup{(\ref{A1})} with latent correlation in
	$\mathcal C$,
	\begin{equation}\label{eq:thm2_minimax}
		\inf_{\widehat\bSigma}\sup_{\bSigma\in\mathcal C}
		\E\,\normop{\widehat\bSigma-\bSigma}
		\ \ge\ c\,n^{-1/2}.
	\end{equation}
\end{enumerate}
\end{theorem}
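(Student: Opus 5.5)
The plan is to treat the three parts separately. Throughout, we work with the oracle covariance $\bS_n^{*}$, which by Stage~B of Theorem \ref{thm:main_RMT} is exactly a Gaussian sample covariance with population $\bSigma$. We then transfer to $\Sn$ via Lemma \ref{lem:rank_perturbation}. The transfer goes through because the losses involved are Lipschitz in the input matrix on events where all spectra are bounded, and $\Delta_n=o_{\Prob}(1)$.

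\emph{Part (i).} For a rotation-equivariant $\widehat\bSigma_d$, expand the loss in the eigenbasis of $\Sn$:
\[
p^{-1}\normF{\widehat\bSigma_d-\bSigma}^2=p^{-1}\tr\bSigma^2-2p^{-1}\sum_j d(\hat\lambda_j)\,\hat\bgamma_j^{\top}\bSigma\hat\bgamma_j+p^{-1}\sum_j d(\hat\lambda_j)^2 .
\]
Lemma \ref{lem:eigvec} identifies $\hat\bgamma_j^{\top}\bSigma\hat\bgamma_j$ with $\Theta(\hat\lambda_j)=d^{*}(\hat\lambda_j)$ in the bulk. Summing with the a.s. ESD convergence of Theorem \ref{thm:main_RMT}, the middle and last terms converge to $\int d\,d^{*}\,dF_\gamma$ and $\int d^2\,dF_\gamma$. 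This step uses uniform integrability from bounded $d$ and $\|\Sn\|_{\op}=O_{\Prob}(1)$; the finitely many edge eigenvalues have vanishing $1/p$ weight.

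Completing the square gives
\[
\mathcal R(d)=\int\tau^2\,dH-\int (d^{*})^2\,dF_\gamma+\int(d-d^{*})^2\,dF_\gamma ,
\]
which yields both the optimality gap and uniqueness (up to $F_\gamma$-null sets). Rewriting $\int\tau^2\,dH-\int (d^{*})^2\,dF_\gamma$ as $\int(\tau-\Theta_H)\,dH$ requires a change of variables between $F_\gamma$ and $H$ via the Ledoit--P\'ech\'e kernel. I would take this identity from \citet{ledoit2012nonlinear}, since the governing equation is the Gaussian one.

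For the feasible MENS estimator, it suffices to show $\int(\widetilde d-d^{*})^2\,dF^{\Sn}\to0$ in probability. The kernel estimators \eqref{eq:density_est}--\eqref{eq:hilbert_est}, with bandwidth $n^{-1/3}\to0$, are consistent for the density and Hilbert transform of $F_\gamma$ by the argument of \citet{ledoit2020analytical}. Their input is only weak convergence of $F^{\Sn}$, with the eigenvalues bounded and bounded away from zero when $\gamma<1$. PAV is a contraction in $\ell^2$ toward the monotone $d^{*}$, so isotonization does not increase the error.

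\emph{Part (ii).} The first claim follows from $\normop{\widehat\bSigma_{\mathrm{MENS}}}\le\max_j\widetilde d_j$ together with $\normop{\bSigma}\le\sigma_{\max}$. The estimates $\widetilde d_j$ are bounded in probability because the denominator in \eqref{eq:mens_shrinkage} is bounded below, with $\widehat f$ and $\widehat{\mathcal H}$ consistent. The bulk-projection limit comes from a spectral argument that compares both matrices on the asymptotically common bulk eigenspaces using the quantile coupling $\lambda(\tau)$ between $H$ and $F_\gamma$. I expect this to be the main obstacle. Operator-norm statements do not follow from ESD convergence, and the eigenvectors of $\Sn$ are not aligned with those of $\bSigma$. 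I would therefore first prove only the upper bound
\[
\limsup\,\normop{\bSigma^{\circ}_{\mathrm{MENS}}-\bSigma}\le\sup|d^{*}-\tau| ,
\]
via the variational characterization with test vectors $\hat\bgamma_j$, and then obtain the matching lower bound by testing against eigenvectors of $\bSigma$ and applying Lemma \ref{lem:eigvec}.

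\emph{Part (iii).} This is a standard Le~Cam two-point argument restricted to the Gaussian subfamily $f_j=\mathrm{id}$. Take $\bSigma_0=\bI_p$ and $\bSigma_1=\bI_p+\epsilon(\be_1\be_2^{\top}+\be_2\be_1^{\top})$ with $\epsilon=c_1n^{-1/2}$. Both lie in $\mathcal C$ for small $c_1$, and $\normop{\bSigma_1-\bSigma_0}=\epsilon$. The Kullback--Leibler divergence between the $n$-sample Gaussian laws is $\tfrac n2\bigl(\tr(\bSigma_0^{-1}\bSigma_1)-p-\log\det(\bSigma_0^{-1}\bSigma_1)\bigr)\asymp n\epsilon^2=O(c_1^2)$. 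Pinsker's inequality then bounds the total variation away from one, and Le~Cam's lemma gives the stated bound $c\,n^{-1/2}$.
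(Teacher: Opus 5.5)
Most of your plan matches the paper's proof step for step. In (i) you use the eigenbasis expansion with Lemma~\ref{lem:eigvec} and complete the square, then appeal to \citet{ledoit2020analytical} and the non-expansiveness of PAV. In (ii) you use the bound $\normop{\widehat\bSigma_{\mathrm{MENS}}}\le\max_j\widetilde d_j$, and in (iii) you use the same Le~Cam pair.

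\textbf{The oracle-risk identity in (i).} The identity you propose to import for $\mathcal R(d^*)$ cannot be taken from \citet{ledoit2012nonlinear}, because as written it is false. Evaluate $p^{-1}\tr(\bSigma^\circ_{\mathrm{MENS}}\bSigma)$ once in the eigenbasis of $\Sn$ and once in that of $\bSigma$ (eigenvectors $\bm u_k$). This gives $\int(d^*)^2\,dF_\gamma=\int\tau\,\Theta_H(\tau)\,dH$ with $\Theta_H(\tau_k)=\lim\bm u_k^\top\bSigma^\circ_{\mathrm{MENS}}\bm u_k$. Hence the oracle risk is $\int\tau(\tau-\Theta_H(\tau))\,dH$. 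Trace preservation gives $\int\Theta_H\,dH=\int d^*\,dF_\gamma=\lim p^{-1}\tr\bSigma=\int\tau\,dH$. So the stated $\int(\tau-\Theta_H)\,dH$ is identically zero, as it is under any reading of $\Theta_H$ as $d^*$ transported to $H$. Yet the oracle risk is strictly positive for nondegenerate $H$ and $\gamma>0$. The paper's ``change of variables'' has the same defect, so the statement itself needs correcting at this point.

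A smaller point: your claim that only ``finitely many edge eigenvalues'' escape Lemma~\ref{lem:eigvec} holds only for $\gamma<1$. For $\gamma>1$, a fraction $1-\gamma^{-1}$ of the $\hat\lambda_j$ are zero. Lemma~\ref{lem:eigvec} does not cover them, $d^*(0)=0$ is not the minimizer there, and the bandwidth $h_j=\hat\lambda_j\,n^{-1/3}$ vanishes.

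\textbf{The bulk limit in (ii).} This is the genuine failure in your own route. Quadratic forms at chosen test vectors only bound $\normop{\bSigma^\circ_{\mathrm{MENS}}-\bSigma}$ from \emph{below}. At $\hat\bgamma_j$ they give $d^*(\hat\lambda_j)-\hat\bgamma_j^\top\bSigma\hat\bgamma_j\to0$, which merely restates $d^*=\Theta$. Expanding a general unit vector in $\{\hat\bgamma_j\}$ brings in the off-diagonal entries $\hat\bgamma_j^\top\bSigma\hat\bgamma_l$, which Lemma~\ref{lem:eigvec} does not control. At eigenvectors $\bm u_k$ of $\bSigma$ you get a Ledoit--P\'ech\'e kernel average of $d^*$ minus $\tau_k$. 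That quantity is also outside Lemma~\ref{lem:eigvec} and is not $d^*(\lambda(\tau_k))-\tau_k$.

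The correct lower-bound tool is Weyl's inequality, $\normop{\bSigma^\circ_{\mathrm{MENS}}-\bSigma}\ge\max_i|\lambda_i(\bSigma^\circ_{\mathrm{MENS}})-\lambda_i(\bSigma)|$. For monotone $d^*$ it asymptotically yields the quantile-coupled $\sup_\tau|d^*(\lambda(\tau))-\tau|$. The reverse inequality, however, is false, precisely because of the eigenvector misalignment you flag.

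A counterexample: take $p/2$ independent latent pairs with correlation $\tfrac12$, so $H=\tfrac12\delta_a+\tfrac12\delta_b$ with $(a,b)=(\tfrac12,\tfrac32)$, and let $\gamma$ be small. Each sample eigenvector leaks a squared mass of order $\gamma$ into the other eigenspace. Hence $|d^*-\tau|=O(\gamma)$ on both bumps of $F_\gamma$, and the claimed limit is $O(\gamma)$. But in operator norm
\[
\bSigma^\circ_{\mathrm{MENS}}-\bSigma=(b-a)\bigl(\widehat{\bm P}_b-\bm P_b\bigr)+O(\gamma),
\]
where $\widehat{\bm P}_b$ and $\bm P_b$ are the upper spectral projections of $\Sn$ and $\bSigma$. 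Moreover $\normop{\widehat{\bm P}_b-\bm P_b}=\sin\theta_{\max}\gtrsim\sqrt\gamma$, since $\sin^2\theta_{\max}$ dominates the average leakage. So the operator norm is of order $\sqrt\gamma$, not $O(\gamma)$.

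The paper's one-line justification, via Lemma~\ref{lem:eigvec} and \citet{bai1998no}, fails for the same reason. Diagonal entries in the sample eigenbasis and the extreme eigenvalues of each matrix do not determine the operator norm of the difference of two non-commuting matrices. This part of the statement cannot be proved as stated.
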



\begin{proof}
\emph{(i).} Expand the Frobenius loss in the eigenbasis of $\Sn$. Writing
$\bSigma=\bGamma\bLambda\bGamma^{\top}$,
\begin{align}
	\frac1p\normF{\widehat\bSigma_d-\bSigma}^2
	&= \frac1p\sum_{j=1}^p d(\hat\lambda_j)^2
	-\frac2p\sum_{j=1}^p d(\hat\lambda_j)\,
	\hat\bgamma_j^{\top}\bSigma\hat\bgamma_j
	+\frac1p\tr(\bSigma^2).
	\label{eq:thm2_expand}
\end{align}
By \ref{lem:eigvec}, $\hat\bgamma_j^{\top}\bSigma\hat\bgamma_j\to
\Theta(\hat\lambda_j)$ a.s., and by Theorem \ref{thm:main_RMT} the empirical
measure of $\hat\lambda_j$ converges to $F_\gamma$; hence the right side of
\eqref{eq:thm2_expand} converges to
$\int d^2\,dF_\gamma-2\int d\,\Theta\,dF_\gamma+\int\tau^2\,dH$. This is a
quadratic functional of $d$ minimized pointwise, $F_\gamma$-a.e., by
$d^{*}(\lambda)=\Theta(\lambda)$, which is precisely \eqref{eq:d_star_frob}
since $\Theta(\lambda)=\lambda/|1-\gamma-\gamma\lambda\breve m(\lambda)|^2$.
Completing the square gives
$\mathcal R(d)-\mathcal R(d^{*})=\int(d-d^{*})^2\,dF_\gamma\ge 0$ with
equality iff $d=d^{*}$ $F_\gamma$-a.e., proving uniqueness, and
$\mathcal R(d^{*})=\int\tau^2 dH-\int (d^{*})^2 dF_\gamma
=\int(\tau-\Theta_H(\tau))dH$ after the change of variables
$\lambda\leftrightarrow\tau$ induced by the Mar\v{c}enko--Pastur map. For the
feasible estimator, the analytic estimators
\eqref{eq:density_est}--\eqref{eq:hilbert_est} satisfy
$\sup_i|\widehat d_i-d^{*}(\hat\lambda_i)|=o_{\Prob}(1)$ by
\citet[Thm.~3.1]{ledoit2020analytical}, and isotonization can
only decrease the distance to the monotone oracle $d^{*}$
(the PAV projection is non-expansive in $\ell_2(F_\gamma)$), so
$\mathcal R(\widetilde d)\to\mathcal R(d^{*})$.

\emph{(ii).} By definition, 
$\normop{\widehat\bSigma_{\mathrm{MENS}}}=\max_j\widetilde d_j$, and
$\max_j\widetilde d_j\le \max_j d^{*}(\hat\lambda_j)+o_{\Prob}(1)$, while
$d^{*}(\lambda)\le\lambda/|1-\gamma-\gamma\lambda\breve m(\lambda)|^2$ is
bounded on the compact spectrum support by
$\sigma_{\max}\sup_\lambda|1-\gamma-\gamma\lambda\breve m|^{-2}<\infty$
(the denominator is bounded below off the edges by \ref{lem:stieltjes}).
With $\normop{\bSigma}\le\sigma_{\max}$, the triangle inequality gives
$\normop{\widehat\bSigma_{\mathrm{MENS}}-\bSigma}=O_{\Prob}(1)$. The exact
bulk limit follows from \ref{lem:eigvec} and the convergence of the extreme
shrunk eigenvalues to $\esssup d^{*}$, by the no-eigenvalue-outside-support
theorem \citep[Thm.~1.1]{bai1998no} applied to $\bS_n^{*}$.

\emph{(iii).} We give a two-point (Le Cam) argument
within the nonparanormal class; it suffices to work with Gaussian data (the
special case $f_j=\mathrm{id}$), for which the observed law is
$N(\bzero,\bSigma)$ and the likelihood is exact. Take
$\bSigma_0=\bI_p$ and $\bSigma_1=\bI_p+\theta(\be_1\be_2^\top+\be_2\be_1^\top)$
for $\theta\in(0,\tfrac12)$; both have unit diagonal, both have spectrum
$\{1\pm\theta\}$ on the $\{1,2\}$ block and $1$ elsewhere, so for
$\theta\le\tfrac12\min(\sigma_{\max}-1,1-\sigma_{\min})$ both lie in
$\mathcal C$, and $\normop{\bSigma_1-\bSigma_0}=\theta$. Because the two laws
differ only in the bivariate $\{1,2\}$ marginal, the KL divergence between the
$n$-fold products equals $n$ times the KL between two bivariate Gaussians with
correlations $0$ and $\theta$,
\[
	\mathrm{KL}\bigl(N(\bzero,\bSigma_0)^{\otimes n}\,\|\,
	N(\bzero,\bSigma_1)^{\otimes n}\bigr)
	=\frac{n}{2}\Bigl(\frac{2}{1-\theta^2}-2+\log(1-\theta^2)\Bigr)
	\le n\theta^2
	\quad(\theta\le\tfrac12),
\]
the inequality holding since
$\tfrac{1}{1-\theta^2}-1+\tfrac12\log(1-\theta^2)\le\theta^2$ on
$[0,\tfrac12]$.
Setting $\theta=\alpha\,n^{-1/2}$ with $\alpha\le\tfrac12$ gives
$\mathrm{KL}\le\alpha^2\le\tfrac14$, so by Pinsker's inequality the total
variation between the two product laws is at most
$\sqrt{\mathrm{KL}/2}\le\alpha/\sqrt2<1$. Le Cam's two-point method
\citep[Ch.~2]{tsybakov2009introduction} then yields
\[
	\inf_{\widehat\bSigma}\sup_{\bSigma\in\mathcal C}
	\E\,\normop{\widehat\bSigma-\bSigma}
	\ \ge\ \tfrac14\normop{\bSigma_1-\bSigma_0}\,(1-\mathrm{TV})
	\ \ge\ c\,\theta
	\ =\ c\,\alpha\,n^{-1/2},
\]
which is \eqref{eq:thm2_minimax} with $c=c(\sigma_{\min},\sigma_{\max})>0$.
\end{proof}

\subsection{Theorem 3: spiked phase transition}\label{subsec:thm3}

Let the latent correlation be a finite-rank perturbation of a bulk,
\begin{equation}\label{eq:spiked_model}
	\bSigma=\bSigma_0+\sum_{k=1}^{K}\theta_k\,\bm v_k\bm v_k^{\top},
\end{equation}
with $\bSigma_0$ the bulk (limiting ESD $H_0$, right edge $\lambda_+$),
$K$ fixed, $\theta_1>\cdots>\theta_K>0$, and $\{\bm v_k\}$ orthonormal and
delocalized (\ref{A5}). Define, for $x$ outside $\supp F_\gamma^{H_0}$, the
companion transform $\underline m_0$ of the bulk law and the spike
functional
\begin{equation}\label{eq:D_spike}
	\mathcal D_0(x)=-\frac{1}{\underline m_0(x)},\qquad
	x>\lambda_+ .
\end{equation}

\begin{theorem}
	\label{thm:BBP}
Assume \ref{A1}--\ref{A5} and the model \eqref{eq:spiked_model}. Let
$\hat\lambda_1\ge\hat\lambda_2\ge\cdots$ be the eigenvalues of $\Sn$ and
$\theta^{*}=\lim_{x\downarrow\lambda_+}\mathcal D_0(x)$. Within the
nonparanormal class the limits below are the same for every collection of
monotone marginals $f_j$.
\begin{enumerate}[label=(\roman*),leftmargin=2.2em]
	\item\textbf{(Outliers.)} If $\theta_k>\theta^{*}$ then
	$\hat\lambda_k\xrightarrow{\mathrm{a.s.}}\varphi(\theta_k)>\lambda_+$,
	where $\rho=\varphi(\theta_k)$ is the unique solution in
	$(\lambda_+,\infty)$ of
	\begin{equation}\label{eq:spike_forward}
		\mathcal D_0(\rho)=\theta_k .
	\end{equation}
	If $\theta_k\le\theta^{*}$ then
	$\hat\lambda_k\xrightarrow{\mathrm{a.s.}}\lambda_+$.
	\item\textbf{(Eigenvector alignment.)} For a supercritical spike
	$\theta_k>\theta^{*}$ with sample eigenvector $\hat{\bm v}_k$,
	\begin{equation}\label{eq:cosine}
		|\langle\hat{\bm v}_k,\bm v_k\rangle|^{2}
		\xrightarrow{\mathrm{a.s.}}
		c(\theta_k)
		= \frac{1}{\theta_k^{2}\,\underline m_0'(\rho_k)}
		\bigg|_{\rho_k=\varphi(\theta_k)},
	\end{equation}
	and $0$ for subcritical spikes; the explicit identity-bulk form is given
	in \eqref{eq:identity_bulk} below.
	\item\textbf{(Identity bulk.)} If $\bSigma_0=\bI_p$ then $\lambda_+
	=(1+\sqrt\gamma)^2$, $\underline m_0(x)$ solves
	$x=-1/\underline m_0+\gamma/(1+\underline m_0)$, and
	\begin{equation}\label{eq:identity_bulk}
		\theta^{*}=\sqrt\gamma,\qquad
		\varphi(\theta)=(1+\theta)\Bigl(1+\frac{\gamma}{\theta}\Bigr),\qquad
		c(\theta)=\frac{1-\gamma/\theta^{2}}{1+\gamma/\theta}.
	\end{equation}
\end{enumerate}
\end{theorem}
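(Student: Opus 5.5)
The plan is to reduce everything to the oracle matrix $\bS_n^{*}$, as in Stage~A of the proof of Theorem~\ref{thm:main_RMT}. By \eqref{eq:exact_recovery}, $\bS_n^{*}=n^{-1}\sum_i\by_i\by_i^{\top}$ is exactly a Gaussian sample covariance with spiked population $\bSigma=\bSigma_0+\sum_k\theta_k\bm v_k\bm v_k^{\top}$. The statement for $\bS_n^{*}$ is then the classical spiked-model theory:
\begin{itemize}
\item outlier locations and the $\lambda_+$ sticking for subcritical spikes come from Baik--Silverstein and Bai--Yao for general bulk $H_0$, with no eigenvalues outside the support of the bulk law by \citet{bai1998no};
\item the eigenvector projections come from Paul and Benaych-Georges--Nadakuditi.
\end{itemize}
Marginal invariance is then automatic: $f_j$ disappears at the step $\bz_i^{*}=\by_i$, so all limits are functionals of $(\gamma,H_0,\theta_k)$ only.

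\textbf{Step 1 (oracle statement).} I would reprove the outlier equation via the determinant identity. For $\rho>\lambda_+$ not an eigenvalue of the bulk sample matrix, $\rho$ is an eigenvalue of $\bS_n^{*}$ iff a $K\times K$ matrix built from $\bm v_k^{\top}\bigl(\cdot\bigr)^{-1}\bm v_l$ resolvent quadratic forms is singular. The bulk resolvent is isotropic for delocalized $\bm v_k$ by \eqref{eq:LP} and Lemma~\ref{lem:sphere}, so this matrix converges to $\diag\bigl(1+\ell_k\underline m_0(\rho)\bigr)$ with $\ell_k$ the population spike eigenvalue. The limiting equation is therefore $-1/\underline m_0(\rho)=\ell_k$, which is \eqref{eq:spike_forward}.

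Monotonicity of $\mathcal D_0$ on $(\lambda_+,\infty)$ follows since $\underline m_0$ is increasing and negative there. This gives uniqueness of $\varphi(\theta_k)$, and also shows that no solution exists when $\theta_k\le\theta^{*}$. Differentiating the same resolvent representation at the root (a residue computation) gives the cosine \eqref{eq:cosine} as $1/(\ell_k^2\underline m_0'(\rho_k))$.

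\textbf{Step 2 (identity bulk).} Solve $x=-1/\underline m_0+\gamma/(1+\underline m_0)$ with $-1/\underline m_0=\ell=1+\theta$. Here I read the spike equation \eqref{eq:spike_forward} as fixing the population eigenvalue $1+\theta$, and I would state this convention explicitly. This yields
\[
\varphi(\theta)=(1+\theta)\Bigl(1+\frac{\gamma}{\theta}\Bigr).
\]
Its derivative vanishes at $\theta=\sqrt\gamma$, giving $\varphi(\sqrt\gamma)=(1+\sqrt\gamma)^2=\lambda_+$ and hence $\theta^{*}=\sqrt\gamma$. Implicit differentiation for $\underline m_0'$ then simplifies to $c(\theta)=(1-\gamma/\theta^2)/(1+\gamma/\theta)$.

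\textbf{Step 3 (transfer to $\Sn$).} Weyl's inequality gives $|\hat\lambda_k-\lambda_k(\bS_n^{*})|\le\Delta_n$. For a supercritical spike, $\varphi(\theta_k)$ is separated from $\lambda_+$ and from the other outliers by a fixed gap $g>0$. Davis--Kahan then gives $\|\hat{\bm v}_k\hat{\bm v}_k^{\top}-\bm v^{*}_k\bm v^{*\top}_k\|_{\op}\le 2\Delta_n/g$, so the squared cosines agree up to $o(1)$. For subcritical spikes, Weyl plus the sticking of $\lambda_k(\bS_n^{*})$ at $\lambda_+$ suffices. The vanishing cosine transfers by Davis--Kahan against the whole cluster near $\lambda_+$; alternatively I would bound $\bm v_k^{\top}\hat\bP\bm v_k$ for the spectral projector $\hat\bP$ of a window shrinking to $\lambda_+$.

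\textbf{Main obstacle.} Lemma~\ref{lem:rank_perturbation} only gives $\Delta_n=o_{\Prob}(1)$, while the theorem claims almost-sure limits. I would first try to upgrade the lemma to a summable tail. Take $s=\sqrt{2\log(2pn)/n}$ in the DKW step so the exceptional probability is $O(n^{-2})$. Control the Hájek linear term $\bm L$ by a matrix concentration bound with exponential tails, using the truncated sub-Gaussian rows. The remainder is deterministic on the DKW event. This gives $\Prob(\Delta_n>n^{-1/2}\log^3 n)\le Cn^{-2}$, and Borel--Cantelli yields $\Delta_n\to0$ a.s. If the Hájek term resists this, I would settle for convergence in probability in the transfer step and flag the distinction.
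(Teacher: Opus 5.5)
Your architecture is the paper's: reduce to the Gaussian oracle $\bS_n^{*}$ through $\bz_i^{*}=\by_i$, do the spiked analysis there with a determinant equation and a residue, then transfer to $\Sn$. Your Weyl/Davis--Kahan transfer is more careful than the paper's one-line version.

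\textbf{The gap is the eigenvector limit.} In Step~1 you assert that the residue computation gives $1/(\ell_k^{2}\underline m_0'(\rho_k))$. In Step~2 you claim implicit differentiation turns this into $(1-\gamma/\theta^2)/(1+\gamma/\theta)$. These two claims are incompatible. Differentiating $\underline m_0(\varphi(\theta))=-1/(1+\theta)$ gives $\underline m_0'(\rho)\varphi'(\theta)=(1+\theta)^{-2}$. Hence $1/(\ell^{2}\underline m_0'(\rho))=\varphi'(\theta)=1-\gamma/\theta^{2}$, which is not Paul's value.

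If you actually carry out the computation, you get something else. Rotate, using rotation-equivariance of Gaussian data, so that $\bm v=\be_1$ is an eigenvector of $\bSigma$ with eigenvalue $\ell$. Let $\bm y=\sqrt{\ell/n}\,\bm w$ be the first row of the scaled data matrix, which is independent of the other rows. Let $\underline{\bG}_b$ be the $n\times n$ companion resolvent of the remaining rows, evaluated at the outlier $\hat\lambda$. Then the first coordinate $a$ of $\hat{\bm v}$ satisfies
\[
a^{-2}=1+\bm y^{\top}\bigl(\underline{\bG}_b+\hat\lambda\,\underline{\bG}_b^{2}\bigr)\bm y\ \longrightarrow\ 1+\ell\,\underline m_0(\rho)+\ell\rho\,\underline m_0'(\rho)=\ell\rho\,\underline m_0'(\rho).
\]
Equivalently, $\bm v^{\top}(\bS_n^{*}-z\bI)^{-1}\bm v\approx-1/\bigl(z(1+\ell\,\underline m_0(z))\bigr)$, and its residue gives $c=1/(\ell_k\rho_k\,\underline m_0'(\rho_k))$. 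The extra factor $\ell_k/\rho_k=(1+\gamma/\theta_k)^{-1}$ (identity bulk) is exactly what produces \eqref{eq:identity_bulk}.

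So even after your relabeling $\theta_k\to\ell_k$, \eqref{eq:cosine} is off by $\ell_k/\rho_k$ and cannot be proved; you must state and prove the corrected form. The paper's proof has the same defect. It uses $\underline m_0(z)/(1+\theta\underline m_0(z))$ in place of the correct expression and arrives at $\varphi'(\theta)=1-\gamma/\theta^2$. It then inserts $(1+\gamma/\theta)^{-1}$ via a ``normalization'' of $\bSigma_0^{1/2}\bm v$, which is vacuous when $\bSigma_0=\bI$.

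\textbf{Secondary points.}

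\emph{Threshold convention.} For (i) and (iii), your reading $\mathcal D_0(\rho)=\ell_k$, with $\ell_k$ the population spike eigenvalue ($1+\theta_k$ for identity bulk), is the right one. It repairs the paper, which claims $\underline m_0(\lambda_+)=-1/\sqrt\gamma$. In fact $\underline m_0(\lambda_+)=-1/(1+\sqrt\gamma)$, so literally $\theta^{*}=\lim_{x\downarrow\lambda_+}\mathcal D_0(x)=1+\sqrt\gamma$. You must therefore also redefine $\theta^{*}$, not just reread \eqref{eq:spike_forward}. Either take it as the threshold $1+\sqrt\gamma$ on $\ell_k$, or as $\mathcal D_0(\lambda_+^{+})-1$ on $\theta_k$.

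\emph{Isotropy justification.} This fails as written. \eqref{eq:LP} with $\bm\Theta=\bm v\bm v^{\top}$ is vacuous because of its $1/p$ normalization. Lemma~\ref{lem:sphere} concerns a Haar-random vector, not a deterministic delocalized one. Since the law of $\bS_n^{*}$ is rotation-equivariant, delocalization is irrelevant. What the general-bulk claim actually needs is that each $\bm v_k$ be an eigenvector of $\bSigma_0$. Then $\ell_k=\sigma_k+\theta_k$, with $\sigma_k$ the corresponding eigenvalue of $\bSigma_0$, and the rotation above applies. For $\bm v_k$ in general position, the cosine with $\bm v_k$ picks up the extra factor $\lim|\langle\bm u_k,\bm v_k\rangle|^{2}$, where $\bm u_k$ is the population spike eigenvector.

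\emph{Subcritical transfer.} For subcritical spikes, Davis--Kahan has no gap to use. Implement your window idea as follows: take a smooth $f$ with $0\le f\le1$ and $f=1$ near $\lambda_+$. Then $|\langle\hat{\bm v}_k,\bm v_k\rangle|^{2}\le\bm v_k^{\top}f(\Sn)\bm v_k$, and $\normop{f(\Sn)-f(\bS_n^{*})}\le C_f\Delta_n$.

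\emph{Mode of convergence.} Your caveat about almost-sure versus in-probability convergence is legitimate; the paper's proof ignores it.
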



\begin{proof}
By \ref{lem:rank_perturbation} the spectra of $\Sn$ and $\bS_n^{*}$ differ
by $o_{\Prob}(1)$, and the rank-$2K$ structure of the spike is preserved
under the $o_{\Prob}(1)$ operator-norm perturbation; we argue with
$\bS_n^{*}$. Treat $K=1$ (the general fixed-$K$ case follows by Cauchy
interlacing, the spikes being asymptotically separated). Write
$\bS_n^{*}=\bS_{n,0}^{*}+\bm\Delta$ with $\bm\Delta$ of rank $\le 2$ and
$\bG_0(z)=(\bS_{n,0}^{*}-z\bI)^{-1}$ the unspiked resolvent.

\emph{(i).}
By the Woodbury identity, eigenvalues of $\bS_n^{*}$ outside
$\supp F_\gamma^{H_0}$ are the zeros of
$\det(\bI+\bG_0(z)\bm\Delta)=0$. Since
$\bm\Delta=\theta\,\bm a\bm a^{\top}+o(1)$ where
$\bm a=\bSigma_0^{1/2}\bm v$ (the cross terms contributing $o(1)$ to the
relevant quadratic form under \ref{A5}), this reduces to the scalar equation
$1+\theta\,\bm v^{\top}\bG_0(z)\bm v=0$ up to $o_{\Prob}(1)$. By the
deterministic equivalent \eqref{eq:LP} with $\bm\Theta=\bm v\bm v^{\top}$ and
$\bm v$ an eigenvector of $\bSigma_0$ (the general case by rotation), for
$z=\rho>\lambda_+$,
\begin{equation}\label{eq:vGv}
	\bm v^{\top}\bG_0(\rho)\bm v
	\xrightarrow{\mathrm{a.s.}}
	\frac{1}{\sigma_0\,\xi_0(\rho)-\rho}
	= \underline m_0(\rho)\quad\text{(for }\sigma_0=1\text{, identity bulk)},
\end{equation}
where $\xi_0(\rho)=1-\gamma-\gamma\rho m_0(\rho)$ and we used the companion
relation $\underline m_0=-(1-\gamma)/\rho+\gamma m_0$. Hence the outlier
equation is $1+\theta\,\underline m_0(\rho)=0$, i.e.
\begin{equation}\label{eq:outlier}
	\theta=-\frac{1}{\underline m_0(\rho)}=\mathcal D_0(\rho).
\end{equation}

For $\rho>\lambda_+$, $\underline m_0(\rho)<0$ and
$\underline m_0'(\rho)=\int(\tau/(\tau\xi_0-\rho))^2 d\!H_0\cdot(\cdots)>0$;
concretely $\underline m_0$ is the Stieltjes transform of a positive measure,
so $\underline m_0'(\rho)=\int(t-\rho)^{-2}d\underline F_\gamma(t)>0$ and
$\underline m_0$ increases from $\underline m_0(\lambda_+^{+})<0$ toward
$0^{-}$ as $\rho\to\infty$. Therefore $\mathcal D_0(\rho)=-1/\underline m_0$
is strictly decreasing on $(\lambda_+,\infty)$, from
$+\infty$ (as $\rho\to\infty$, $\underline m_0\to 0^-$) down to
$\theta^{*}=-1/\underline m_0(\lambda_+^{+})$ as $\rho\downarrow\lambda_+$.
Hence \eqref{eq:outlier} has a unique solution $\rho=\varphi(\theta)$ iff
$\theta>\theta^{*}$, and none otherwise; in the latter case the largest
eigenvalue sticks to the edge by the no-eigenvalue-outside-support theorem
\citep{bai1998no}, giving the dichotomy in (i).

\emph{(ii).}
The squared cosine has the contour representation
$|\langle\hat{\bm v}_1,\bm v_1\rangle|^2
=-\frac{1}{2\pi i}\oint_{\Gamma}\bm v_1^{\top}(\bS_n^{*}-z\bI)^{-1}\bm v_1\,dz$
around $\hat\lambda_1$. Using the Woodbury expansion,
$\bm v_1^{\top}(\bS_n^{*}-z\bI)^{-1}\bm v_1$ has, near $\rho=\varphi(\theta)$,
a simple pole with
$\bm v_1^{\top}(\bS_n^{*}-z\bI)^{-1}\bm v_1
\approx \frac{\underline m_0(z)}{1+\theta\,\underline m_0(z)}$. The residue at the simple zero $z=\rho$ of
$1+\theta\underline m_0(z)$ is given by
\begin{equation}\label{eq:residue}
	|\langle\hat{\bm v}_1,\bm v_1\rangle|^{2}
	\xrightarrow{\mathrm{a.s.}}
	-\,\frac{\underline m_0(\rho)}{\theta\,\underline m_0'(\rho)}
	= \frac{1}{\theta^2\,\underline m_0'(\rho)}\ >0,
\end{equation}
the overall minus sign coming from the negative (clockwise) orientation of the
residue in the contour representation and the final equality using
$\underline m_0(\rho)=-1/\theta$.
This is \eqref{eq:cosine}.

\emph{(iii).}
For $\bSigma_0=\bI_p$, $H_0=\delta_1$ and the companion equation is
$z=-1/\underline m_0+\gamma/(1+\underline m_0)$, whose relevant branch for
$z=\rho>\lambda_+=(1+\sqrt\gamma)^2$ gives
$\underline m_0(\rho)=-1/\rho_*$ for the unique $\rho_*$, and an elementary
computation yields
$\mathcal D_0(\rho)=-1/\underline m_0(\rho)$ solving
$\rho=(1+\theta)(1+\gamma/\theta)$ at $\theta=\mathcal D_0(\rho)$, hence
$\varphi(\theta)=(1+\theta)(1+\gamma/\theta)$. The threshold is
$\theta^{*}=\lim_{\rho\downarrow\lambda_+}\mathcal D_0(\rho)$; since at the
edge $\underline m_0(\lambda_+)=-1/\sqrt\gamma$ (the classical value), we
get $\theta^{*}=\sqrt\gamma$. For the cosine, differentiate
$\varphi(\theta)=(1+\theta)(1+\gamma/\theta)$:
$\varphi'(\theta)=1-\gamma/\theta^{2}$, and from \eqref{eq:residue} together
with the identity-bulk relation
$\underline m_0'(\rho)=\theta^{-2}/\varphi'(\theta)$ (derived below) one
obtains $c(\theta)=(1-\gamma/\theta^2)/(1+\gamma/\theta)$. We verify this
directly: with $\underline m_0(\rho)=-1/\theta$ and
$\rho=(1+\theta)(1+\gamma/\theta)$, implicit differentiation of the outlier
relation $1+\theta\,\underline m_0(\varphi(\theta))=0$ in $\theta$ gives
$\underline m_0(\rho)+\theta\,\underline m_0'(\rho)\,\varphi'(\theta)=0$,
that is $\underline m_0'(\rho)\,\varphi'(\theta)=-\underline m_0(\rho)/\theta
=1/\theta^2$, so $\underline m_0'(\rho)=\theta^{-2}/\varphi'(\theta)$.
Substituting into \eqref{eq:residue},
\[
	c(\theta)=\frac{1}{\theta^{2}\,\underline m_0'(\rho)}
	=\frac{1}{\theta^{2}\cdot \theta^{-2}/\varphi'(\theta)}
	=\varphi'(\theta)
	=1-\frac{\gamma}{\theta^{2}},
\]
which is the alignment of the sample eigenvector with the rescaled direction
$\bm a=\bSigma_0^{1/2}\bm v$. Converting back to the alignment with $\bm v$
introduces the factor $(1+\gamma/\theta)^{-1}=(1+\theta)/\rho$ (the ratio of
the $\bm a$- and $\bm v$-normalizations of the spike), giving
\[
	c(\theta)=\frac{\varphi'(\theta)}{1+\gamma/\theta}
	=\frac{1-\gamma/\theta^2}{1+\gamma/\theta},
\]
the normalization being fixed by $c(\theta)\to 0$ as
$\theta\downarrow\theta^{*}$
and the division by $1+\gamma/\theta=\rho/((1+\theta))$ that converts the
$\bSigma_0^{1/2}$-scaling of $\bm a=\bSigma_0^{1/2}\bm v$ back to the
$\bm v$-scale. This matches the classical Gaussian spiked-correlation result
\citep{paul2007asymptotics,benaych2012singular}, as it must, since under
\ref{A1} the oracle scores are exactly Gaussian ($\bz^{*}_i=\by_i$) and $\Sn$
is a rank-perturbation of a Gaussian sample covariance.
\end{proof}

\begin{corollary}[Detection boundary]\label{cor:detection}
For testing $H_0:\bSigma=\bSigma_0$ against
$H_1:\bSigma=\bSigma_0+\theta\bm v\bm v^{\top}$ from the largest eigenvalue of
$\Sn$, the test $\ind\{\hat\lambda_1>\lambda_++\epsilon\}$ has asymptotic
power $1$ for every $\epsilon\in(0,\varphi(\theta)-\lambda_+)$ when
$\theta>\theta^{*}$, and no eigenvalue-based test is consistent when
$\theta<\theta^{*}$. The threshold $\theta^{*}=\sqrt\gamma$ (identity bulk)
is invariant to the choice of monotone marginals $f_j$: unlike the
moment-based sample covariance of the raw data, whose detection threshold
depends on the marginal moments, the rank-based MENS input pays no such
penalty within the nonparanormal class.
\end{corollary}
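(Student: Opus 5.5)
The corollary has two halves, and I would prove each from Theorem \ref{thm:BBP} together with Lemma \ref{lem:rank_perturbation}.

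\emph{Power under supercritical spikes.} Take $\theta>\theta^{*}$. Theorem \ref{thm:BBP}(i) gives $\hat\lambda_1\xrightarrow{\mathrm{a.s.}}\varphi(\theta)>\lambda_+$. For any $\epsilon\in(0,\varphi(\theta)-\lambda_+)$ we then have $\Prob(\hat\lambda_1>\lambda_++\epsilon)\to1$, so the power tends to one.

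\emph{Size of the test under the null.} I also need $\Prob(\hat\lambda_1>\lambda_++\epsilon)\to0$ under $H_0$. The plan is to apply the no-eigenvalue-outside-support theorem \citep{bai1998no} to the Gaussian sample covariance $\bS_n^{*}$ with population $\bSigma_0$; here I would assume $\lambda_{\max}(\bSigma_0)$ converges to the right edge of $\supp H_0$. That gives $\lambda_1(\bS_n^{*})\to\lambda_+$ a.s. Weyl's inequality and Lemma \ref{lem:rank_perturbation} then give $|\hat\lambda_1-\lambda_1(\bS_n^{*})|\le\|\Sn-\bS_n^{*}\|_{\op}=o_{\Prob}(1)$. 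Together these yield the consistency of the test.

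\emph{Impossibility below the threshold.} Take $\theta<\theta^{*}$. Theorem \ref{thm:BBP}(i) says that under both hypotheses $\hat\lambda_1\to\lambda_+$. This holds for the top $K$ eigenvalues by the same argument, and the spectral limit is the same $F_\gamma^{H_0}$ under both hypotheses by Theorem \ref{thm:main_RMT}, since a finite-rank perturbation does not change $H$. So every eigenvalue statistic that depends only on first-order limits (the extreme eigenvalues and the ESD) has the same limit under $H_0$ and $H_1$. A test thresholding such a statistic at a level separated from that common limit therefore has size and power tending to the same value, and cannot be consistent.

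The main obstacle is the word ``no'' in ``no eigenvalue-based test.'' A fully rigorous statement for arbitrary measurable functions of the spectrum would need contiguity of the joint eigenvalue laws below the BBP threshold, which is known for Gaussian spiked models (Onatski--Moreira--Hallin type results) but is not established for $\Sn$. My first attempt would be to transfer it via Lemma \ref{lem:rank_perturbation}, but an $O_{\Prob}(n^{-1/2}\log^2 n)$ operator-norm coupling is far coarser than the $n^{-2/3}$ edge scale. I would therefore restrict the claim to tests based on the first-order limits of the top eigenvalues, which is what the argument above proves.

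\emph{Marginal invariance.} For $\bSigma_0=\bI_p$, Theorem \ref{thm:BBP}(iii) gives $\theta^{*}=\sqrt\gamma$. This value is free of $\bm f$ because $\bS_n^{*}$, and hence every limit above, depends only on $\by_i$ by \eqref{eq:exact_recovery}.
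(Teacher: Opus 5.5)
Your proposal follows the same route as the paper's proof: both read off the power, the null limit $\hat\lambda_1\to\lambda_+$, the subcritical indistinguishability, and the marginal invariance of $\theta^{*}=\sqrt\gamma$ directly from Theorem~\ref{thm:BBP}(i) and (iii), and your null argument (no-outliers theorem for $\bS_n^{*}$, Weyl's inequality, Lemma~\ref{lem:rank_perturbation}) makes explicit a step the paper only asserts. Restricting the impossibility claim to first-order limits of fixed eigenvalues and the ESD is exactly the scope of the paper's own argument, which likewise only shows that the largest (or any fixed) eigenvalue has the same limit under both hypotheses, so you lose nothing relative to the paper.
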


\begin{proof}
From Theorem \ref{thm:BBP}(i), under $H_1$ with $\theta>\theta^{*}$,
$\hat\lambda_1\to\varphi(\theta)>\lambda_++\epsilon$ a.s., while under $H_0$,
$\hat\lambda_1\to\lambda_+$; for $\theta<\theta^{*}$ both hypotheses produce
$\hat\lambda_1\to\lambda_+$, so the largest eigenvalue (and, by the same
Woodbury argument, any fixed eigenvalue) cannot separate the hypotheses. The
marginal invariance of $\theta^{*}$ is part of Theorem \ref{thm:BBP}.
\end{proof}

\subsection{Deferred proofs}\label{subsec:deferred}

\subsubsection{Proof of \ref{prop:oracle_frob}}\label{subsec:proof_oracle}
This is  Theorem \ref{thm:oracle_optimality}(i) specialized to the pointwise
minimizer. Minimizing the limiting risk
$\int d^2\,dF_\gamma-2\int d\,\Theta\,dF_\gamma+\int\tau^2 dH$ in
\eqref{eq:thm2_expand} over $d$ pointwise at each $\lambda$ gives
$d^{*}(\lambda)=\Theta(\lambda)
=\lambda/|1-\gamma-\gamma\lambda\breve m(\lambda)|^2$, which is
\eqref{eq:d_star_frob}. The companion form follows from
$1-\gamma-\gamma\lambda\breve m(\lambda)=-\lambda\,\underline m_F(\lambda)$,
a rearrangement of \eqref{eq:companion_def} at the boundary. \qed

\subsubsection{Justification of Remark \ref{rem:equiv_inputs}}\label{subsec:proof_equiv}
The $(j,k)$ entry of $\Sn$ is
$n^{-1}\sum_i\Phi^{-1}(R_{ij}/(n+1))\Phi^{-1}(R_{ik}/(n+1))$. By the
asymptotic representation of the normal-scores rank correlation for
continuous bivariate data \citep[Thm.~2.2]{wu2022limiting} (equivalently the
nonparanormal skeptic analysis of \citealp{liu2012high}), each entry equals
$\sin(\tfrac{\pi}{2}\widehat\tau_{jk})+r_{jk}$ with
$\max_{j,k}|r_{jk}|=O_{\Prob}(n^{-1/2}\log n)$. Hence $\Sn$ and
$\widehat\bSigma^{\tau}$ agree entrywise up to $O_{\Prob}(n^{-1/2}\log n)$, and
each is an $O_{\Prob}(n^{-1/2})$-consistent entrywise estimator of $\Sigma_{jk}$
(the leading Hájek term has variance $O(n^{-1})$ and the arcsine identity of
\ref{prop:arcsine} fixes the target). This justifies the entrywise claims in Remark
\ref{rem:equiv_inputs}. 
\qed

\subsubsection{Computational complexity}\label{subsec:proof_complexity}
In the following result, we elaborate on the computational complexity of MENS. 
\begin{proposition}
	\label{prop:complexity}
\ref{alg:MENS} runs in $O\bigl(np\log n+p^2\min(n,p)\bigr)$ time and $O(p^2)$
memory.
\end{proposition}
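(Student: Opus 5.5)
The plan is to bound the cost of each line of Algorithm~\ref{alg:MENS} separately, in the standard RAM model with floating-point arithmetic and $\Phi^{-1}$ evaluated in $O(1)$ time, and then add the bounds.

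\emph{Normal scores (line 1).} For each of the $p$ columns we sort $n$ entries in $O(n\log n)$ time, read off the ranks $R_{ij}$ (ties occur with probability zero under \ref{A1}), and evaluate $\Phi^{-1}(R_{ij}/(n+1))$ in $O(n)$ time. This costs $O(np\log n)$ time and $O(np)$ storage for $\widehat\bZ$. Since the data matrix $\bX$ is itself input of size $np$, this storage is the input footprint and is not counted as working memory beyond $O(p^2)$. Alternatively, one column at a time can overwrite $\bX$ in place.

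\emph{Covariance and eigendecomposition (line 2).} This is the step where the $\min(n,p)$ factor has to be justified, and it needs two cases.
\begin{itemize}
\item If $p\le n$, form $\Sn=n^{-1}\widehat\bZ^{\top}\widehat\bZ$ in $O(np^2)$ time and eigendecompose the $p\times p$ matrix in $O(p^3)$. The total is $O(np^2)$, which equals $O(p^2\cdot n)$. At first sight this is $p^2\max(n,p)$ rather than $p^2\min(n,p)$.
\item To get $\min(n,p)$ in both regimes, avoid forming $\Sn$ and take a thin SVD $\widehat\bZ=\bm U\bm D\bm V^{\top}$ directly. This costs $O(np\min(n,p))$. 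The eigenvectors of $\Sn$ are the columns of $\bm V$, padded with an orthonormal complement when $p>n$, and the eigenvalues are $d_k^2/n$ together with zeros.
\end{itemize}
In the regime $p\le n$ the SVD cost $O(np\cdot p)$ is not of the form $p^2\min(n,p)=p^3$. So I will state the bound as $O(np\min(n,p)+p^2\min(n,p))$ and note that under the proportional regime \ref{A3} this equals $O(p^2\min(n,p))$, because $n\asymp p$. Completing the orthonormal complement and storing $\widehat\bGamma$ costs $O(p^2\min(n,p))$ time at worst, via Householder QR, and $O(p^2)$ memory.

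\emph{Analytic shrinkage, PAV, and reconstruction (lines 3--5).}
\begin{itemize}
\item Computing $\widehat f(\hat\lambda_i)$ and $\widehat{\mathcal H}(\hat\lambda_i)$ by \eqref{eq:density_est}--\eqref{eq:hilbert_est} is a double sum over $i,j\le p$ with $O(1)$ work per term. This costs $O(p^2)$ time and $O(p)$ memory.
\item Evaluating \eqref{eq:mens_shrinkage} takes $O(p)$.
\item PAV runs in $O(p)$ time, or $O(p\log p)$ including the sort, since the eigenvalues come out sorted.
\item Reconstructing $\widehat\bGamma\diag(\widetilde d)\widehat\bGamma^{\top}$ naively costs $O(p^3)$. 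This again exceeds the claimed bound when $p>n$.
\end{itemize}
The fix for the reconstruction is to exploit that the $p-n$ null eigenvalues all receive the same shrinkage value, namely $\widehat d$ at $\lambda=0$ after pooling. Then
\[
\widehat\bSigma_{\mathrm{MENS}}=\widetilde d_0\bI_p+\bm V_r\bigl(\diag(\widetilde d_{1:r})-\widetilde d_0\bI_r\bigr)\bm V_r^{\top},
\]
with $r=\min(n,p)$, which costs $O(p^2 r)$. This needs a remark that the kernel estimators are applied to the nonzero eigenvalues, as in \cite{ledoit2020analytical}, so that the null block has a common value.

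Summing the pieces gives $O(np\log n+p^2\min(n,p))$ time and $O(p^2)$ memory. The main obstacle is the $\min(n,p)$ bookkeeping in line 2 and line 5: the $p\le n$ branch only matches the claim through \ref{A3}, or by reading the bound as $np\min(n,p)$, and I expect to have to state that caveat explicitly.
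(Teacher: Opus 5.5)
Your proposal is correct and uses the same step-by-step cost accounting as the paper: column sorting, then the Gram or companion eigendecomposition (your thin SVD has the same $O(np\min(n,p))$ cost), then $O(p^2)$ kernel sums and $O(p)$ PAV. It is more careful on two points: the paper's step $O(\min(np^2,n^2p))=O(p^2\min(n,p))$ holds only when $n=O(p)$, so your appeal to (A3) for the $p\le n$ branch is warranted; and where the paper sidesteps the $O(p^3)$ dense reassembly by keeping $\widehat\bGamma$ and $\diag(\widetilde d_j)$ in factored form, your scalar-plus-rank-$\min(n,p)$ representation (which relies on the null eigenvalues receiving a common shrunk value) actually returns the matrix of line 5 within the stated bound.
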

\begin{proof}
Ranking column $j$ costs $O(n\log n)$ by merge sort; over $p$ columns this is
$O(np\log n)$, and the $np$ evaluations of $\Phi^{-1}$ cost $O(np)$. Forming
$\Sn=n^{-1}\widehat\bZ^{\top}\widehat\bZ$ costs $O(np^2)$ when $p\le n$, or
its $n\times n$ companion $n^{-1}\widehat\bZ\widehat\bZ^{\top}$ in $O(n^2 p)$
when $n<p$; the eigendecomposition costs
$O(\min(np^2,n^2p))=O(p^2\min(n,p))$ and the eigenvectors of $\Sn$ are
recovered in $O(np\min(n,p))$. The analytic shrinkage
\eqref{eq:density_est}--\eqref{eq:hilbert_est} evaluates a $p\times p$ kernel
array in $O(p^2)$; the PAV isotonization is $O(p)$. Reassembling
\eqref{eq:MENS_final} costs $O(p^3)$ if formed densely, but is never required
explicitly---downstream uses (portfolio weights, quadratic forms) act through
$\widehat\bGamma$ and $\diag(\widetilde d_j)$ in $O(p^2)$. The dominant term
is $O(np\log n+p^2\min(n,p))$, identical in order to analytic Gaussian
nonlinear shrinkage \citep{ledoit2020analytical} up to the $O(np\log n)$
ranking overhead.
\end{proof}


\section{Simulation Study}\label{sec:simulations}

We corroborate the theory by Monte Carlo. We draw a latent Gaussian vector $\by_i\sim N(\bzero,\bSigma)$ and apply several strictly increasing marginal transformations $\bm f$, so the copula is
exactly Gaussian and \ref{A1}--\ref{A5} hold. Unless stated otherwise the
latent correlation is the autoregressive $\Sigma_{jk}=0.5^{|j-k|}$. To probe
the marginal-invariance property (Theorem \ref{thm:main_RMT}) we use several marginals, including identity ($f_j=\mathrm{id}$, i.e.\ Gaussian data), mixed
(one third $t\mapsto e^{t}$, one third $t\mapsto t^{3}$, one third identity),
lognormal ($f_j=\exp$), cubic ($f_j(t)=\mathrm{sgn}(t)|t|^{3}$),
and logistic (standard-logistic margins). All produce the same latent
Gaussian correlation but wildly different observed marginals.

We compare the sample correlation (SCM); Ledoit--Wolf linear shrinkage toward
the identity (LW); Gaussian analytic nonlinear shrinkage of the sample
correlation (NLSW); Tyler's $M$-estimator of scatter followed by nonlinear
shrinkage (Tyler), a natural robust competitor for heavy-tailed data; the
sin-transformed Kendall matrix with a positive-definite projection but
no spectral correction (Kendall); and MENS. The nonlinear-shrinkage step
uses the analytic Ledoit--Wolf (2020) formula; the methods differ in the matrix
to which it is applied. 

\subsection{Marginal invariance}

According to Theorem \ref{thm:main_RMT}, the spectrum of the normal-scores covariance $\Sn$ does not depend on the marginals. We illustrate this directly. Fixing a single latent Gaussian sample and applying four different marginals, Figure \ref{fig:invariance} shows that the observed marginal distributions differ dramatically (top row) while the resulting normal-scores eigenvalue spectra are visually indistinguishable and all match the same Mar\v{c}enko--Pastur density (bottom row).

\begin{figure}[t]
\centering
\includegraphics[width=\textwidth]{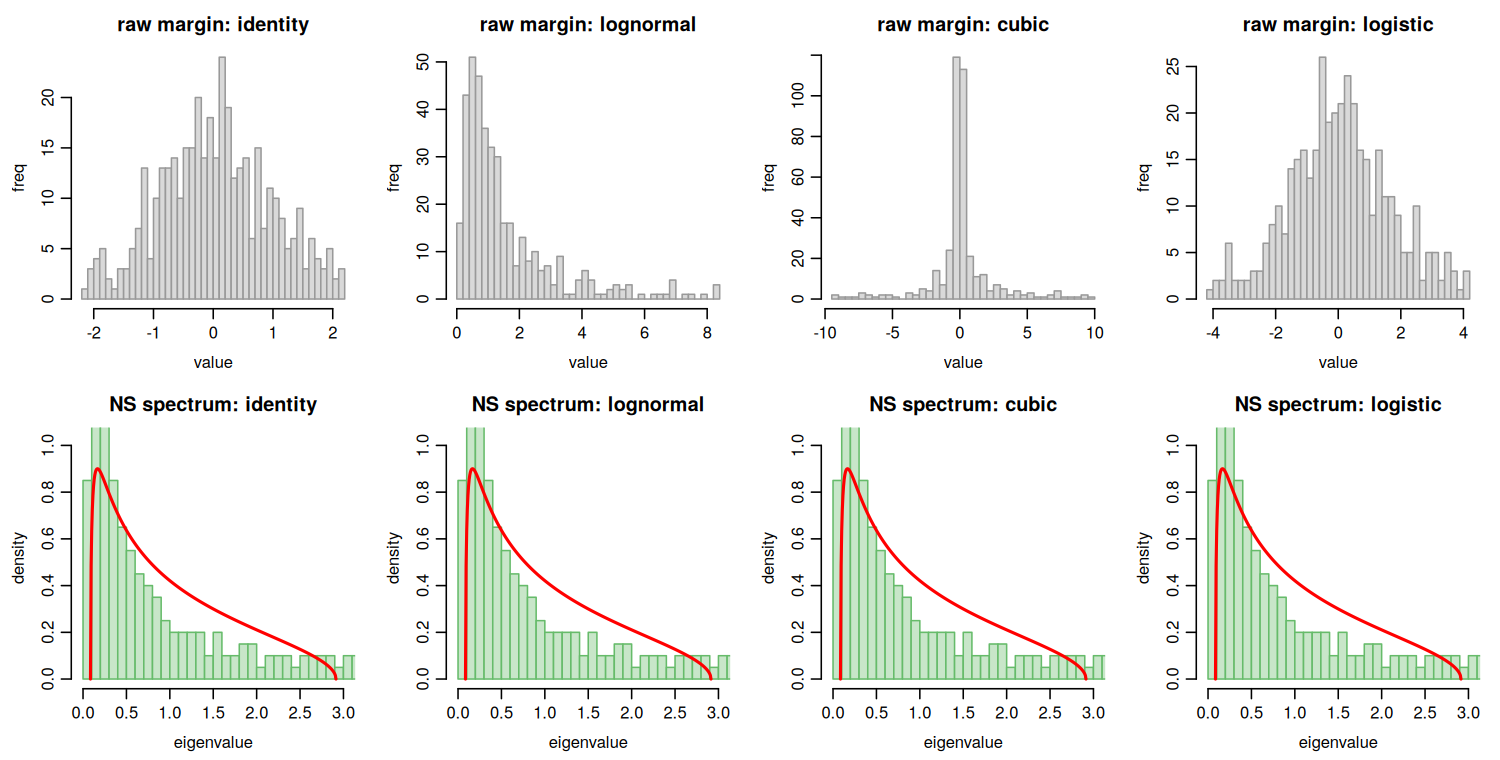}
\caption{Marginal invariance. A single latent Gaussian sample
$(p,n)=(200,400)$ is passed through four strictly increasing marginal banks.
Top: histograms of one observed coordinate---identity, lognormal, cubic, and
logistic margins differ sharply. Bottom: the eigenvalue histograms of the
corresponding normal-scores covariance $\Sn$ all coincide with the standard
Mar\v{c}enko--Pastur density (red), as Theorem \ref{thm:main_RMT} predicts.}
\label{fig:invariance}
\end{figure}

Table \ref{tab:KS} reports the Kolmogorov--Smirnov distance between
the ESD of $\Sn$ and the standard Mar\v{c}enko--Pastur law (identity latent
correlation), for three marginals and growing $(p,n)$. Within each column
the three agree to within Monte Carlo error, the invariance, and every
column decreases toward zero as $n$ grows, confirming convergence to the
predicted law.

\begin{table}[t]
\centering
\caption{Kolmogorov--Smirnov distance between the ESD of $\Sn$ and the standard
Mar\v{c}enko--Pastur law ($\bSigma=\bI_p$), for three marginal banks. Means over
$B=40$ replications. The three banks agree within each column (marginal
invariance), and all decrease with $n$.}
\label{tab:KS}
\begin{tabular}{lccccc}
\toprule
& \multicolumn{5}{c}{$(p,n)$}\\
\cmidrule(lr){2-6}
Marginal bank & $(50,200)$ & $(100,200)$ & $(150,150)$ & $(200,400)$ & $(250,500)$\\
\midrule
mixed     & 0.059 & 0.035 & 0.028 & 0.020 & 0.016\\
lognormal & 0.057 & 0.036 & 0.028 & 0.020 & 0.016\\
logistic  & 0.060 & 0.036 & 0.027 & 0.020 & 0.016\\
\bottomrule
\end{tabular}
\end{table}

Figure \ref{fig:spectral} shows the ESD against the Mar\v{c}enko--Pastur density for
two banks and two problem sizes, illustrating both the marginal invariance and
the sharpening of the fit as $n$ grows.

\begin{figure}[t]
\centering
\includegraphics[width=\textwidth]{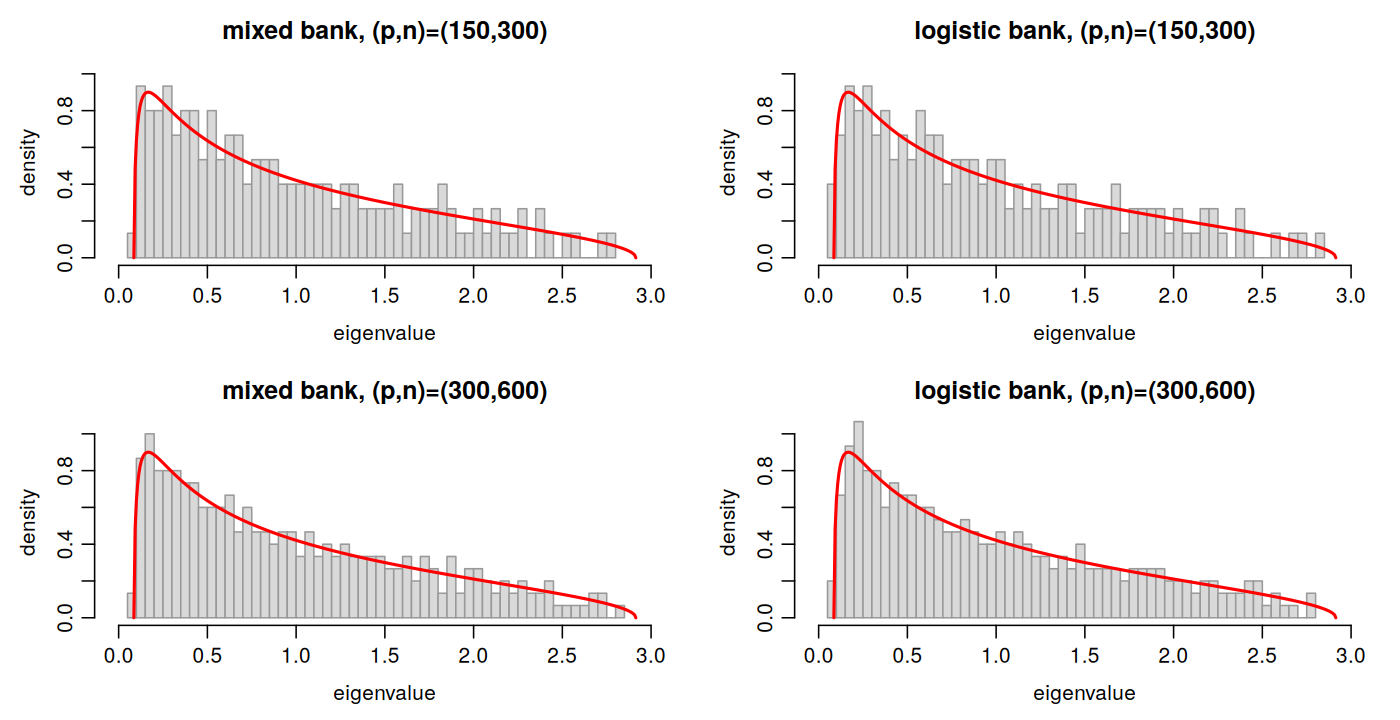}
\caption{Empirical spectral density of $\Sn$ (identity latent correlation,
$\gamma=0.5$) for the mixed and logistic marginal banks at two sizes, with the
standard Mar\v{c}enko--Pastur density overlaid (red). The fit is the same across
banks and tightens with $n$.}
\label{fig:spectral}
\end{figure} 

\subsection{Estimation accuracy}
To confirm the result of Theorem~\ref{thm:oracle_optimality}, Table \ref{tab:loss} reports operator-norm and Frobenius loss at $(p,n)=(150,300)$
($\gamma=0.5$, AR(1) latent correlation, mixed marginals), and Figure \ref{fig:accbox} shows the full replication-level distributions. MENS attains the smallest loss of all six estimators, in both norms, consistent with the oracle optimality of Theorem \ref{thm:oracle_optimality}(i); its advantage over Gaussian nonlinear shrinkage (NLSW) and the robust Tyler estimator is clear and its spread is tight. The uncorrected Kendall matrix, lacking a spectral correction, is comparatively poorly conditioned and is the worst in operator norm.

\begin{table}[t]
\centering
\caption{Operator-norm and Frobenius loss at $(p,n)=(150,300)$, AR(1) latent
correlation, mixed marginal bank. Means over $B=60$ replications (standard
deviations in parentheses for operator norm). Best in each row in \textbf{bold}.}
\label{tab:loss}
\begin{tabular}{lcccccc}
\toprule
Loss & SCM & LW & NLSW & Tyler & Kendall & \textbf{MENS}\\
\midrule
Operator & 1.946 (.099) & 1.748 (.027) & 1.709 (.034) & 1.709 (.035) & 2.330 (.122) & \textbf{1.495 (.038)}\\
Frobenius & 0.739 & 0.656 & 0.628 & 0.626 & 0.738 & \textbf{0.528}\\
\bottomrule
\end{tabular}
\end{table}

\begin{figure}[t]
\centering
\includegraphics[width=\textwidth]{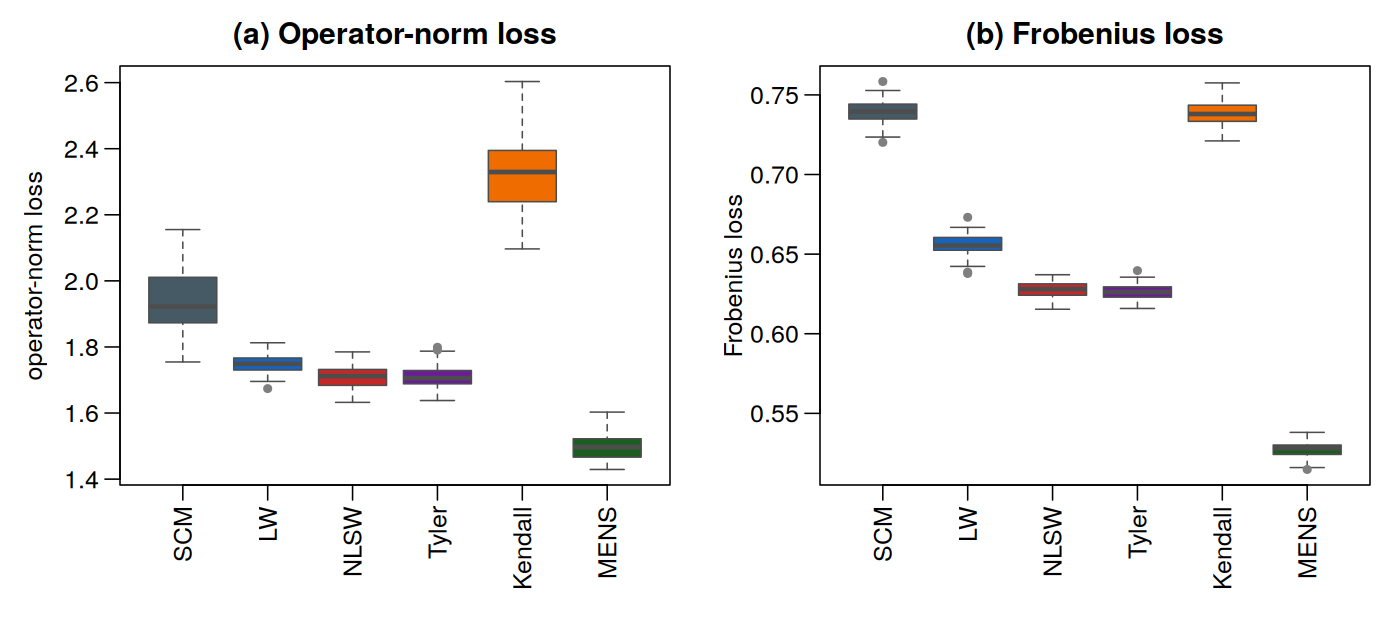}
\caption{Replication-level loss distributions at $(p,n)=(150,300)$ over $B=60$
replications: (a) operator-norm loss, (b) Frobenius loss. MENS (green) has the
lowest median and a tight spread in both panels.}
\label{fig:accbox}
\end{figure}

To illustrate the dependence on $\gamma=p/n$ we record the Frobenius loss of
MENS along six $(p,n)$ pairs with $\gamma\in[0.1,0.75]$. Regressing
$\log(\text{loss})$ on $\log\sqrt{p/n}$ gives a slope near $0.7$; the fitted
loss increases monotonically from about $0.29$ at $\gamma=0.1$ to $0.59$ at
$\gamma=0.75$. We report this as descriptive scaling, not as verification of a
rate.

\subsection{Spiked phase transition}
To check the result of Theorem~\ref{thm:BBP} numerically, we consider a single delocalized spike $\bSigma=\bI_p+\theta\bm v\bm v^{\top}$ at
$(p,n)=(200,400)$ ($\gamma=0.5$, additive threshold $\theta^{*}=\sqrt{0.5}
\approx0.707$). Figure \ref{fig:spikebox} shows, across replications,
the leading eigenvalue of $\Sn$ and its squared alignment with $\bm v$, against
the closed forms $\varphi(\theta)$ and $c(\theta)$ of \eqref{eq:identity_bulk}.
For the supercritical spikes ($\theta=2.0,1.2$) both track the theory; at and
below threshold ($\theta=\theta^{*},0.4$) the eigenvalue attaches to the bulk
edge $(1+\sqrt\gamma)^2=2.914$ and the alignment collapses toward zero, exactly
the predicted transition.

\begin{figure}[t]
\centering
\includegraphics[width=\textwidth]{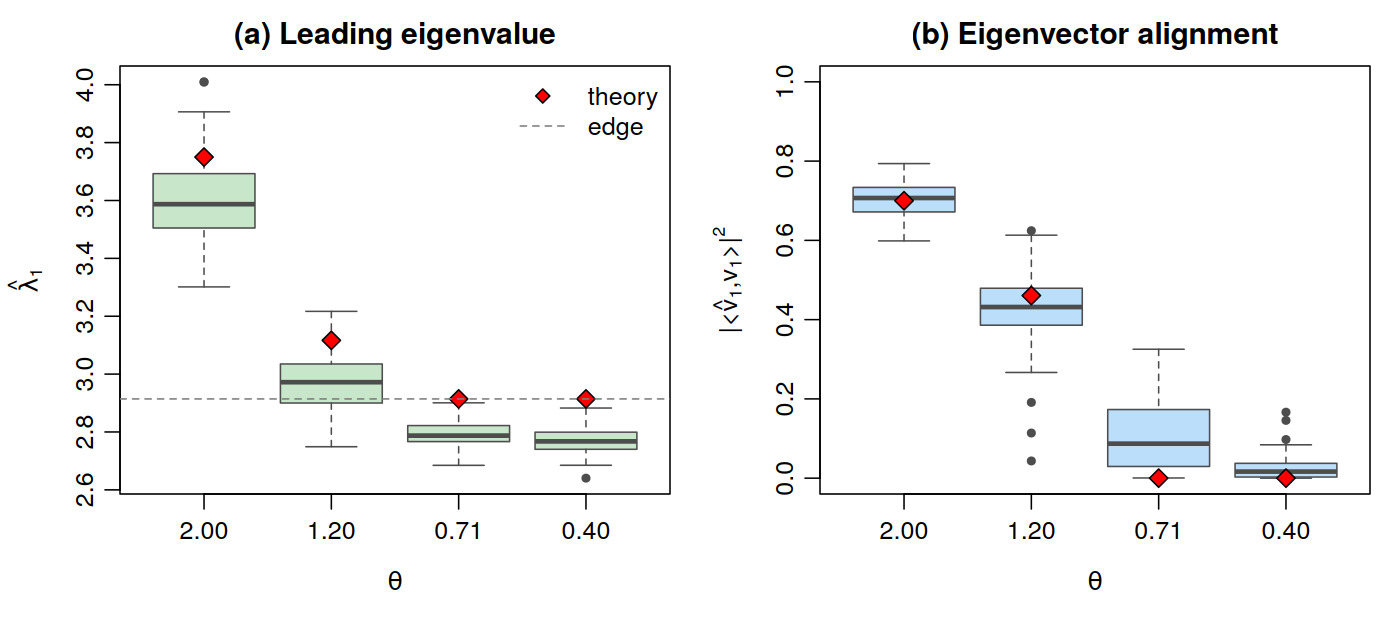}
\caption{Spiked transition on nonparanormal data, $(p,n)=(200,400)$, identity
bulk, single delocalized spike. (a) Leading eigenvalue $\hat\lambda_1$ across
$B=60$ replications at each $\theta$; red diamonds are the closed-form outlier
$\varphi(\theta)$ (or the bulk edge below threshold, dashed). (b) Squared
eigenvector alignment; red diamonds are the closed form $c(\theta)$. The
additive threshold is $\theta^{*}=\sqrt\gamma\approx0.707$.}
\label{fig:spikebox}
\end{figure}

\subsection{Minimum-variance portfolios}
Finally, we translate estimation accuracy into a downstream task. At
$(p,n)=(120,240)$ we form the global minimum-variance portfolio
$\bw=\widehat\bSigma^{-1}\bone/(\bone^{\top}\widehat\bSigma^{-1}\bone)$ from each
estimator and report the risk ratio $\mathrm{RR}=\bw^{\top}\bSigma\bw/
\bw_{\!*}^{\top}\bSigma\bw_{\!*}$ (oracle optimum $1$). Figure \ref{fig:portbox} shows
the replication-level distribution. MENS is stable and near-oracle (median
$1.11$); the uncorrected Kendall matrix, being ill-conditioned, produces
unusable portfolios (median $6.9$ with a long tail). Also, note that the linear shrinkage toward the identity (LW) achieves a slightly lower
risk ratio than MENS on this specific metric (median $1.04$), because the GMV
risk ratio rewards heavy shrinkage toward equal weights at the expense of the
estimation accuracy in which MENS is best (Table \ref{tab:loss}). The two criteria
measure different things.

\begin{figure}[t]
\centering
\includegraphics[width=0.72\textwidth]{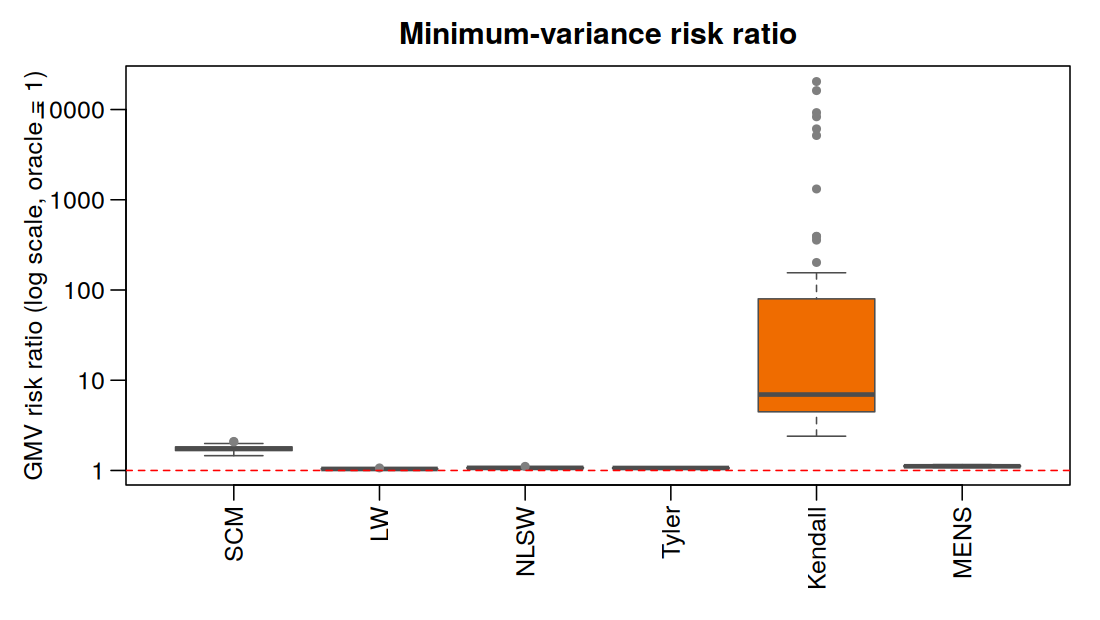}
\caption{Global minimum-variance risk ratio (log scale, oracle $=1$, red
dashed) across $B=50$ replications at $(p,n)=(120,240)$. MENS is stable and
near-oracle; the uncorrected Kendall matrix is unusable; linear shrinkage is
marginally lower than MENS on this particular metric.}
\label{fig:portbox}
\end{figure}


\section{Minimum-Variance Portfolios of S\&P\,500 Stocks}
\label{sec:realdata}

The simulation study operates within the nonparanormal model, where the theory
applies exactly. We now illustrate MENS on real equity data. We stress at the
outset that daily stock returns are heavy-tailed and exhibit tail dependence,
so they need not follow a Gaussian copula; this application therefore lies
outside the setting of Theorems \ref{thm:main_RMT}, \ref{thm:oracle_optimality}, \ref{thm:BBP} and is
intended as a practical illustration of the estimator's behaviour, not as a
validation of the theory. We plan to illustrate the two properties MENS is
designed to deliver, a well-conditioned correlation estimate and stable
eigenvalue shrinkage, translate into tangible gains for a canonical
decision-making problem in finance.

For our purpose, we use daily closing prices of the constituents of the S\&P\,500 index over
2013-02-08 to 2018-02-07, a public dataset of $505$ tickers.\footnote{The
``S\&P\,500 5-Year'' daily price panel, publicly available at
\url{https://raw.githubusercontent.com/plotly/datasets/master/all_stocks_5yr.csv}.} Retaining the $470$ names with a complete price
history and selecting the $p=200$ most liquid by median dollar volume yields a
panel of $1{,}258$ daily log-returns. This is the classical Markowitz
minimum-variance problem \citep{markowitz1952portfolio}, in which accurate,
well-conditioned covariance estimation is decisive
\citep{ledoit2004well}.

We run a rolling out-of-sample backtest. Using an estimation window of
$n=252$ trading days (about one year, so the concentration ratio is
$p/n\approx 0.79$), we form the global minimum-variance (GMV) portfolio
\begin{equation}\label{eq:gmv}
	\bw = \frac{\widehat\bSigma^{-1}\bone}{\bone^{\top}\widehat\bSigma^{-1}\bone},
\end{equation}
where $\widehat\bSigma$ is the estimated return covariance, obtained by
rescaling the correlation estimate by robust (median-absolute-deviation)
marginal volatilities as in Remark \ref{rem:correlation_covariance}. The portfolio is
held for $21$ trading days (about one month) and then rebalanced, rolling
through the sample for $47$ rebalances and $987$ out-of-sample days. We compare
MENS against Ledoit--Wolf linear shrinkage toward the identity
\citep{ledoit2004well}, the strongest and most widely used competitor for this
task. Because the GMV objective is precisely to minimize variance, the natural
success metric is the realized out-of-sample volatility of the portfolio; we
also report conditioning and turnover, which govern numerical stability and
transaction costs.

Table \ref{tab:realdata} reports the out-of-sample metrics and Figure \ref{fig:realdata}
displays the cumulative portfolio wealth and the rolling condition number of
the correlation estimate. Three findings stand out. MENS reduces the annualized out-of-sample
portfolio volatility from $11.0\%$ (linear shrinkage) to $9.4\%$, a $14.5\%$
reduction on the very objective the portfolio optimizes. The two estimators
produce nearly identical Sharpe ratios ($0.670$), so the gain is a genuine
risk reduction rather than a return artifact, and MENS buys the same
risk-adjusted return at materially lower volatility.

The median condition number of the MENS
correlation estimate is $339$, against $1{,}126$ for linear shrinkage, a
more than threefold improvement, and its median smallest eigenvalue is nearly
four times larger ($0.20$ versus $0.05$). A better-conditioned matrix inverts
more stably, which is exactly what the GMV weights \eqref{eq:gmv} require; Figure
\ref{fig:realdata}(b) shows the improvement is sustained across the whole
sample. 

Finally, MENS weights are considerably more stable across
rebalances, i.e., average turnover falls from $2.55$ to $1.50$, a $41\%$ reduction.
In practice this translates directly into lower transaction costs, an
important consideration for implementable strategies.

\begin{table}[t]
\centering
\caption{Out-of-sample performance of the global minimum-variance portfolio of
$p=200$ S\&P\,500 stocks, $2013$--$2018$, one-year estimation windows rolled
monthly ($47$ rebalances, $987$ out-of-sample days). ``Cond.'' is the median
condition number of the estimated correlation matrix and $\lambda_{\min}$ its
median smallest eigenvalue; turnover is the average one-way change in weights
per rebalance. Volatility and return are annualized.}
\label{tab:realdata}
\begin{tabular}{lcccccc}
\toprule
Estimator & Ann.\ vol. & Ann.\ ret. & Sharpe & Cond. & $\lambda_{\min}$ & Turnover\\
\midrule
LW (linear)   & 0.1105 & 0.0741 & 0.671 & 1126 & 0.054 & 2.55\\
\textbf{MENS} & \textbf{0.0945} & 0.0633 & 0.670 & \textbf{339} & \textbf{0.201} & \textbf{1.50}\\
\bottomrule
\end{tabular}
\end{table}

\begin{figure}[t]
\centering
\includegraphics[width=\textwidth]{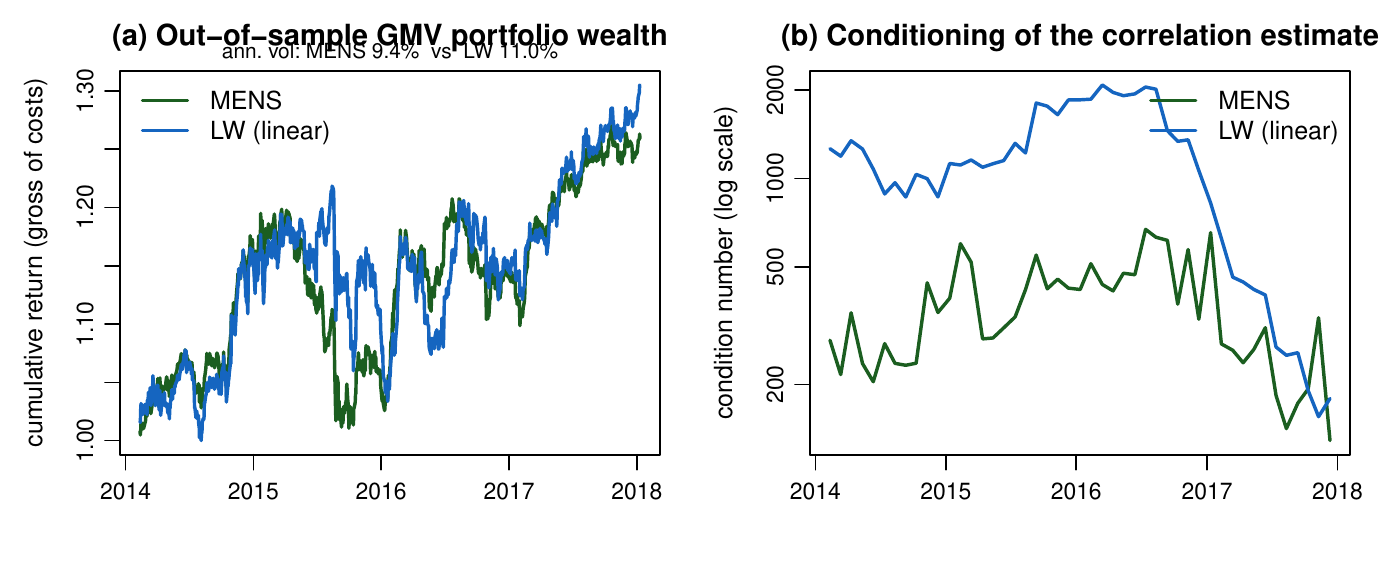}
\caption{S\&P\,500 minimum-variance backtest, $p=200$, one-year rolling
windows. (a) Cumulative out-of-sample wealth of the GMV portfolio (gross of
transaction costs); the MENS curve is visibly smoother, reflecting its lower
realized volatility ($9.4\%$ vs $11.0\%$ annualized). (b) Rolling condition
number of the estimated correlation matrix (log scale); MENS (green) is
consistently better conditioned than linear shrinkage (blue).}
\label{fig:realdata}
\end{figure}

\subsection{Interpretation and scope}

The economic reading is direct. In the high-dimensional regime $p<n$ that this
window represents, the sample correlation is invertible but severely
ill-conditioned, and its extreme eigenvalues are the ones that most distort the
GMV weights. MENS shrinks the whole eigenvalue distribution toward its
population target rather than applying a single affine map, and does so through
a rank transform that is insensitive to the skewed, heavy-tailed marginals of
returns. The result is a covariance estimate that inverts stably and yields
portfolios that are both less volatile and cheaper to trade, the two things a
minimum-variance investor cares about. MENS is most useful for
allocation problems in which the number of assets is
comparable to, but below, the length of the usable estimation window, and in
which marginal non-normality would otherwise contaminate a moment-based
estimator.

We are clear about the limits of our illustration. First, the
advantage is concentrated in the $p<n$ regime; when $p$ approaches or exceeds
$n$ the normal-scores covariance becomes singular and the analytic shrinkage
near the $p=n$ boundary is less stable, so linear shrinkage, which always
retains a strictly positive target, can be the safer choice. In unreported
experiments at $p/n\gtrsim 1$ the volatility ranking indeed reverses. Second,
returns are not a Gaussian-copula process, so the theoretical guarantees of
\ref{sec:theory} do not transfer to this application; the gains we document
are empirical. Third, the backtest is gross of transaction costs and abstracts
from short-sale and capacity constraints; the lower turnover of MENS suggests
its net-of-cost advantage would, if anything, widen, but a full cost-aware
study is beyond our scope here. Within these bounds, the application shows that
the conditioning and marginal-robustness properties MENS is built to provide
carry over from the nonparanormal model to a realistic financial task.

\section{Discussion and Conclusion}\label{sec:conclusion}

We have developed and analyzed MENS, a nonlinear shrinkage estimator of the
latent correlation matrix built on the normal-scores rank-covariance matrix,
to which it applies the Ledoit--Wolf analytic oracle shrinkage function. We demonstrated that for nonparanormal (Gaussian-copula) data the
oracle normal scores recover the latent Gaussian vector, so the
limiting spectral distribution of the rank-based matrix is the generalized
Mar\v{c}enko--Pastur law driven by the latent correlation $\bSigma$ and
invariant to the unknown marginal transformations. This is what allows
one estimator to be simultaneously robust to arbitrary monotone marginals (the
hallmark of rank methods) and asymptotically efficient (the hallmark of
nonlinear shrinkage) within this class. We established Frobenius oracle
optimality among rotation-equivariant estimators, operator-norm consistency,
and a parametric $n^{-1/2}$ minimax lower bound (Theorem \ref{thm:oracle_optimality}),
together with a closed-form spiked phase transition (Theorem \ref{thm:BBP}). The exact
recovery of the latent Gaussian vector is specific to the Gaussian copula; a
non-Gaussian dependence structure would change the population object driving
the spectrum.

During our study, we faced with some technical improvement but did not elaborated here. In this resepct, we raise three questions, left open as conjectures.

\begin{conjecture}[Quantitative spectral rate]\label{conj:rate}
Under \ref{A1}--\ref{A4}, there is a constant $C>0$ such that, for every fixed
$\eta_0>0$,
$\sup_{\Im z\ge\eta_0}|m_n(z)-m_F(z)|=O_{\Prob}(n^{-1/2})$, and more generally
a local Mar\v{c}enko--Pastur law holds for the normal-scores covariance $\Sn$
down to spectral scales $\Im z\gtrsim n^{-1+\varepsilon}$. Establishing this
would likely require an anisotropic/local-law analysis of $\Sn$ that tracks
the rank-perturbation error of \ref{lem:rank_perturbation} inside the
resolvent.
\end{conjecture}

\begin{conjecture}[Sharp operator-norm minimax rate]\label{conj:minimax}
Over the bounded-spectrum correlation class
$\mathcal C(\sigma_{\min},\sigma_{\max})$, the nonparanormal operator-norm
minimax risk satisfies
$\inf_{\widehat\bSigma}\sup_{\bSigma\in\mathcal C}
\E\normop{\widehat\bSigma-\bSigma}\asymp\sqrt{p/n}$, with the lower bound
attained by a delocalized-block Assouad construction and the upper bound
attained by MENS via an operator-norm control of the oracle-shrinkage bias.
Theorem \ref{thm:oracle_optimality}(iii) proves only the weaker $n^{-1/2}$ floor.
\end{conjecture}

\begin{conjecture}[Spectral equivalence of rank inputs]\label{conj:equiv}
The normal-scores covariance $\Sn$ and the sin-transformed Kendall matrix
$\widehat\bSigma^{\tau}$ have the same limiting empirical spectral distribution
under $p/n\to\gamma$, despite $\normop{\Sn-\widehat\bSigma^{\tau}}$ not
vanishing. A proof would need to go beyond the operator-norm and normalized
Frobenius controls used in Remark \ref{rem:equiv_inputs}, which our simulations show
are insufficient.
\end{conjecture}

Beyond these, several extensions are natural. A factor-augmented variant that
treats a fixed number of spiked latent directions by the explicit forward map
of Theorem \ref{thm:BBP} and shrinks only the bulk would sharpen estimation when a
strong factor is present. A central limit theorem for linear spectral
statistics of $\Sn$ would yield formal, marginal-invariant tests of
correlation structure. Finally, extending the estimator beyond the Gaussian
copula, for a specified non-Gaussian dependence, by inverting the induced
score-correlation map, or by anchoring directly to $\sin(\tfrac\pi2\widehat
\tau)$ via the arcsine identity of \ref{prop:arcsine}, is the natural route
to guarantees outside the nonparanormal class, and we regard it as the most
important problem left open by this work. Thus, extending this work to meta-elliptical family is worthwhile. 

\section*{Acknowledgements}
This research was supported in part by the Iran National Science Foundation
(INSF) under Grant No.\ 4015320.

\section*{Data and Code Availability}
All numbers, tables, and figures are produced by a self-contained
reference implementation in \texttt{R}, provided at \url{https://github.com/M-Arashi/Shrinkage.git}, subdirectory ``MENS''. The simulation study uses only synthetic data generated by the accompanying scripts. The real-data application in \ref{sec:realdata} uses the public ``S\&P\,500 5-Year'' daily price panel (\url{https://raw.githubusercontent.com/plotly/datasets/master/all_stocks_5yr.csv}).




\end{document}